\documentclass[11pt]{article}

\usepackage[normalem]{ulem}

\RequirePackage[OT1]{fontenc}
\RequirePackage{amsthm,amsmath}

\usepackage[square,numbers]{natbib}
\bibliographystyle{chicago}
\usepackage{bm}

\RequirePackage[colorlinks,citecolor=blue,urlcolor=blue]{hyperref}
\usepackage[dvipsnames]{xcolor}
\usepackage{verbatim}
\usepackage{xcolor}
\usepackage{fullpage}
\usepackage{subfig}
\usepackage{layout}
\usepackage{dsfont}
\usepackage[mathscr]{eucal}
\usepackage[toc,page]{appendix}
\usepackage{mathrsfs}
\usepackage{color}
\usepackage{pifont}
\usepackage{bm}
\usepackage{latexsym}
\usepackage{amsfonts}
\usepackage{amssymb}
\usepackage{epsfig}
\usepackage{graphicx}
\usepackage{multirow}
\usepackage{hyperref}
\usepackage{enumitem}
\usepackage{amsmath}
\usepackage{amsthm}
\DeclareMathOperator*{\argmin}{arg\,min} 
\usepackage{units}
\usepackage{xfrac}  
\usepackage{breqn}
\usepackage{dsfont}
\usepackage{graphicx}

\usepackage{algorithm}
\usepackage{algpseudocode}

\newcommand {\tr}{\mathrm{tr}}
\newcommand {\R}{\mathbb{R}}
\newcommand{\PP}{\mathbb{P}}
\newcommand{\E}{\mathbb{E}}
\renewcommand {\H}{\mathbb{H}}
\newcommand {\1}{\textrm{\textbf{1}}}
\newcommand {\Id}{I}
\newcommand {\F}{\mathcal{F}}
\newcommand {\G}{\mathcal{G}}

\newcommand {\N}{\mathcal{N}}

\newcommand {\V}{\mathcal{V}}
\newcommand {\U}{\mathcal{U}}
\newcommand {\A}{\mathcal{A}}
\newcommand {\B}{\mathcal{B}}
\newcommand {\T}{\mathcal{T}}
\newcommand {\M}{\mathcal{M}}
\renewcommand {\P}{\mathcal{P}}
\renewcommand {\L}{\mathcal{L}}
\newcommand {\FF}{\mathscr{F}}
\newcommand {\BB}{\mathscr{B}}
\newcommand {\KK}{\mathscr{K}}
\newcommand {\RR}{\mathscr{R}}

\newcommand {\TT}{\mathscr{T}}
\newcommand {\WW}{\mathscr{W}}

\newcommand {\CC}{\mathscr{C}}
\newcommand{\range}{\mathrm{range}}
\newcommand{\Frechet}{Fréchet }
\newcommand{\Tan}{\mathscr{T}}
\newcommand{\der}{d}
\newcommand{\bJ}{\boldsymbol{\mathfrak{J}}}
\newcommand{\J}{\mathfrak{J}}

\usepackage{enumitem, hyperref}
\makeatletter
\def\namedlabel#1#2{\begingroup
  #2%
  \def\@currentlabel{#2}%
  \phantomsection\label{#1}\endgroup
}
\makeatother

\allowdisplaybreaks

\newtheorem{theorem}{Theorem}
\newtheorem{proposition}{Proposition}

\newtheorem{lemma}{Lemma}

\newtheorem{remark}{Remark}

\theoremstyle{plain}

\usepackage{hyperref}
\hypersetup{ colorlinks = true, linkcolor  = blue}

\numberwithin{equation}{section}
  \usepackage[affil-it]{authblk}

\makeatletter
\newcommand{\subjclass}[2][1991]{%
  \let\@oldtitle\@title%
  \gdef\@title{\@oldtitle\footnotetext{#1 \emph{Mathematics subject classification.} #2}}%
}
\newcommand{\keywords}[1]{%
  \let\@@oldtitle\@title%
  \gdef\@title{\@@oldtitle\footnotetext{\emph{Key words and phrases.} #1.}}%
}
\makeatother
\usepackage{abstract}

 \title{{Statistical Inference for Bures-Wasserstein Flows\footnote{Research supported by a Swiss National Science Foundation (SNF) grant.}}}

\author{
  Leonardo V.\ Santoro
  \qquad 
  Victor M. Panaretos   \\ {\footnotesize{\texttt{leonardo.santoro@epfl.ch}  \qquad\,\,\texttt{victor.panaretos@epfl.ch}}}}
  
  \affil{Institut de Math\'ematiques\\École Polytechnique Fédérale de Lausanne}

\begin{document}

\maketitle

\begin{abstract}
 We develop a statistical framework for conducting inference on collections of  time-varying covariance operators (covariance flows) over a general, possibly infinite dimensional, Hilbert space. We model the intrinsically non-linear structure of covariances by means of the Bures-Wasserstein metric geometry. We make use of the Riemmanian-like structure induced by this metric to define a notion of mean and covariance of a random flow, and develop an associated Karhunen-Lo\`eve expansion. We then treat the problem of estimation and construction of functional principal components from a finite collection of covariance flows, observed fully or irregularly.
 Our theoretical results are motivated by modern problems in functional data analysis, where one observes operator-valued random processes  -- for instance when analysing dynamic functional connectivity and fMRI data, or when analysing multiple functional time series in the frequency domain. {Nevertheless, our framework is also novel in the finite-dimensions (matrix case), and we demonstrate what simplifications can be afforded then}. We illustrate our methodology by means of simulations and data analyses.
  
\medskip
    \noindent \textbf{MSC2020 classes:} 62R10, 62R20, 62R30, 62G05, 60G57 \\
    \textbf{Key words:} Functional Data Analysis, Bures-Wasserstein space, Karhunen-Loève expansion

\end{abstract}

 \begin{footnotesize}
 \tableofcontents
 \end{footnotesize}

\section{Introduction}

\subsubsection*{Background}

 Covariance operators are crucial in  functional data analysis, encoding the second-order variation of a stochastic process, and providing the canonical means to analyse such variation by way of the Karhunen-Lo\`eve expansion. Further to their core operational role, covariances are increasingly becoming the main focus of analysis in modern applications involving more complex functional structures. 
Modern data collection techniques are now demanding of methodology going beyond finite collections of covariance operators, namely for circumstances where one has a dynamic evolution of a finite collection of covariances, i.e.\ \emph{covariance flows}: collections of time-evolving covariance operators that are expected to possesses some level of regularity with respect to the time parameter. These can be seen as functional data valued in the space of covariance operators. For example, neuroscientists study dynamic functional connectivity, where one records longitudinal samples of time-varying covariances at increasingly high resolutions \cite{dai2019analyzing}, whereas collections of multiple functional time series will generate multiple spectral density operators  \cite{panaretos2013fourier}.  
Furthermore, data in the form of time-varying covariance \textit{matrices} is increasingly common, and has recently attracted significant interest from the statistical community \cite{dai2019analyzing, huang2020statistical, zhang2018rate,petersen2019frechet} mostly due to a growing  interest in studying the brain functional connectome generated from functional MRI \cite{van2013wu}.

In light of these developments, the purpose of this paper is to develop a framework for conducting statistical analyses on flows of covariance operators over a general, possibly infinite dimensional, separable Hilbert space $\H$. 
The intrinsic non-linearity of covariance operators renders many techniques relying on a linear structure ineffective or inapplicable, and requires an adaptation of the tools of classical statistical analysis: see for instance \citet{schwartzman2006random}, \citet{dryden2009non} in finite dimensions or \citet{pigoli2014distances} and 
 \citet{panaretos2019statistical} in infinite dimensions. In fact, the infinite dimensional case is of particular mathematical interest, as it constitutes an example of an infinite-dimensional stratified space, with dense singularities.
 
 We consider the space $\KK=\KK_{\H}$ of self-adjoint, trace-class, and non-negative operators over $\H$ and, fully embracing the non-negative nature of these operators, we view $\KK$ as a (non-linear) metric space equipped with the Bures-Procrustes-Wasserstein metric (\citet{bhatia2019bures}, \citet{masarotto2019procrustes})
 $$\Pi(F,G)^2 =  \tr(F)+\tr(G)-2\tr(G^{\nicefrac{1}{2}}FG^{\nicefrac{1}{2}})^{\nicefrac{1}{2}}\quad F,G\in\KK.$$  This choice is emerging as canonical for the analysis of covariance operators: it enjoys an intrinsic connection to the theory of optimal transportation of (Gaussian) measures, and is interpretable via phase variation at the level of the corresponding processes. We define a space of \textit{regular} functions defined over a compact time-interval, say $[0,1]$, and valued in the metric space $(\KK,\Pi)$. We will refer to such such functions $\F :[0,1]\rightarrow \KK$, $\F:t\mapsto \F(t)$ as \textit{flows} of covariance operators. And, our purpose will be to extend the basic machinery of functional data analysis to deal with functional data of the form $\F$.

\subsubsection*{Contributions}
We develop a framework for conducting a second-order statistical analysis on collections of flows of covariance operators.
 In classical terms, we are not concerned with nonparametric regression (a single deterministic curve) as in \citet{krebs2022statistical} and \citet{chau2021intrinsic}, but with functional data analysis (multiple random curves) valued in a space that is both intrinsically non-linear and infinite-dimensional.  Past work has focussed on spaces that are either non-linear but finite dimensional (e.g. \citet{lin2019intrinsic}, \citet{dubey2020functional}) or infinite-dimensional but intrinsically flat (e.g. \citet{kim2020principal}).
Our work is the first to consider theory and methodology for covariance operator \emph{flows} at a genuinely infinite-dimensional level.

\medskip

 Our theoretical results are based on first-principles assumptions that are interpretable and verifiable at the level of the random flows themselves and/or their model: we do not make \emph{a priori} assumptions that are hard to verify (such as existence/uniqueness/separation of Fr\'echet means, metric entropy bounds, etc) but deduce such properties from the basic assumptions on the data generating mechanism.

\medskip

In more detail, in Section \ref{sec : flows} we define an $L^2$-type metric on flows in the Bures-Wasserstein space, which allows us to consider notions such as continuity/regularity of a flow,  Fr\'echet means of  collections of flows , and trace-class covariance of a collection of flows relative to their Fr\'echet mean. Doing so requires the study of the geometry of $L^2$-flows, which is related to the Bures-Wasserstein geometry. In the spirit of \citet{dai2018principal}, \citet{lin2019intrinsic} and \citet{zhou2021intrinsic}, we show how the image of the flow under a suitable log transform can be seen as an element of an abstract Hilbert space, obtained as the \textit{direct infinite sum} of the tangent spaces to the \Frechet mean flow.  This allows us, in Section \ref{sec : secondorder}, to extend classical functional tools such as Mercer's decomposition and the Karhunen-Loève expansion. We then develop the theory of estimation of the mean flow and the covariance of the flows, including their spectrum -- in effect yielding a functional principal component analysis for flows of covariances: see Section \ref{sec : estimation}. Owing to the subtleties of the infinite-dimensional transport problem, and contrary to past work, we do not have access to tools such as parallel transport. Rather, to assess convergence, we introduce a natural embedding of elements of the tangent bundle into the space of flows of square-integrable Hilbert-Schmidt operators via a canonical isomorphism. 
{Since our results are valid in general Hilbert spaces, they of course apply to the finite-dimensional (matrix) case; they are in fact novel in that case as well, and we explain what simplifications finite dimensions afford.

We consider estimation from discrete, noise corrupted and sparsely observed covariance flows in Section \ref{sec : irregular obs}, and we show in Section \ref{sec : computation} that our methods are stable with respect to discretisation, both in time and in space. We also give an explicit description of the practical implementation.

We apply our methods to functional time series in Section \ref{sec : fts}, where we make use of the spectral representation of stationary sequences to conduct a frequency domain based analysis.

We illustrate our methodology by way of simulations in Section \ref{sec : simulations} and the analysis of real data, specifically to EEG data in Section \ref{subsect : EEG} and mortality data 
in Section \ref{subsect : FTS mortality}. 

{The proofs of all statements are collected in the Appendix.}

\subsubsection*{Related Work}
The modeling and analysis of functional data valued in spaces beyond $\mathbb{R}^d$ has been a very active research field  in recent years,  and covers a wide range of objects with distinct geometrical properties. 
Arguably, almost any statistical technique in a non-linear space will rely on some form of linearization, all the way back to classical Euclidean shape theory in the 70s . The question is of course \emph{how to do so} especially in complex spaces that have both geometric and analytical structure.
 Past work has focussed on spaces that are either non-linear but finite dimensional \cite{lin2019intrinsic, dubey2020functional} or infinite-dimensional but intrinsically flat \cite{zhou2021intrinsic, kim2020principal, santoro2023karhunen}.
In more detail, \citet{dai2018principal} and \citet{lin2019intrinsic} have considered functional data taking values on a Riemannian manifold; \citet{zhou2021intrinsic} have similarly approached the analysis of functional data in the space of probability measures on $\mathbb{R}$; \citet{kim2020principal, santoro2023karhunen} have considered \textit{Hilbertian} functional data, i.e.\ functional data taking values in a general Hilbert space, extending the celebrated Karhunen-Loève theorem. 
\citet{dubey2020functional} have proposed a different perspective, and introduced the novel concept of \textit{metric covariance} to model curves in general metric space. 

Many more examples may be found present in the literature, all sharing the goal of overcoming the limits of classical FDA methodologies, and proposing new ways to account for the non-linearity or double infinite dimensionality that arises when handling functional data in more general spaces.

\smallskip

 Our work is the first to consider theory and methodology for covariance operator \emph{flows} at a genuinely infinite-dimensional level; at a high level, our approach is comparable to that of \citet{lin2019intrinsic}. However our context goes well beyond their setting: we work with an infinite-dimensional, stratified space, rather then a finite-dimensional Riemmanian manifold. Indeed, even if we were to consider flows of positive-definite covariance matrices (which constitute elements in a proper finite-dimensional Riemmanian manifold), it is not at all clear if and how the assumptions required by \cite{lin2019intrinsic} could be shown to hold in this setting. In that sense, our work is of interest even in the case of full rank covariance matrices.  In more detail: 
    \begin{enumerate}
   \item The space we work with is neither finite-dimensional nor a Riemmanian manifold. Instead we have an \textit{infinite-dimensional}, \textit{stratified} space, filled with \textit{singularities}. From a methodological stand point, this implies the following three major challenges. First, well definition of logarithm maps is non-trivial, but needs careful verification at each step, as the space is fraught with singularities. It is a delicate analytical problem. Second, we do not have local charts to do differential calculus, but need to employ the concept of \Frechet differentiability. Third,  parallel transport -- which is the instrumental tool in all of the theory developed in \cite{lin2019intrinsic,shao2022intrinsic, zhou2021intrinsic} -- is not available at all. Instead, we make use of a tool that is peculiar to our setting, i.e.\ a natural isometric embedding into a common Hilbert space, by means of which we distinctly carry out the computation and theory.

    \item 
        Our theoretical results are based on first principles assumptions that are intuitively interpretable and, most importantly, {verifiable}.  We do not make a priori assumptions that are hard to verify, such as existence/uniqueness of \Frechet means (assumption A.1 in \cite{lin2019intrinsic}) and well-separation thereof (assumption B5 in \cite{lin2019intrinsic}), or bounds on the spectrum of the Hessian of the squared distance function (assumption B.6 in \cite{lin2019intrinsic}).
        Instead we \textit{deduce} such properties from our elementary assumptions. This is in contrast with a recent trend in non-Euclidean data analysis where a very general setting is sought, not committing to a specific space, and as a result requires heavy and technical downstream theoretical assumptions to apply general M-estimation results \cite{van1996weak}); by contrast, we prefer to engage and make use of the specific structural and geometric properties of our underlying space, which we believe ultimately yields more informative theory.

    \end{enumerate}

\section{Background and Notation}

Let $(\H, \langle\cdot,\cdot\rangle)$ be an arbitrary separable Hilbert space.  The {trace} of an operator $\F:\H\to\H$ is defined by $\tr(F)= \sum _{i\geq 1}\langle F e_i,e_i\rangle $, where $\{e_i\}_{i\geq 1}$ is any Complete Orthonormal System (CONS) of $\H$.
The operator norm, trace norm and Hilbert-Schmidt norm are respectively defined as:
 $$
 \|F\|_\infty:=\sup_{\|h\|=1}\|F h\|,\quad 
 \|F\|_2:= \sqrt{\tr(F^*F)}, \quad 
 \|F\|_1:= \tr(\sqrt{F^*F}).
 $$
 In particular, any operator $F$ satisfies:
 $$
  \|F\|_\infty \leq
 \|F\|_2\leq
 \|F\|_1.
 $$
We denote by $\BB,\BB_2$ and $\BB_1$ the space of bounded, Hilbert-Schmidt and trace class linear operators on $\H$. We denote the adjoint of an operator $F$ by $F^*$, and say the operator $F$ is self adjoint if $F=F^*$. We say that $F$ is non-negative definite and write $F\succeq 0$, if it satisfies $\langle F h,h\rangle\geq 0$ for all $h\in\H$. We say that it $F$ strictly positive definite (or regular) and write $F\succ 0$ if the last inequality is strict except at $h=0$. An operator is compact if for any bounded sequence $\{h_n\}\subset  \H$, $\{F h_n\}\subset  \H$ contains a convergent subsequence. Given a compact, non-negative operator $F$, there exists a unique square root $F^{\nicefrac{1}{2}}$ operator, which satisfies $(F^{\nicefrac{1}{2}})^2=F$.
The \textit{kernel} of an operator $F$ is denoted by $\ker(F) = \{h\in\H\::\: F h= 0\}$, and its \textit{range} by $\range(F)=\{F h \::\: h \in \H\}$. 
We write $\Id$ for the identity operator on $\H$. 
  For $f,g\in\H$, the \textit{tensor product} $f\otimes g \::\: \H\rightarrow\H$ is the linear operator defined by:
$$
(f\otimes g)u = \langle g,u\rangle f,\qquad u\in\H.
$$ 

\bigskip 
We denote the space of square integrable $\H$-valued functions by
 $$L^2_{\H} = L^2( [ 0 , 1 ] ,  \H)=\left\{f:[0,1]\rightarrow\H \:\Big\vert\: \int_{0}^{1}\|f(t)\|^{2} d t<\infty\right\},
 $$  
 with inner product $\langle f, g\rangle_{L^2_{\H}}=\int_{0}^{1}\langle f(t), g(t)\rangle d t$.

\bigskip

For a time-indexed collection of covariance operators $\{F_t\}_{t\in[0,1]}$, we denote by $\F$ the corresponding flow of covariances: $\F(t) = F_t$ for $t\in[0,1]$.  This subtle distinction is similar to how one distinguishes a random process from the corresponding random element in a function space. We use capital letters to denote operators, and calligraphic capital letters to denote flows. Script capital letters are reserved for spaces of operators or operator flows.

\subsection{Optimal Transport and the Wasserstein Space}

Let $\WW_2(\H)$ denote the space of Borel probability measures on $\H$ with finite second moment. 
 For $\mu,\nu \in\WW_2(H)$, let $\Gamma(\mu,\nu)$ be the set of couplings of $\mu$ and $\nu$, i.e.\ the set of probability measures $\pi$ on $\H \times \H$
 with marginals $\mu$ and $\nu$.
The Wasserstein distance between $\mu$ and $\nu$ is defined in terms of the Kantorovich problem \cite{kantorovich1942translocation} of optimal transportation:
\begin{equation}\label{eq : wasserstein distance}
    W_{2}(\mu, \nu)=\left( \inf _{\pi \in \Gamma(\mu, \nu)} \int_{\H \times \H}\|x-y\|^{2} \mathrm{~d} \pi(x, y)\right)^{\nicefrac{1}{2}} 
\end{equation}
and defines a proper metric on $\WW_2(\H)$. 
It may be shown that a minimising coupling $\pi$ always exists, for any marginal pair of measures $\mu, \nu \in \WW_2(\H)$. In fact, when $\mu$ is sufficiently regular,
the optimal coupling is unique and given by a deterministic coupling, $\pi=\left(\Id, T_{\mu}^{\nu}\right) \# \mu$, for some  deterministic optimal (transportation) map $T_{\mu}^{\nu}: \H \rightarrow \H$. Here `$T\#\mu$' represents the operation of \emph{pushing forward} a measure $\mu$ by  a measurable transformation $T$ of its domain, $T\#\mu(A)=\mu(T^{-1}(A))$. 
 When an optimal map exists, the Kantorovich problem is equivalent to the {Monge} problem \cite{monge1781memoire} of optimal transportation, which minimises the transportation cost only over deterministic couplings:
 \begin{equation}\label{eq:monge}
 \inf _{T \::\: T_{\#}\mu = \nu } \int_{\H}\|x-T(x)\|^{2} \mathrm{~d} \mu(x),
 \end{equation}
in the sense that the optimal coupling $\pi$ in \eqref{eq : wasserstein distance} is induced by a map, which minimises \eqref{eq:monge}.

\medskip

The transport structure of the Wasserstein distance induces a manifold-like geometry. Namely, for $\mu\in\WW_2(\H)$, a measure $\nu\in\WW_2(\H)$ can be identified with the optimal map $T_{\mu}^{\nu}$ provided such optimal map exists and is unique. In fact, by subtracting the identity $\Id$, this creates a correspondence between a subset of the Wasserstein space with a subset of the linear space $L^2(\mu)$ in such a way that $\mu$ itself corresponds to $0 \in L^2(\mu)$. This motivates the following definition of a \textit{formal tangent space} at $\mu$ in $\WW_2(\H)$ , which is due to \citet[][Definition 8.5.1]{ambrosio2005gradient}:
\begin{equation}\label{eq : wass tangent space}
\Tan_{\mu}=\overline{\left\{\lambda(T-\Id): T \in L^2(\mu) ; \;T \text { optimal between } \mu \text { and } T _ \# \mu ; \lambda>0\right\}}^{L^2(\mu)}
\end{equation}
As a subset of $L^2(\mu)$, the tangent space inherits the inner product:
$$
\langle T, S \rangle_{\mu}=\int_{\H}\langle S (x), R(x)\rangle \der \mu(x), \quad S, R \in L^2(\mu) .
$$
Note that since optimality of $T$ is independent of $\mu$, the only part of this definition that depends on $\mu$ is this inner product and the corresponding closure operation. Though not obvious from the definition, this is a linear space: see \citet{ambrosio2005gradient}.

\medskip
\begin{remark}[Tangent vectors and geodesics]
As happens for the case of manifolds, there is a one-to-one correspondence between geodesics paths and vectors in the tangent bundle: indeed, for any point $p$ in a manifold $M$ and tangent vector $v \in \mathrm{Tan}_p$, there exists a geodesic $\gamma_{v}: [0,1] \rightarrow$ $M$, where $\gamma_{v}(0)=p$ and $\gamma'_{v}(0)=v$. Something similar occurs in the Wasserstein case: given $\mu, \nu \in \WW_2(\H)$  such that the optimal map from $\mu$ to $\nu, T_{\mu}^{\nu}$, exists, we may define a curve:
$$
\mu({\lambda})=\left[\Id+\lambda\left(T_{\mu}^{\nu}-\Id\right)\right] \# \mu, \quad \lambda \in [0,1]
$$
which is known in the literature as McCann's interpolation (\citet{mccann1997convexity}), and \textit{interpolates} between the two measures, in the sense that $\mu({0})=\mu$ and $\mu({1}) = \nu$. In fact, such curve is a constant speed geodesic, and represents the \textit{shortest path} between $\mu,\nu$ in the Wasserstein geometry, and generalises the notion of a ``straight line" to the (non-linear) Wasserstein space $\WW_2(\H)$.  
\end{remark}

\subsection{Gaussian Optimal Transport and the Bures-Wasserstein Space}\label{sec : BW space}

Our interest in optimal transport stems from the fact that, when restricting attention to \emph{centred Gaussian measures} on $\H$, the Wasserstein distance and its geometry induce a natural distance and geometry on \emph{convariance operators}. 

Concretely, consider the space $\KK_{\H}$ of non-negative definite, self-adjoint, and trace-class operators on the Hilbert space $\H$. Notice that $F \in \KK$ if and only if there exists an $\H$-valued centred Gaussian element $X$ satisfying $\E\|X\|^2<\infty$ with covariance operator $\E[X\otimes X] = F$. Equivalently, the space $\KK$ can be identified with the space of centered Gaussian measures on $\H$. The latter is a closed and convex subset in $\WW_2(\H)$. Thus it is fruitful to make use the theory of optimal transportation of measures to define a distance and geometry between covariance operators: given two covariances $G$ and $F$, we can consider the corresponding Gaussian measures $\mu_{F} \equiv \N(0,F)$ and $\nu_{G} \equiv \N(0,G)$, and define the Bures-Wasserstein metric $\Pi$ as:
\begin{equation}
    \label{eq : bures-wasserstein distance}
     \Pi(F,G) :=  W_2(\mu_{F},\nu_{G}).
\end{equation}
Gaussian measures are quite exceptional, constituting a rare case where the 2-Wasserstein distance admits a closed form solution, namely
$$
\Pi(F,G)=W_2(\mu_{F},\nu_{G})  = \sqrt{ \tr(F)+\tr(G)-2\tr(G^{\nicefrac{1}{2}}FG^{\nicefrac{1}{2}})^{\nicefrac{1}{2}}}.
$$
The use of this metric in an infinite-dimensional setting was initiated (under a different  expression) by \citet{pigoli2014distances}, who extended the finite-dimensional Procrustes metric --- a central concept in statistical shape theory, also previously shown to be very well adapted to the statistical analysis of covariance \emph{matrices} (see e.g. \citet{dryden2009non}). \citet{masarotto2019procrustes} later showed that the Procrustes metric between two covariances in fact coincided with the 2-Wasserstein metric between the corresponding centred Gaussians, i.e.   the infinite-dimensional Bures--Wasserstein metric. This allowed them to use the rich structure of the optimal transport problem to establish key geometrical and statistical properties of this metric. We now review those properties that are elemental for this paper. For a complete account, see \citet{masarotto2019procrustes} and references therein.

Optimal transport maps between Gaussians also admit a closed-form representation: writing $T_{F}^{G}$ in lieu of the optimal transport map $T_{\mu}^{\nu}$, one may show that:
\begin{equation}
    \label{eq : transport map}
    T_{F}^{G} := 
        F^{-\nicefrac{1}{2}} (F^{\nicefrac{1}{2}}G F^{\nicefrac{1}{2}})^{\nicefrac{1}{2}}F^{-\nicefrac{1}{2}}.
\end{equation}
This formula holds whenever such a map exists. A necessary condition for existence is that the ranges of $F$ and $G$ are appropriately nested: if $\operatorname{ker}\left(F\right) \subseteq \operatorname{ker}\left(G\right)$,  then the map \eqref{eq : transport map} is well defined on a subspace of $\H$ with $\mu_{F}$--measure 1 , containing the range of $F^{\nicefrac{1}{2}}$.
 Conversely, any linear operator  $T\::\:\H\rightarrow\H$ is an optimal map between Gaussian measures provided it satisfies: 
\begin{enumerate}
    \item[i)] $T \geq 0$, $T^*=T$,
 
    \item[ii)]$T F^{\nicefrac{1}{2}} \in \BB_2$ for some $F \in\KK$.
\end{enumerate} 
In other words, if $\mu\equiv \N(0,F)$ and $T$ satisfies the two properties above, then the push-forward of $\mu$ through $T$ is a centered Gaussian measure with covariance $TF T$: $T_{\#}\mu \equiv \N(0, TF T)$.

The geometry induced on $\KK$ by the metric $\Pi$ inherits the peculiar the geometric structure of the Wasserstein space $\WW_2(\H)$, but specialised to the Gaussian case.  In particular, the \textit{tangent space} at any $F\in\KK$, can be defined as
\begin{equation}
    \label{eq : tangent space}
    \begin{split}
    \Tan_{F}\equiv   \Tan_{F}(\KK):=&  \overline{\{ \lambda \cdot (T - \Id) \::\: \lambda\geq 0, \: T \geq 0, \: \| TF^{\nicefrac{1}{2}} \|_2<\infty \}} \\
    =& \overline{\{\Gamma \::\: \Gamma=\Gamma^*,\:\|\Gamma F^{\nicefrac{1}{2}}\|_2 <\infty\}}
    \end{split}
\end{equation}
where the closure is taken with respect to the inner product on $\Tan_{F}$,
\begin{equation}
    \label{eq : inner product on tangent space}
    \begin{split}
    \langle \Gamma, \Gamma' \rangle_{\Tan_{F}} := & \tr(\Gamma F\Gamma') 
                            = \E [\langle \Gamma X , \Gamma' X\rangle], \qquad \text{where } 
                            X \sim \N(0,F).
    \end{split}
\end{equation}
Endowed with such inner product, and  quotienting-out the equivalence relation $\Gamma \sim \Gamma' \Leftrightarrow (\Gamma-\Gamma')F = 0$, the tangent space $(\Tan_{F}, \langle \cdot,\cdot \rangle_{\Tan_{F}})$ is a separable Hilbert space. 
We denote by $\TT_{\KK}$ the tangent bundle, i.e., the disjoint union of all tangent spaces:
\begin{equation} \label{eq : tangent bundle}
    \TT_{\KK}:= \bigcup_{F\in\KK}\{( F, \Tan_{F})\}.
\end{equation}

\noindent For any $F\in\KK$, we may define an \textit{exponential map} by: 
\begin{equation}
    \label{eq : exp map}
\exp_{F} \::\: \Tan_{F} \rightarrow \KK , \qquad  \exp_{F}(\Gamma) = (\Gamma+\Id )F(\Gamma+\Id ).
\end{equation}
and a \textit{logarithm map} as its left inverse,  projecting $\KK$ onto $\Tan_{F}$. Specifically, for $G\in\KK$ satisfying $\ker F \subset \ker G$, we may define $
\log_{F}(G) = T_{F}^{G} - \Id.
$

\medskip

For $F_0,F_1\in\KK$,  a constant-speed geodesic interpolation $F_{\lambda}$  can be defined, employing McCann's interpolation on Gaussian measures:
\begin{equation}\label{eq : McCann's interpolant}
    \begin{split}
        F_{\lambda} := &  
        \left[ \lambda T_{F_0}^{F_1} + (1-\lambda) \Id \right] F_0 \left[ \lambda T_{F_0}^{F_1} + (1-\lambda) \Id \right] \\
                          = &  
        \lambda^2F_1 + (1-\lambda)^2 F_0 + \lambda(1-\lambda)(TF_0 + F_0T).
    \end{split}
\end{equation}
The path $\lambda \mapsto F_{\lambda}$, defined on the unit interval, indeed interpolates between $F_0$ and $F_1$,
and has constant speed in that $\Pi(F_{\lambda+h}, F_{\lambda}) = h\cdot \Pi(F_{0},F_{1})$.

\begin{remark}
Although the space $(\KK,\Pi)$ has manifold-like properties, it is \textit{not} an infinite dimensional Riemannian manifold. Indeed, it is not in general possible to define an exponential map from a neighbourhood of the origin in $\Tan_{\F}$ into $\KK$ which is a homeomorphism (see \citet{ambrosio2004gradient}). Even when $\dim(\H)<\infty$, one has an Alexandrov geometry with singularities stratified according to the rank of the corresponding covariances   (\citet{takatsu2010wasserstein}).
\end{remark}

\begin{figure}
\centering

\subfloat[]{ \includegraphics[width=0.88\textwidth]{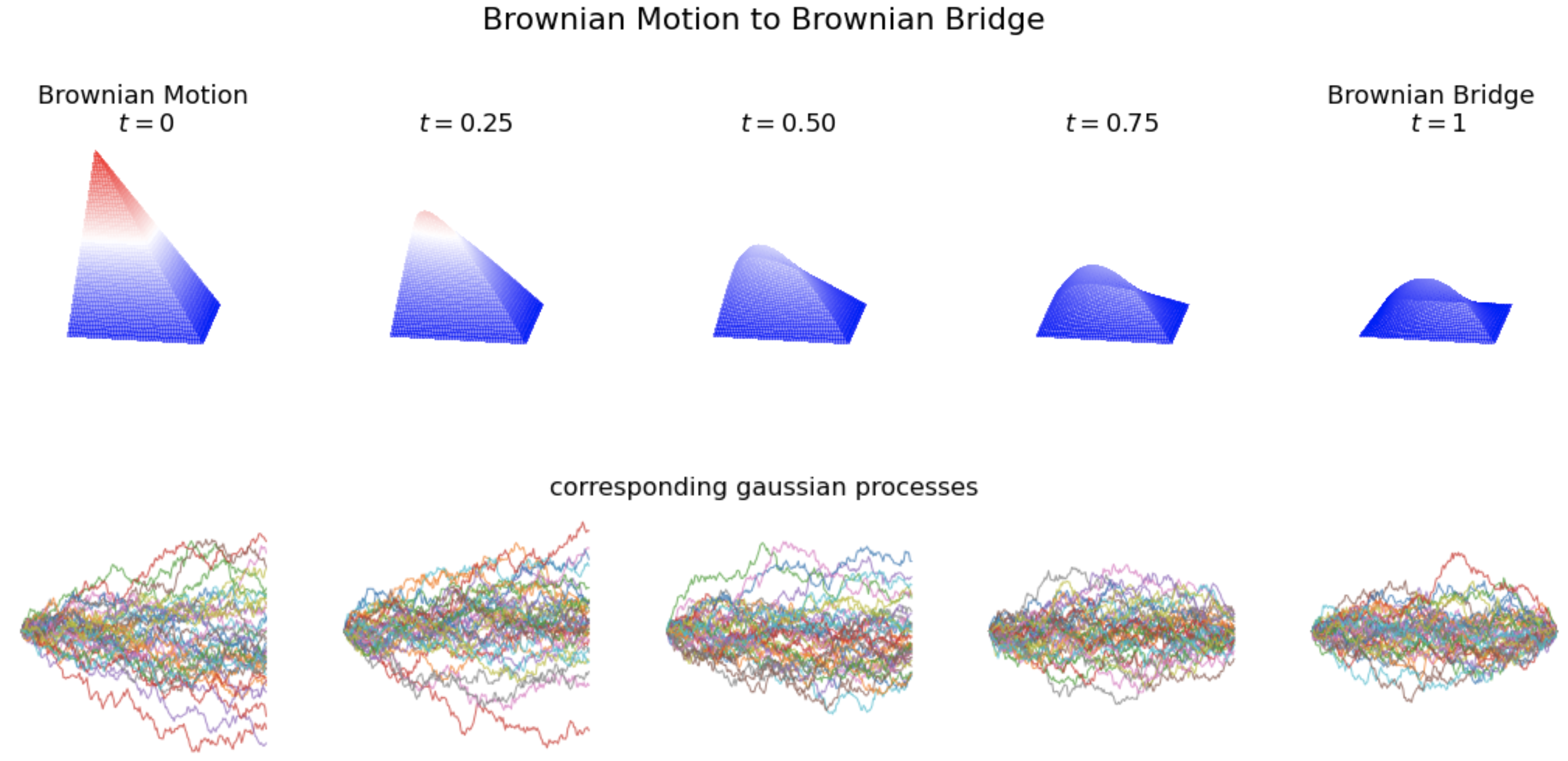}} 

\vspace{.5cm}

\subfloat[]{ \includegraphics[width=0.88\textwidth]{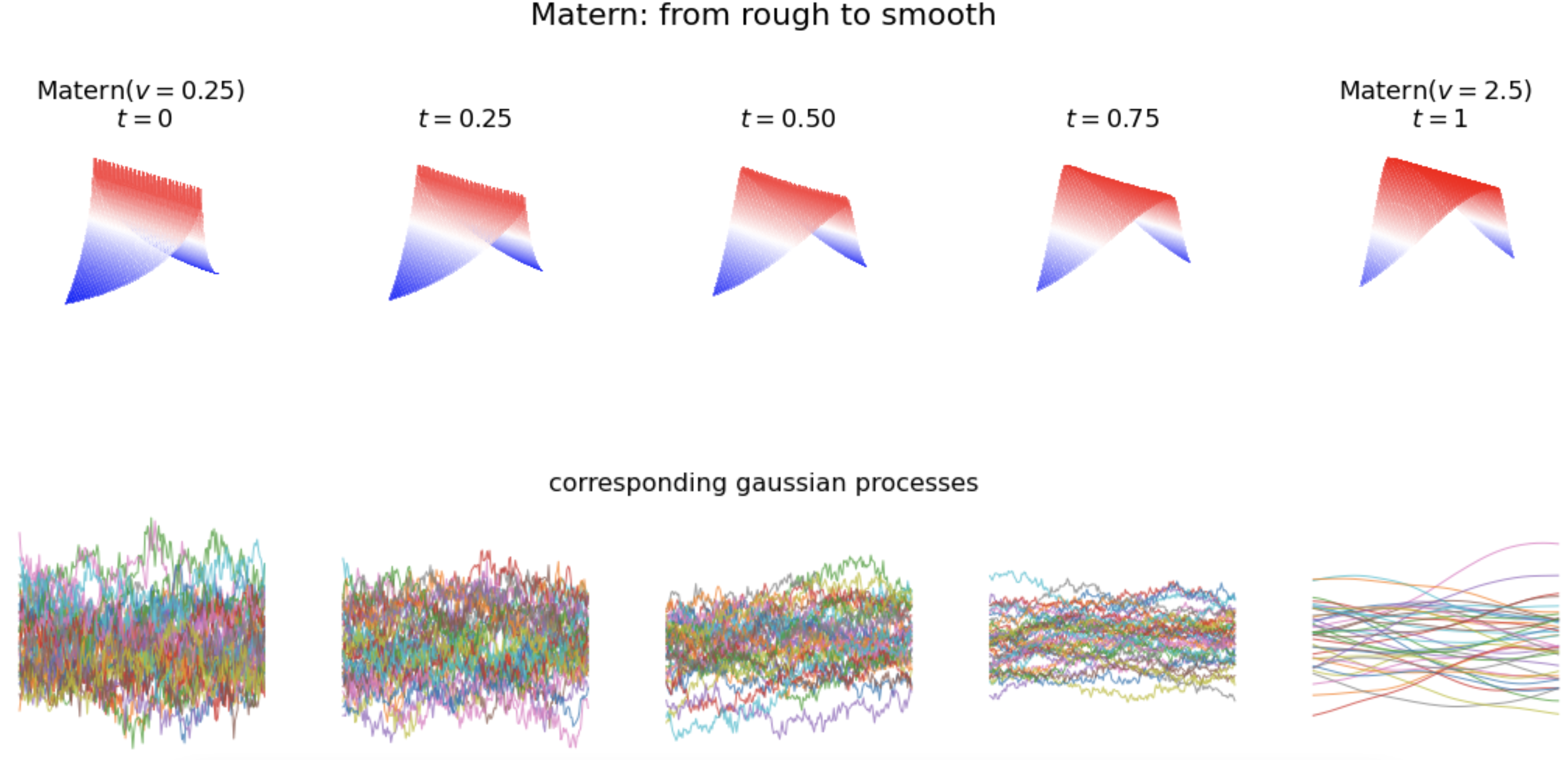}}

\caption[Two geodesic interpolations between covariances, with randomly generated samples of the corresponding gaussian processes]{(a) geodesic interpolation between the Brownian Motion covariance and the Brownian Bridge covariance, with ``spaghetti" plot of 50 gaussian processes with the corresponding covariance below.
    (b) geodesic interpolation between two Matèrn covariances, respectively with $\nu = 0.25$ and $\nu = 2.5$ degrees of freedom. Below, 50 gaussian processes with the corresponding covariance, showing a rough-to-smooth transition in the trajectories, consequence to the progressively smoother behaviour around the diagonal of the covariance.}
    \label{fig : spaghetti}
\end{figure}

\section{Functional Data in $(\KK,\Pi)$: Flows of Covariance Operators}
\label{sec : flows}
Armed with the geometry and rich analytical structure furnished by optimal transport, we now turn to consider a family $\{F_t\}_{t\in[0,1]}\subset \KK$ of covariances, indexed by a compact time interval, say $[0,1]$ without loss of generality.  We say that the map $\F : t \mapsto F_t$ is a \textit{flow of covariance operators} or a \textit{covariance-valued functional datum} if $F_t\in\KK$ for every $t\in[0,1]$,  and the map $t\mapsto F_t$ is measurable.
In particular, equipping $\KK$ with the metric $\Pi$, we say that a flow $\F=\{F_t\}_{t\in[0,1]}\subset \KK$ is continuous at $t\in[0,1]$ if:
$$
\Pi\big(F_{t+h},F_t \big) \rightarrow 0, \quad  \text{ as }h\rightarrow 0.
$$
We define  $\FF_C$ to be the space of \textit{continuous} flows of covariances:
$$
    \FF_{C} = \{ \F \::\: [0,1] \rightarrow \KK, \:\F\text{ continuous on } [0,1]\}.
$$
In particular, we may measure the distance between any pair $\F=\{F_t\}_{t\in[0,1]}$ and $\G =\{G_t\}_{t\in[0,1]}$ of continuous flows, $\F,\G\in\FF_C$, by integrating their (squared) pointwise $\Pi$-distance:
\begin{equation*}
\label{eq : distance between flows}
   d(\F,\G) := \left(  \int_{0}^1\Pi(F_t,G_t)^2 dt\right)^{\nicefrac{1}{2}}.
\end{equation*}
This yields a valid metric:
\begin{proposition}
\label{prop : space pf continous flows is a metric space}
$(\FF_C, d)$ is a metric space.
\end{proposition}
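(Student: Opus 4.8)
The plan is to verify that $d$ is well-defined and finite on $\FF_C\times\FF_C$, and then to check the four metric axioms, with the triangle inequality reduced to Minkowski's inequality in $L^2[0,1]$. First, for $\F=\{F_t\},\G=\{G_t\}\in\FF_C$ I would show that $t\mapsto \Pi(F_t,G_t)$ is a bounded, continuous (hence measurable) function on $[0,1]$, so that the defining integral makes sense and is finite. Continuity propagates from the two flows via the triangle inequality for the metric $\Pi$: $\lvert\Pi(F_{t+h},G_{t+h})-\Pi(F_t,G_t)\rvert\le \Pi(F_{t+h},F_t)+\Pi(G_{t+h},G_t)\to 0$ as $h\to0$. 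Boundedness then follows because a continuous function on the compact interval $[0,1]$ is bounded; concretely, $\Pi(F_t,G_t)\le \sup_{s}\Pi(F_s,F_0)+\Pi(F_0,G_0)+\sup_{s}\Pi(G_0,G_s)<\infty$, so $d(\F,\G)\in[0,\infty)$.

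Non-negativity and symmetry are immediate: the integrand $\Pi(F_t,G_t)^2$ is non-negative, and $\Pi(F_t,G_t)=\Pi(G_t,F_t)$ for every $t$ since $\Pi$ is a metric on $\KK$ (being $W_2$ restricted to centred Gaussians). For the identity of indiscernibles, $d(\F,\F)=0$ is clear. Conversely, if $d(\F,\G)=0$ then $\int_0^1\Pi(F_t,G_t)^2\,dt=0$ with a non-negative continuous integrand, which forces $\Pi(F_t,G_t)=0$ for \emph{every} $t\in[0,1]$ (continuity upgrades the a.e.\ conclusion to an everywhere conclusion); since $\Pi$ is a metric, $F_t=G_t$ for all $t$, i.e.\ $\F=\G$ as flows.

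Finally, for the triangle inequality, fix $\F=\{F_t\}$, $\G=\{G_t\}$, $\mathcal{H}=\{H_t\}\in\FF_C$. Pointwise, the triangle inequality for $\Pi$ gives $\Pi(F_t,H_t)\le \Pi(F_t,G_t)+\Pi(G_t,H_t)$ for every $t$. Treating $t\mapsto\Pi(F_t,G_t)$ and $t\mapsto\Pi(G_t,H_t)$ as elements of $L^2[0,1]$ (legitimate by the first step), monotonicity of the $L^2$-norm on non-negative functions together with Minkowski's inequality yields
$$
d(\F,\mathcal{H}) \;\le\; \big\| \Pi(F_\cdot,G_\cdot)+\Pi(G_\cdot,H_\cdot)\big\|_{L^2[0,1]} \;\le\; \big\|\Pi(F_\cdot,G_\cdot)\big\|_{L^2[0,1]}+\big\|\Pi(G_\cdot,H_\cdot)\big\|_{L^2[0,1]} \;=\; d(\F,\G)+d(\G,\mathcal{H}).
$$

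I expect the only genuinely non-routine point to be the well-definedness step: noting that continuity of each flow in the $\Pi$-metric propagates, via the triangle inequality for $\Pi$, to joint continuity (hence boundedness and measurability) of $t\mapsto\Pi(F_t,G_t)$, which is exactly what guarantees $d(\F,\G)<\infty$ and legitimizes the use of $L^2[0,1]$. The second subtlety worth spelling out is that a vanishing integral of a non-negative \emph{continuous} function vanishes identically, not merely almost everywhere, which is what allows the conclusion $\F=\G$ at the level of flows rather than up to a null set. The triangle inequality itself is a direct appeal to Minkowski's inequality and presents no difficulty.
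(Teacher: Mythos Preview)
Your proof is correct and follows essentially the same route as the paper: establish continuity of $t\mapsto\Pi(F_t,G_t)$ via the triangle inequality for $\Pi$, use this continuity to upgrade the a.e.\ vanishing to everywhere vanishing for the identity of indiscernibles, and obtain the triangle inequality from Minkowski. The only cosmetic difference is that the paper spells out the H\"older-based proof of Minkowski explicitly rather than citing it, and you are slightly more careful in noting finiteness of $d$ up front.
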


Continuity of a flow $\F$ allows us to make point-wise statements: at any fixed time point $t\in[0,1]$, the \textit{operator} $\F(t)=F_t\in\KK$ and associated notions (its kernel, its spectrum, and so on) are well-defined.

\begin{remark}[Additional Smoothness]
For the purposes of this paper, continuity suffices. But for other functional tasks, one occasionally asserts differentiability. For $p\in[1,\infty]$, we say that $\F =\{F_t\}_{t\in[0,1]}$ is $p$-absolutely continuous if there exists $v\in{L}^p(0,1)$ such that:
\begin{equation}
    \label{eq : AC flow}
    \Pi(F_s,F_t)\leq \int_s^t v(u)\der u, \qquad \forall \: s,t\in[0,1],
\end{equation}
 and denote the space of such flows by $  \FF_{AC^p}$. Absolutely continuous flows in \textit{complete} metric spaces are, in some sense, differentiable. Indeed, by \citet[Theorem 2.2 in ][]{ambrosio2004gradient}, the completeness of $(\KK,\Pi)$ shows that for any $\F \in \FF_{AC^p}$ there exists $|v'|\in L^p(0,1)$ satisfying:
\begin{equation}
    \label{eq : metric derivative}
    |v'|(t) = \lim_{s\rightarrow t} \frac{\Pi(F_s,F_t)}{|s-t|}, \qquad \forall \: t\in[0,1],
\end{equation}
where $|v'|$ is additionally a \textit{minimal} admissible integrand for the right hand side of \eqref{eq : AC flow}.
\end{remark}

Now let $\{F_t\}_{t\in[0,1]}$ be family of random covariance operators defined on a probability space $(\Omega,  \PP)$, and assume that the process $t\mapsto F_t$ has almost surely continuous trajectories and that for each $t\in[0,1]$ the mapping $\omega\mapsto F_t(\omega)\in \KK$ is measurable. Given such a collection of time-indexed covariances, we may define a \emph{random flow} $\F$ by $\F(t) = F_t$ for $t\in[0,1]$, which may be viewed as a \textit{random element} in the space $\FF_C$ of covariance-valued continuous functional data. {In the presence of our assumptions, it can also be seen as a stochastic process $\{F_t\}_{t\in [0,1]}$, with the two perspectives being interchangeable.

\subsection{Fr\'echet Mean Flow}
\label{sec : fmf}

Given a random flow $\F\in\FF_C$, we may define a corresponding \textit{\Frechet mean flow} with respect to the integrated Bures-Wasserstein metric $d$ defined in \eqref{eq : distance between flows}. That is, a \Frechet mean flow of $\F\in\FF_C$ is  a minimizer of the functional
\begin{equation}
\label{eq : pop frechet mean flow}
   \G \mapsto \E \left[ d^2(\G,\F) \right],\qquad \G \in \FF_C.
\end{equation}

A \Frechet mean flow need not exist nor be unique, if it does exist, but we will soon discuss sufficient conditions for existence and uniqueness, in which case we speak of \emph{the} \Frechet mean flow. As for any functional datum, estimating and constructing the mean is fundamental in itself, but also elemental for tasks such as regression, classification, and testing. It is, in particular, the first step in defining a \emph{covariance}, which we define in the next section, and is the second key ingredient for functional data analysis. 

For the purpose of statistical inference and computation, thus, we will make the following first-principles assumptions on the random flows $\F$:
\label{assumptions : flows}
\begin{enumerate}
    \item[\namedlabel{assumption_flow_trbound}{\textbf{A(1)}}]  $\E \int_{0}^1 \| \F(t)\|_1 dt < \infty. $
    \item[\namedlabel{assumption_flow_posdef}{\textbf{A(2)}}] $\PP\{\F(t) > 0 \} > 0, \quad \forall\: t\in[0,1]$.
    \end{enumerate}

The first assumption is the usual $L^1$-moment condition, a standard requirement for the existence of  a mean, but stated in the natural topology for our context. The second assumption is a regularity condition, asserting that at any $t\in[0,1]$, the flow has positive probability of attaining injectivity. These conditions will ensure, respectively, existence and uniqueness.

When a \Frechet mean  flow $\{\M(t)\}_{t\in[0,1]}$ does exist uniquely, we will occasionally use a second set of conditions which can be seen as more refined pair of moment and regularity conditions.

    \begin{enumerate}

    \item[\namedlabel{assumption_flow_boundedmap}{\textbf{B(1)}}]  The random maps 
        $T_{\M(t)}^{\F(t)}$  satisfy
     $\E \left\{\left\| T_{\M(t)}^{\F(t)}\right\|^2_{\infty}\right\}< \infty$ for every $t\in[0,1]$.
    \end{enumerate} 

\begin{enumerate}
 \item[\namedlabel{assumption_invertible_maps}{\textbf{B(2)}}]  
{ $\mathbb{P}\left\{ \varepsilon^{-1}\mathcal{R}(t)\preceq \F(t) \preceq \varepsilon \mathcal{R}(t) \right\}=1$, 
 for some $\mathcal{R}\in\FF_C$, some  $0<\varepsilon<\infty$, and all $t\in [0,1]$.}
\end{enumerate}

While the first pair of assumptions is used for existence/uniqueness of \Frechet mean flows, the second pair will be made use of when developing asymptotic theory, in the form of rates and limiting laws. 
Similarly to Assumption \ref{assumption_flow_trbound}, Assumption \ref{assumption_flow_boundedmap} is a finite moment assumption, but this time formulated on the tangent space of the \Frechet mean flow at time $t$. It states that the mean absolute deviation of $\F(t)$ (interpreted on the tangent space, as the image of $\F(t)$ via the log map at $\M(t)$) is finite in operator norm, at all time instances $t\in[0,1]$. {For example, this will be the case in the classical \emph{template deformation model}, or rather its dynamical version -- see model \eqref{eq : template}.}

As for Assumption \ref{assumption_invertible_maps}, like Assumption \ref{assumption_flow_posdef}, it   is a regularity condition on the possible ranges of the operators $\F(t)$, albeit a more refined one. It is equivalent to stating that the range of $\F^{1/2}(t)$ is constant over outcomes $\omega\in \Omega$ -- i.e.\ that every realisation of $\F$ generates the same Reproducing Kernel Hilbert Space (RKHS) at time $t$, almost surely. {The most natural case is when $\mathcal{R}(t)=R$ is time-invariant, and so the RKHS is invariant with respect to $t$ and $\omega$: the regularity is the same along the flow and over realizations. But invariance over $t$ is not necessary for our development, and indeed, one can imagine flows where the covariances are becoming increasingly or decreasingly regular as time progresses.}

When $\mathrm{dim}(\H)=\infty$, \ref{assumption_flow_trbound} follows from \ref{assumption_flow_boundedmap}, provided the latter holds true for \emph{some} flow $\M(t)$ (not necessarily a \Frechet mean flow). When $\mathrm{dim}(\H)<\infty$, the two conditions are equivalent (again with $\M(t)$ arbitrary in the formulation of \ref{assumption_flow_boundedmap}).

In general,  \ref{assumption_flow_posdef} does not follow from  \ref{assumption_invertible_maps}, except if  $\mathcal{R}$ is such that $\mathcal{R}(t)\succ 0$  for all $t\in[0,1]$. On the other hand, \ref{assumption_flow_posdef} and \ref{assumption_invertible_maps} \emph{together} will imply that the enveloping flow $\mathcal{R}$ can be taken to be everywhere strictly positive, implying in turn that $\F$ is everywhere strictly positive.

\medskip

Computing the \Frechet mean flow in \eqref{eq : pop frechet mean flow} in principle requires the solution of a minimisation problem over the space $\FF_C$ of continuous flows. However, continuity allows us to reduce the optimisation problem in \eqref{eq : pop frechet mean flow} to a \textit{pointwise} optimisation in the space of covariance operators, with respect to the Bures-Wasserstein distance, yielding the following result.

\begin{lemma}\label{lemma : continuity for BW barycenter}
Let $\F =\{F_t\}_{t\in[0,1]}$ be a random continuous flow satisfying Assumptions \ref{assumption_flow_trbound} and \ref{assumption_flow_posdef}. Given any $t\in[0,1]$, the pointwise minimiser 
$$
M_t : = \argmin_{G\in\KK}\:\E \left[ \Pi^2(G,F_t) \right].
$$
exists uniquely, and induces a continuous flow $\M\::\:t \mapsto M_t$ that uniquely minimises \eqref{eq : pop frechet mean flow} and is thus the \Frechet mean flow of $\F$.
\end{lemma}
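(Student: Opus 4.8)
The plan is to split the argument into three parts: (i) pointwise existence and uniqueness of $M_t$ for each fixed $t$; (ii) continuity of the map $t\mapsto M_t$, so that $\M$ is a genuine element of $\FF_C$; and (iii) the identification of this flow as the unique minimiser of the global functional $\G\mapsto \E[d^2(\G,\F)]$. For part (i), I would invoke the known theory of Bures--Wasserstein barycenters: Assumption \ref{assumption_flow_trbound} guarantees $\E\,\Pi^2(G,F_t)<\infty$ for, say, $G=0$ (since $\Pi^2(0,F_t)=\tr(F_t)$), so the Fréchet functional $G\mapsto \E[\Pi^2(G,F_t)]$ is finite and not identically $+\infty$; existence of a minimiser then follows from lower semicontinuity and a tightness/compactness argument in $(\KK,\Pi)$ as in \citet{masarotto2019procrustes}. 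Uniqueness is where Assumption \ref{assumption_flow_posdef} enters: $\PP\{F_t>0\}>0$ forces the barycenter to be injective, and on the subset of regular covariances the Fréchet functional is strictly convex along generalised geodesics (the Bures--Wasserstein barycenter is unique as soon as the averaged measure gives positive mass to absolutely continuous/regular covariances), giving a unique $M_t$.

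For part (ii), I would argue that continuity of $t\mapsto F_t$ (a.s.) plus the moment bound \ref{assumption_flow_trbound} propagates to continuity of $t\mapsto M_t$. The clean way is a stability estimate for barycenters: if $t_n\to t$, then $\Pi(F_{t_n},F_t)\to 0$ a.s., and by \ref{assumption_flow_trbound} together with $\Pi^2(F_{t_n},F_t)\le 2(\tr F_{t_n}+\tr F_t)$ one gets $\E[\Pi^2(F_{t_n},F_t)]\to 0$ by dominated convergence (after passing to a suitable subsequence and using uniform integrability of $\tr F_t$ over compact time intervals, which follows from continuity of $t\mapsto \tr F_t$ and \ref{assumption_flow_trbound}). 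Then I would use that the barycenter depends continuously (indeed Hölder-continuously, with exponent $1/2$) on the underlying random covariance in the $\E[\Pi^2]$-sense: $\Pi(M_{t_n},M_t)^2 \le C\,\big(\E[\Pi^2(F_{t_n},F_t)]\big)^{1/2}$ or a similar bound obtained by comparing the two Fréchet functionals and using strict convexity (quadratic lower bound) of the functional at its minimiser $M_t$ — the latter quadratic growth is exactly what injectivity of $M_t$ buys us. This yields $\Pi(M_{t_n},M_t)\to 0$, hence $\M\in\FF_C$.

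For part (iii), I would show the two optimisation problems coincide. For any $\G\in\FF_C$, Tonelli's theorem gives $\E[d^2(\G,\F)] = \int_0^1 \E[\Pi^2(G_t,F_t)]\,dt$, and since the integrand is minimised pointwise (in $t$) by $G_t=M_t$, we get $\E[d^2(\G,\F)]\ge \int_0^1 \E[\Pi^2(M_t,F_t)]\,dt = \E[d^2(\M,\F)]$, with equality forcing $G_t=M_t$ for a.e.\ $t$, and then for all $t$ by continuity of both flows. Thus $\M$ is the unique minimiser. One must check measurability of $t\mapsto \E[\Pi^2(G_t,F_t)]$ and that $\M$, already shown continuous, is admissible; both are routine.

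The main obstacle I anticipate is part (ii): quantifying how the barycenter moves as the source distribution of covariances varies with $t$. Pointwise existence/uniqueness is essentially citable, and part (iii) is a Fubini argument, but the continuity of $t\mapsto M_t$ requires a genuine stability estimate for Bures--Wasserstein barycenters under perturbation of the averaged measure. The key leverage is that $M_t>0$ (from \ref{assumption_flow_posdef}), which makes the Fréchet functional locally strongly convex near its minimiser and converts $L^2$-closeness of the data flows into $\Pi$-closeness of the barycenters; getting the constants in this estimate uniform in $t$ over $[0,1]$ — using compactness of $[0,1]$ and continuity of $t\mapsto \tr F_t$ — is the delicate point.
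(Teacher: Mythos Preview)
Your parts (i) and (iii) are sound and match the paper's strategy: existence/uniqueness of $M_t$ is citable (coercivity plus strong convexity of the Fr\'echet functional, using \ref{assumption_flow_posdef} for uniqueness), and the identification of the pointwise minimiser flow as the global minimiser is exactly the Fubini/Tonelli argument you sketch.

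The gap is in part (ii). You claim that injectivity of $M_t$ (which does follow from \ref{assumption_flow_posdef}) buys you a \emph{quadratic} lower bound for the Fr\'echet functional at its minimiser, and hence a quantitative stability estimate of the form $\Pi(M_{t_n},M_t)^2 \lesssim (\E\Pi^2(F_{t_n},F_t))^{1/2}$. In infinite dimensions this is not available under \ref{assumption_flow_trbound} and \ref{assumption_flow_posdef} alone: $M_t$ is trace-class, so its eigenvalues accumulate at zero, and the strong-convexity modulus of $G\mapsto\E\Pi^2(G,F_t)$ near $M_t$ degenerates. The known quantitative bounds (e.g.\ \citet{le2022fast}, used later in the paper for the \emph{finite}-dimensional Theorem~\ref{thm : FM consistency - finite dim}) require two-sided eigenvalue control of the type in \ref{assumption_invertible_maps}, which is not assumed here. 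Injectivity gives strict convexity along geodesics --- enough for uniqueness --- but not a positive curvature constant, which is what a H\"older stability estimate needs. (There is also a secondary issue: your dominated-convergence step for $\E\Pi^2(F_{t_n},F_t)\to0$ requires an integrable dominator, and $\sup_t\tr F_t$ is a.s.\ finite by continuity but not obviously in $L^1$ under \ref{assumption_flow_trbound}.)

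The paper avoids all of this by a soft compactness argument rather than a quantitative one. The key ingredient you are missing is the operator inequality $M_t \preceq \E[F_t]$ (the Bures--Wasserstein barycenter is dominated in the Loewner order by the linear mean, see \citet{masarotto2019procrustes}). Given $t_n\to t$, this traps the sequence $\{M_{t_n}\}$ inside an order interval $\{0\preceq G\preceq \sup_n\E[F_{t_n}]\}$, which is $\Pi$-compact; any limit point is then shown to minimise $G\mapsto\E\Pi^2(G,F_t)$, and uniqueness of $M_t$ plus the sub-subsequence trick gives $M_{t_n}\to M_t$. This route needs no strong convexity constant and works under \ref{assumption_flow_trbound}--\ref{assumption_flow_posdef} as stated.
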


In practice, Lemma \ref{lemma : continuity for BW barycenter} implies that the (unique) solution to \eqref{eq : pop frechet mean flow} can be obtained by means of \textit{pointwise} minimisation, ``stitching" the \Frechet means of $\F(t)$ over time, for $t\in[0,1]$. That is, the solution to \eqref{eq : pop frechet mean flow} evaluated at time $t$ may be expressed at $M_t = \argmin_{G \in\KK}\:\E \left[ \Pi(G, F_t )^2 \right]$.The actual computation relies on the point-wise use of a (provably convergent) Procrustes algorithm. See Section \ref{sec : computation}.

\section{Second-Order Structure of Random Flows}
\label{sec : secondorder}
The statistical analysis of random objects crucially relies on a notion of \textit{covariance} of the random object under study. For instance, when considering random elements in a separable Hilbert space, say $L^2(0,1)$, covariance operators encode the second-order variation structure, and are crucial in virtually all inferential tasks by way of the Karhunen-Lo\'eve expansion. This approach, however, is inextricably linked to the linear structure of the ambient space: traditionally, covariances of random objects are defined as the (Bochner) mean tensor product of the centred process with itself. The process of centering does not directly extend to non-linear spaces, such as the space of covariance operator flows introduced in the previous Section \ref{sec : flows}. And, even if centering is suitably defined (e.g. via lifting to the tangent space) it is not always possible to compare two centred objects as they might live on different spaces.

Fortunately, however, the availability of a tangent bundle for the Bures-Wasserstein space of operators allows for the definition of a tangent bundle in the space of corresponding operator flows. This, in turn, defines a centering operation by way of a suitable logarithmic map, and the subsequent definition of a tensor product of the logarithmic images of random flows with themselves. Finally, by a process of embedding in a common space, one can compare elements belonging to different tangent spaces, as is necessary for asymptotic theory. 

\medskip

The purpose of this section is to rigorously carry out this construction. The section consists of three parts. We first rigorously define the tangent bundle corresponding to the space of flows, and show that a flow $\F$ may be associated to a linear space  \eqref{eq:tensorspace}, which is in fact a Hilbert space (Proposition \ref{prop : tensor Hilbert space is complete and separable}).

Next, we introduce a tool that allows for comparing elements belonging to different tangent spaces. In the case of proper Riemannian manifolds, this is done via parallel transport, but such a tool is unavailable in our context. We thus develop a different solution, in the form of an isometric embedding \eqref{eq : embedding}. This embedding will be particularly useful in assessing the convergence of the empirical covariance operator to the corresponding population operator.

Then, we consider a logarithmic transformation mapping a given flow onto the tangent space at the \Frechet mean flow \eqref{eq : log-process}, and introduce the covariance operator of the image of random flow via the log transform \eqref{eq : log-process covariance}; this enables us to establish a Karhunen-Loève expansion for the random flow (Proposition \ref{prop:KarhunenLoeve-pop}) and the corresponding functional PCA.

\subsection{The Direct Integral Hilbert Space Along a Flow}
\label{subsec : tensor space}
We introduce a tangent bundle structure to the space of flows of covariance operators $\FF_C$. Heuristically speaking, we take an ``uncountable direct sum" of the tangent Hilbert spaces along a flow $\F\in\FF_C$. This is a well-defined notion known as the \textit{direct integral} of the corresponding tangent spaces: $\TT_{\F} = {\oplus}_{t\in[0,1]} \Tan_{\F(t)}$, see \citet[][Chapter 8]{takesaki2003theory}. This sort of construction was first made statistical use of in \citet{lin2019intrinsic}, in the context of flows on finite-dimensional Riemannian manifolds, under the name of \textit{tensor Hilbert space} (which, however, is a misnomer), and we will make similar use here. 
\medskip

Let $\V : t\mapsto \V(t)$ be a time-indexed family of elements in the tangent bundle $\TT_{\KK}$, as defined in \eqref{eq : tangent bundle}. We say it is a \textit{vector field} along the flow $\F\in\FF_C([0,1]$ if $\V(t) \in \Tan_{\F(t)}$ for all $t\in[0,1].$ It is straightforward to see that the set of all vector fields along a measurable flow $\F$ forms a vector space, with standard addition and scalar multiplication. Furthermore, given a vector field $\V$ along $\F$ we may define its norm $\|\V\|_\F$ as:
$$
\|\V\|_{\Tan_{\F}}
:=\left(\int_{0}^1\|\V(t)\|_{\Tan_{\F(t)}}^2 dt\right)^{\nicefrac{1}{2}}
=\left(\int_{0}^1 \tr( \V(t) \F(t) \V(t) ) dt\right)^{\nicefrac{1}{2}},
$$
where $\|\cdot\|_{\Tan_{\F(t)}}$ is the norm associated to the inner product  $\langle \cdot,\cdot \rangle_{F_t}$ on the tangent Hilbert space $\Tan_{\F(t)}$ as defined in \eqref{eq : inner product on tangent space}.

Given a flow $\F\in\FF_C$, we naturally define $\TT_{\F}$ to be the collection of measurable vector fields $\V$ along $\F$ satisfying $\|\V\|_{\Tan_{\F}}<\infty$:
\begin{equation}\label{eq:tensorspace}
    \TT_{\F} : = 
    \left\lbrace  \V \::\: \V(t) \in \T_{\F(t)} \text{ for } t\in[0,1], \: \int_{0}^1 \|\V(t)\|_{\Tan_{\F(t)}}dt  < \infty \right\rbrace,
\end{equation}
with identification between $\U,\V \in \TT_{\F}$ whenever the set $\{t\in[0,1]\::\: \|\V(t)-\U(t)\|_{\Tan_{\F(t)}} \neq 0 \}$ has null Lebesgue measure. We can endow $\TT_{\F}$ with the structure of an inner product space by means of the form
$$
\langle\U,\V\rangle_{\F} 
= \int_{0}^1\langle \V(t),\U(t) \rangle_{\Tan_{\F(t)}} dt
= \int_{0}^1\tr( \V(t)\F(t)\U(t)) dt,
$$
defined for all $\U,\V$ vector fields along $\F$, with induced norm $\|\cdot\|_{\Tan_{\F}}.$

\begin{proposition}
\label{prop : tensor Hilbert space is complete and separable}
The inner product space $(\TT_{\F}, \; \langle \cdot,\cdot \rangle_{\F}) $ is a separable Hilbert space.
\end{proposition}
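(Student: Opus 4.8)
The plan is to verify the two defining properties of a separable Hilbert space—completeness and separability—for the inner product space $(\TT_{\F}, \langle\cdot,\cdot\rangle_{\F})$, exploiting the fact that the fiber spaces $(\Tan_{\F(t)}, \langle\cdot,\cdot\rangle_{\Tan_{\F(t)}})$ are already known (from the discussion preceding \eqref{eq : tangent bundle}) to be separable Hilbert spaces. The whole construction is precisely that of a direct integral of Hilbert spaces over the measure space $([0,1], \mathrm{Leb})$, so the strategy is to mirror the classical proof that $L^2$ of a Hilbert-space-valued function (or more precisely, a measurable field of Hilbert spaces) is complete and separable, being careful that the fibers here genuinely vary with $t$ and so one must invoke the measurable-field-of-Hilbert-spaces machinery from \citet[Chapter 8]{takesaki2003theory} rather than a naive Bochner $L^2$ argument.

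First I would establish completeness. Take a Cauchy sequence $\{\V_n\}\subset \TT_{\F}$. By the usual argument, pass to a subsequence with $\|\V_{n_{k+1}} - \V_{n_k}\|_{\Tan_{\F}} \le 2^{-k}$; then the function $t \mapsto \sum_k \|\V_{n_{k+1}}(t) - \V_{n_k}(t)\|_{\Tan_{\F(t)}}$ is in $L^2(0,1)$ by monotone convergence and the triangle inequality, hence finite for a.e.\ $t$. For such $t$, the sequence $\{\V_{n_k}(t)\}$ is Cauchy in the Hilbert space $\Tan_{\F(t)}$ and thus converges to some $\V(t)\in\Tan_{\F(t)}$; set $\V(t)=0$ on the null set where this fails. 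One then checks that $t\mapsto \V(t)$ is a measurable vector field (it is an a.e.\ pointwise limit of measurable vector fields, and measurability of fields is preserved under such limits in the direct-integral formalism), that $\V\in\TT_{\F}$ via Fatou's lemma applied to $\int_0^1 \|\V(t)\|^2_{\Tan_{\F(t)}}dt \le \liminf_k \int_0^1 \|\V_{n_k}(t)\|^2_{\Tan_{\F(t)}}dt < \infty$, and finally that $\|\V_n - \V\|_{\Tan_{\F}}\to 0$, again by Fatou on the tail. Standard: a Cauchy sequence with a convergent subsequence converges.

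For separability, the cleanest route is to produce a countable total subset. Fix a countable dense subset $\mathcal{D}\subset\FF_C$—or rather, work directly on $[0,1]$: partition $[0,1]$ into dyadic subintervals, and on each such interval approximate $\F$ (uniformly, by continuity of $\F$ and hence uniform continuity on the compact $[0,1]$) by a constant covariance $\F(t_j)$. On a subinterval where $\F$ is nearly constant equal to some $F$, the relevant fiber is essentially the fixed separable Hilbert space $\Tan_F$, in which we can pick a countable dense set; combining simple (piecewise-constant-in-an-appropriate-sense) vector fields built from these gives a countable family whose rational linear combinations are dense in $\TT_{\F}$. Alternatively, and perhaps more robustly given that the fibers truly vary, I would invoke the general fact that a direct integral of separable Hilbert spaces over a standard $\sigma$-finite measure space is separable \citep[Chapter 8]{takesaki2003theory}, and simply verify that our $\TT_{\F}$ fits that framework—the measurable structure on the field $\{\Tan_{\F(t)}\}$ being induced by the measurable (indeed continuous) flow $\F$ together with the embedding $\Gamma\mapsto \Gamma\F(t)^{1/2}$ into $\BB_2$, which is the content of the isometric embedding \eqref{eq : embedding} promised later in the section.

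The main obstacle I anticipate is not completeness—that is essentially the textbook $L^2$ argument—but rather pinning down the measurable structure on the field of tangent spaces precisely enough to justify both the measurability claims in the completeness proof and the separability conclusion. Because $\Tan_{\F(t)}$ depends on $\F(t)$ through the weighting operator in its inner product and through the quotient by $\{\Gamma : \Gamma\F(t)=0\}$, one cannot treat this as a constant-fiber Bochner space; the honest thing is to set up a fundamental sequence of measurable vector fields (in the sense of Takesaki) and check that $\TT_{\F}$ as defined in \eqref{eq:tensorspace} coincides with the direct-integral Hilbert space $\int_{[0,1]}^{\oplus}\Tan_{\F(t)}\,dt$ determined by it. Once that identification is in place, separability and completeness are immediate consequences of the general theory, and the dyadic-approximation argument above can serve as the concrete construction of the fundamental sequence, using continuity of $\F$ to control how the fibers vary.
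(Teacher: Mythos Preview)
Your completeness argument is essentially the paper's: extract a rapidly convergent subsequence (the paper also takes $\|\V_{n_k}-\V_{n_{k+1}}\|_{\TT_\F}\le 2^{-k}$), deduce a.e.\ pointwise convergence in the complete fibers $\Tan_{\F(t)}$, and finish with Fatou.

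For separability the paper takes a more concrete route than either of your two alternatives. Instead of invoking the abstract direct-integral theory or attempting dyadic approximation, it selects for each $t$ a CONS $(\Phi_i(t))_{i\ge 1}$ of $\Tan_{\F(t)}$ and defines a coordinate map $\Upsilon:\TT_\F\to L^2([0,1],\ell^2)$ sending $\V$ to its fiberwise Fourier coefficients $t\mapsto(\langle\V(t),\Phi_i(t)\rangle_{\Tan_{\F(t)}})_{i\ge1}$; one checks that $\Upsilon$ is a Hilbert isomorphism, and separability is inherited from $L^2([0,1],\ell^2)$. This sidesteps the black-box appeal to Takesaki, though it tacitly assumes the CONS can be chosen measurably in $t$---exactly the measurable-structure issue you identify as the main obstacle, which the paper does not address explicitly. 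Incidentally, the isometric embedding $\bJ_\F:\TT_\F\hookrightarrow L^2_{\BB_2}$ that you mention only as a device for setting up the measurable field would by itself give the slickest argument: it realises $\TT_\F$ as a subspace of the separable Hilbert space $L^2_{\BB_2}$ in one stroke, bypassing both the CONS selection and the general theory.
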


\begin{remark}[Tangent Spaces] Note that we use $\TT$ generically to indicate a tangent space, with the index clarifying to what space it is tangent. In particular, $\TT_{F}$ represents a Hilbert space of operators tangent to $\KK$ at a covariance operator $F\in\KK$, whereas $\TT_\F$ represents a Hilbert space of operator flows tangent to $\FF_C$ at a covariance operator flow $\F\in\FF_C$.
\end{remark}

\subsection{Comparing Elements Across Tangent Spaces by Canonical Embedding}
\label{subsec : embedding}
In the case of Riemmanian manifolds, to compare elements at different tangent spaces, one may employ a tool intrinsic to manifolds known as \textit{parallel transport}. In some sense, parallel transport is a transformation between tangent spaces, that canonically transports a vector $v\in \mathrm{Tan}_p$ tangent to a point $p$, to a vector $\P_{p,q} v\in \mathrm{Tan}_q$ tangent to a point $q$, by  transporting it along the geodesic connecting $p$ and $q$ in \emph{parallel fashion}. In connection to our setting, such geometrical tool was employed for instance by \citet{lin2019intrinsic} and \citet{zhou2021intrinsic}.

\medskip

Although the Wasserstein space of measures is \textit{not} an infinite dimensional Riemmanian-manifold, its quasi-Riemannian nature allows for a similar geometric construction: in fact, \citet{ambrosio2008construction} have shown that a parallel transport map acting on the tangent bundle of $\WW_2(\R^d)$ does exists, and in the case of of $1$-d measures it is even available in closed form: see \citet{chen2021wasserstein} and \citet{zhou2021intrinsic} -- who consider functional data taking values in the 1-d Wasserstein space. However, while existence of parallel transport in the finite-dimensional case has indeed been established, it is only available implicitly, even in the Gaussian case (see for instance \citet{takatsu2010wasserstein}), and -- critically -- not in closed form.  Furthermore, it is not even clear whether the results in \citet{ambrosio2008construction} on parallel transport, fundamental in all the above mentioned literature, hold in the infinite dimensional setting. 

\medskip

Therefore, given covariances $F,G\in\KK$ and tangent vectors $U\in\Tan(F)$ and $V\in\Tan(G)$, it is not obvious how one may assess a notion of (dis)similarity between the tangent vectors $U$ and $V$, as these \textit{live} on different spaces, with different geometries. This poses a significant obstacle, and even more so when one considers elements belonging to different \textit{tensor} spaces, i.e.\ spaces in the tangent bundle structure we have introduced to the space of flows in the previous section, and will be particularly  relevant in the next section, when we will compare need to compare operators acting on different tensor spaces.

\medskip

\begin{figure}
    \centering
    \includegraphics[width= .7\textwidth]{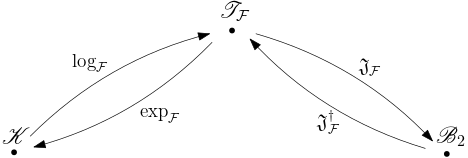}
    \caption{On the left side of the diagram, the logarithm map projecting $\KK$ onto the tangent space at a covariance operator $\F$, and its left-inverse, the exponential map at $\F$. On the right side of the diagram, the isometric embedding $\J$, identifying the tangent space $\Tan_\F$ with a (subset of) the space of Hilbert-Schimdt operators, $\BB_2$, and its left-inverse.}
    \label{fig:diagram}
\end{figure}

The solution we propose employs a similar geometrical strategy, and consists in canonically embedding elements of the tangent bundle into the space of Hilbert-Schmidt operators. For $F\in\KK$, considering the characterisation of the elements in the tangent space at $F$ given in \eqref{eq : tangent space}, we see that the following canonical embedding:
$$
\J_F \::\: \Tan_{F}\rightarrow\BB_2,\qquad \J_F U = U F^{\nicefrac{1}{2}}
$$
is well defined. In fact, it preserves the geometry of tangent vector:
$$
U,V \in \Tan_{F},\qquad 
\langle U,V\rangle_{\Tan_{F}} 
= \langle \J_F U, \J_F V \rangle_{2}
$$
where $\langle \cdot , \cdot \rangle_2$ denotes the Hilbert-Schmidt inner product.

\medskip

It is straightforward to then extend this construction to tensor Hilbert spaces: given a \textit{flow} of operators $\F\in \FF_C$, we define an embedding:
\begin{equation}\label{eq : embedding}
{\bJ}_\F \::\: \TT_{\F}\rightarrow L^2_{\BB_2},
\qquad ({\bJ}_\F \U)(t) = \U(t)\F(t)^{\nicefrac{1}{2}},
\end{equation}
where $L^2_{\BB_2}$ denotes the space of squared-norm integrable functions on the compact interval $[0,1]$ taking values in the space of Hilbert-Schmidt operators $\BB_2$:
$$
L^2_{\BB_2} = \left\lbrace \A \::\: [0,1] \rightarrow \BB_2 \::\: \int_{0}^1 \| \A(t)\|_{\BB_2}^2 dt <\infty \right\rbrace.
$$
which is a Hilbert space endowed with the inner product:
$$
\A,\B \in L^2_{\BB_2}, \qquad \langle \A,\B \rangle_{L^2_{\BB_2}}
=  \int_{0}^1 \langle \A(t), \B(t)\rangle_{\BB_2} dt = \int_{0}^1 \tr( \A(t) \B(t)^*) dt.
$$
Such construction thus defines an isomorphic embedding which provides a canonical tool for comparing vector fields at different tangent (tensor) spaces.

\subsection{Covariance and Karhunen-Loève Expansion of Random Flows}
\label{subsec : PCA}

Let $\F \: : \: t \mapsto \F(t)$ be a {random element} in $\FF_C$, i.e.\ a random flow of covariance operators. 
We wish to investigate the stochastic behaviour of $\F$ encoded in the second-order variation around its \Frechet mean flow. Our approach is consistent with that by \citet{dai2018principal}, \citet{lin2019intrinsic} and \citet{zhou2021intrinsic}, and leans on the quasi-Riemannian structure of $\KK$: we project the process $\F$ onto the tensor Hilbert space generated by its (Fr\'echet) mean $\M$, and consider the projected \textit{log-process}:
\begin{equation}
    \label{eq : log-process}
    t\mapsto \log _{\M(t)} (\F(t)) = T_{\M(t)}^{\F(t)} - \Id
    \in \Tan_{\M(t)},\qquad t\in[0,1],
\end{equation} 
For each $t\in[0,1]$, the log-process $\log _{\M(t)} (\F(t))$ takes values in the vector space $\Tan_{\M(t)}$. In fact, an important observation is that the log-process -- which we denote  by $\log _{\M} \F$ for simplicity -- is a random vector field along $\M$.

In some sense, the log-process originates from projecting the random flow along its mean trajectory. In fact, \citet[][Proposition 1]{santoro2023large} shows that:
\begin{equation}
    \label{eq : null average proj}
        \E T_{\M(t)}^{\F(t)} = I
    \quad \mu_{\M(t)} \text{- almost surely},\qquad \forall \: t\in[0,1]
\end{equation}

Consequently, the log-process has mean zero, and its (auto)covariance operator $\CC$ is
 defined by:
\begin{equation}\label{eq : log-process covariance}
\CC\::\:\TT_{\M} \rightarrow\TT_{\M},
\qquad \quad
\CC  = \E \left[ 
\log_{\M} \F \otimes 
\log_{\M} \F
\right]
\end{equation}
or, equivalently:
$$
\langle\CC U, V\rangle_{\M}:=
\mathbb{E}\left[
\langle\log_{\M} \F, U\rangle_{\M}
\langle\log_{\M}\F, V\rangle_{\M}
\right], \quad \text { for } U, V \in\TT_{\M}    
$$
Note that $\CC$ is nonnegative-definite and trace-class. Therefore, by \citet[Theorem 7.2.6][]{hsing2015theoretical}, $\CC$ admits a Mercer's decomposition in terms of its eigenvalue-eigenfuction pairs: 
$$
\CC=\sum_{k=1}^{\infty} \lambda_{k} {\Phi}_{k} \otimes {\Phi}_{k},
$$
where $\lambda_{1}>\lambda_{2}>\cdots>0$ are eigenvalues and ${\Phi}_{k}\:  : \TT_{\M}  \: \rightarrow  \TT_{\M}$ the corresponding eigenfunctions for $\CC$, forming a complete orthonormal system for $\mathscr{T}(\M)$.
In consequence, the log-process $\log _{\M} \F$ admits the following decomposition:
$$
\log _{\M} \F=\sum_{k=1}^{\infty} \langle \log _{\M}, {\Phi}_{k}\rangle_{\TT_{\M}} \boldsymbol{\Phi}_{k}
$$
where $( \langle \log _{\M} \F, {\Phi}_{k}\rangle_{\TT_{\M}})_{k\geq 1}$ are independent random variables with zero mean. 
In particular, by \citet[][Theorem 7.2.8]{hsing2015theoretical}, the above decomposition is optimal in the following sense:
$$
\lim_{N\rightarrow \infty} 
\E \| \log _{\M} \F - \sum_{k=1}^{N} \langle \log _{\M} \F, {\Phi}_{k}\rangle_{\TT_{\M}} {\Phi}_{k}\|_{\TT_{\F}}^2 = 0 
$$

Employing the canonical embedding we introduced in Subsection \ref{subsec : embedding} and the generalisation of Mercer's expansion proven in \citet{santoro2023karhunen}, we may in fact strengthen such result and obtain a \textit{pointwise} interpretation to the previous expansion, i.e.\ a Karhunen-Loève theorem, decomposing the covariance flows into uncorrelated components which achieve optimal dimensionality reduction uniformly over time.

\begin{proposition}[Karhunen-Loéve Expansion]\label{prop:KarhunenLoeve-pop}
Let $\F_1,\dots,\F_n$ be i.i.d.\ copies of $\F\in \FF_C$ satisfying \ref{assumption_flow_posdef}, \ref{assumption_flow_trbound} and \ref{assumption_flow_boundedmap}.
Define:
$$
\chi(t) := {\J}_{\M(t)}\log_{\M(t)} \F(t),
$$
Then, the following convergence holds uniformly:
$$
\lim_{N\rightarrow\infty}  \sup_{t\in[0,1]} \E  \left[\chi(t) - \sum_{j=1}^{N}
\langle\chi, \varphi_{j}\rangle_{L^2_{\H}}
\varphi_{j}(t)\right],
$$
where $\{\varphi_{j}\}_{j\geq 1}$ is a CONS of eigenfunctions of the integral operator associated to the operator-valued kernel $ K\;:\;[0,1]^2\rightarrow \BB_1(L^2_{\BB_1(\H)})$,
$
 K(s,t) := \E  \chi(s) \otimes \chi(t).
$
In particular, $\varphi_{j} = {\J}_{\M}\Phi_j, j\geq 1$.
\end{proposition}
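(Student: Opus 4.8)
The heart of the matter is that $\chi=\bJ_\M\log_\M\F$ transfers the whole problem into the \emph{linear} Hilbert space $L^2_{\BB_2}=L^2([0,1],\BB_2(\H))$, where the generalisation of Mercer's expansion proven in \citet{santoro2023karhunen} for Hilbertian functional data applies; the work lies in setting this up and in verifying its hypotheses. I would proceed as follows.

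First, recall that \ref{assumption_flow_trbound} and \ref{assumption_flow_posdef} guarantee, via Lemma \ref{lemma : continuity for BW barycenter}, that the \Frechet mean flow $\M$ exists, is unique, and lies in $\FF_C$; moreover \ref{assumption_flow_posdef}, together with the strict positivity of pointwise Bures--Wasserstein barycenters, forces $\M(t)\succ0$ for every $t$, so that $\ker\M(t)\subseteq\ker\F(t)$ and both $\log_{\M(t)}$ and $T_{\M(t)}^{\F(t)}$ are everywhere defined. Pointwise one has $\|\log_{\M(t)}\F(t)\|_{\Tan_{\M(t)}}^2=\Pi(\M(t),\F(t))^2=\|(T_{\M(t)}^{\F(t)}-\Id)\M(t)^{\nicefrac{1}{2}}\|_2^2\le(\|T_{\M(t)}^{\F(t)}\|_\infty+1)^2\tr\M(t)$; since $\Pi(F,G)\ge|\sqrt{\tr F}-\sqrt{\tr G}|$, continuity of $\M$ gives $\sup_t\tr\M(t)<\infty$, and \ref{assumption_flow_boundedmap} then yields $\E\int_0^1\|\log_{\M(t)}\F(t)\|_{\Tan_{\M(t)}}^2\,dt<\infty$. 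As $\bJ_\M$ is a linear isometry of $\TT_\M$ onto a closed subspace of $L^2_{\BB_2}$ (Subsection \ref{subsec : embedding}), $\chi$ is a measurable $L^2_{\BB_2}$-valued random element with $\E\|\chi\|_{L^2_{\BB_2}}^2<\infty$.

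Next I would transfer the spectral structure. The covariance operator of $\chi$ in $L^2_{\BB_2}$ is $\bJ_\M\CC\bJ_\M^{*}$: it is nonnegative and trace-class (its trace equals $\E\|\chi\|_{L^2_{\BB_2}}^2$), and, $\bJ_\M$ being an isometric isomorphism onto its range, it inherits the eigenvalues $\lambda_j$ of $\CC$ with eigenfunctions $\varphi_j=\bJ_\M\Phi_j$ (completed, on the orthogonal complement of $\range\bJ_\M$, by a basis at eigenvalue $0$); this already gives the ``in particular'' claim. By Bochner--Fubini this operator is the integral operator with operator-valued kernel $K(s,t)=\E[\chi(s)\otimes\chi(t)]$ on $\BB_2(\H)$. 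The theorem of \citet{santoro2023karhunen} then delivers $\sup_{t}\big(\tr_{\BB_2}K(t,t)-\sum_{j\le N}\lambda_j\|\varphi_j(t)\|_{\BB_2}^2\big)\to0$, i.e.\ the asserted uniform convergence of $\chi=\sum_j\langle\chi,\varphi_j\rangle_{L^2_{\BB_2}}\varphi_j$, \emph{provided} $K$ is continuous on $[0,1]^2$ (the i.i.d.\ copies $\F_1,\dots,\F_n$ are irrelevant to this population statement).

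The remaining --- and main --- obstacle is therefore the continuity of $K$, which I would reduce to mean-square continuity of $s\mapsto\chi(s)\in\BB_2$: one then bounds $\|K(s,t)-K(s',t')\|_{\BB_1(\BB_2)}\le\E[\|\chi(s)-\chi(s')\|_2\|\chi(t)\|_2]+\E[\|\chi(s')\|_2\|\chi(t)-\chi(t')\|_2]$ and concludes via the uniform $L^2$-bound above. Writing $\chi(t)=T_{\M(t)}^{\F(t)}\M(t)^{\nicefrac{1}{2}}-\M(t)^{\nicefrac{1}{2}}$, the ``flat'' terms are routine: $\Pi$-continuity of $\F$ and $\M$ upgrades pathwise to $\BB_1$-continuity --- weak operator convergence (from $W_2$-convergence of the associated Gaussians) plus trace convergence forces trace-norm convergence of nonnegative operators --- hence to $\BB_2$-continuity of the square roots (weak $\BB_2$-convergence from operator-norm continuity of $(\cdot)^{\nicefrac{1}{2}}$, combined with $\BB_2$-norm convergence from trace convergence, gives strong convergence, $\BB_2$ being Hilbert), and likewise for $(\M^{\nicefrac{1}{2}}\F\M^{\nicefrac{1}{2}})^{\nicefrac{1}{2}}$. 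The genuinely hard term is $T_{\M(s)}^{\F(s)}\M(s)^{\nicefrac{1}{2}}=\M(s)^{-\nicefrac{1}{2}}(\M(s)^{\nicefrac{1}{2}}\F(s)\M(s)^{\nicefrac{1}{2}})^{\nicefrac{1}{2}}$: the unbounded factor $\M(s)^{-\nicefrac{1}{2}}$ does not converge, and the optimal ``square root'' of $\F(s)$ is pinned down only through the transport map. Here I would invoke a stability estimate for transport maps along the flow, in the spirit of \citet{santoro2023large}, bounding $\|T_{\M(s)}^{\F(s)}\M(s)^{\nicefrac{1}{2}}-T_{\M(s')}^{\F(s')}\M(s')^{\nicefrac{1}{2}}\|_2$ by a multiple of $\Pi(\M(s),\M(s'))+\Pi(\F(s),\F(s'))$ with a random constant governed by the transport-map operator norms, obtaining pathwise continuity a.s., and then pass to mean square by dominated convergence, \ref{assumption_flow_boundedmap} furnishing the dominating function. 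Taming the transport maps along a flow in this singular, infinite-dimensional space --- with no parallel transport available and only pointwise moment control --- is the crux of the argument, and it is precisely the difficulty absent from the Riemannian treatments of \citet{lin2019intrinsic,zhou2021intrinsic}.
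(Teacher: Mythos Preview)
Your approach is the paper's approach: embed via $\bJ_\M$ into $L^2_{\BB_2}$ and invoke the Hilbertian Karhunen--Lo\`eve theorem of \citet{santoro2023karhunen}. In fact the paper does not supply a separate proof of this proposition in the Appendix at all; the argument is the paragraph preceding the statement, and the ``in particular'' clause is just the observation (which you also make) that $\bJ_\M$ is an isometry carrying the eigenstructure of $\CC$ to that of $\bJ_\M^{\otimes}\CC$. So you are not merely matching the paper --- you are doing strictly more, by attempting to verify the hypotheses of the cited result rather than taking them for granted.

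Two comments on your verification. First, your route to $\E\|\chi\|_{L^2_{\BB_2}}^2<\infty$ via \ref{assumption_flow_boundedmap} has a small gap: that assumption gives only \emph{pointwise} finiteness of $\E\|T_{\M(t)}^{\F(t)}\|_\infty^2$, not integrability in $t$, so your bound $(\|T\|_\infty+1)^2\tr\M(t)$ need not be integrable. The clean fix is immediate: use instead $\|\chi(t)\|_2^2=\Pi(\M(t),\F(t))^2\le\tr\M(t)+\tr\F(t)$ and \ref{assumption_flow_trbound}. Second, your identification of the continuity of $K$ (equivalently, $L^2$-continuity of $t\mapsto\chi(t)$) as the real content is exactly right, and your decomposition into ``flat'' square-root terms plus the singular transport-map term is sound; but the final step --- ``invoke a stability estimate for transport maps along the flow, in the spirit of \citet{santoro2023large}'' --- remains a promissory note rather than a proof. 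The paper simply does not confront this point, so you are not behind it; but if you want a self-contained argument you will need either to exhibit such an estimate explicitly, or to argue directly that $s\mapsto\M(s)^{-\nicefrac{1}{2}}(\M(s)^{\nicefrac{1}{2}}\F(s)\M(s)^{\nicefrac{1}{2}})^{\nicefrac{1}{2}}$ is $\BB_2$-continuous a.s.\ and then dominate.
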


\section{Estimation}
\label{sec : estimation}
\subsection{Infinite-Dimensional Case}

We now investigate the problem of estimation of the \Frechet mean and the spectral characteristics of a random flow of covariances, from independently and identically distributed (i.i.d.) replications. 
Given a sample of i.i.d.\ copies $\F_{1}, \cdots, \F_{n}$ of $\F\in\FF_C$, the population \Frechet mean flow is naturally estimated by a sample \Frechet mean flow, i.e. a minimizer of the empirical \Frechet functional:
 \begin{equation}
 \label{eq : sample frechet mean flow}
 \FF_C \ni \G \mapsto
 \frac{1}{n} \sum_{i=1}^{n} d^{2}\left(\F_{i}, \G\right).
 \end{equation}

As was the case with the population \Frechet functional, the optimisation problem \eqref{eq : sample frechet mean flow} may be reduced to a pointwise minimisation problem. For a given $t\in[0,1]$,  \citet[][Theorem 11]{masarotto2019procrustes} show that -- when one of the $n$ sample flows is non-degenerate at time $t$ -- there is a unique \Frechet mean operator at $t$, namely $\widehat\M_n(t) :=\argmin_{G \in \KK}\: \frac{1}{n} \sum_{i=1}^n \Pi^2(\F_i (t),G)$. By continuity and our assumption \ref{assumption_flow_posdef}, this allows us to obtain a continuous flow $\widehat\M_n\::\:t\mapsto \widehat\M_n(t)$, constituting the unique solution to \eqref{eq : sample frechet mean flow}, by means of \textit{pointwise} minimisation, followed by ``stitching" the \Frechet means of $\F_1(t),\dots,\F_n(t)$ over  $t\in[0,1]$. 

\begin{lemma}\label{lemma : empirical FMF via pointwise min}
Let $\{\F_i\}_{i=1}^n$ be an i.i.d.\ sample of random continuous flows in $\FF$, satisfying Assumptions \ref{assumption_flow_trbound} and \ref{assumption_flow_posdef}. The pointwise minimiser
$$\widehat{\M}(t)=\argmin_{G\in \KK} \frac{1}{n} \sum_{i=1}^n \Pi^2(\F_i (t),G) $$
exists uniquely for all sufficiently large $n$, with probability 1. It induces a continuous flow $t\mapsto \widehat\M(t)$ that uniquely minimizes \eqref{eq : sample frechet mean flow}, and is thus the empirical \Frechet mean flow of $\F_1,...,\F_n$.
\end{lemma}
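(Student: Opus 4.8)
The plan is to mirror the proof of Lemma~\ref{lemma : continuity for BW barycenter}, with the population Fréchet functional replaced by its empirical version and the expectation-based pointwise criterion $G\mapsto\E[\Pi^2(G,F_t)]$ replaced by the finite-sample average $G\mapsto\tfrac1n\sum_{i=1}^n\Pi^2(\F_i(t),G)$. Two ingredients are needed: (i) pointwise existence and uniqueness of $\widehat\M(t)$ for every $t\in[0,1]$, for all large $n$, almost surely; and (ii) continuity of the stitched map $t\mapsto\widehat\M(t)$. Once these are in place the reduction of \eqref{eq : sample frechet mean flow} to pointwise minimisation is automatic, exactly as in the population case: by Assumption~\ref{assumption_flow_trbound} each $\F_i$ almost surely has $\int_0^1\tr\F_i(t)\,dt<\infty$, so the empirical functional is finite on $\FF_C$; and since $d^2(\F_i,\G)=\int_0^1\Pi^2(\F_i(t),\G(t))\,dt$ decouples over $t$, a flow minimising $\tfrac1n\sum_i\Pi^2(\F_i(t),\cdot)$ pointwise, if continuous, minimises \eqref{eq : sample frechet mean flow}.

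For (i), \citet[Theorem~11]{masarotto2019procrustes} gives that the empirical Bures--Wasserstein barycenter of $\F_1(t),\dots,\F_n(t)$ exists and is unique as soon as at least one of these operators is injective. It thus suffices to show that, almost surely, there is a random $N$ with the property that for all $n\ge N$ and all $t\in[0,1]$ at least one $\F_i(t)$ is injective. For fixed $t$, Assumption~\ref{assumption_flow_posdef} gives $p(t):=\PP\{\F(t)>0\}>0$, whence $\PP\{\text{no }\F_i(t)\text{ injective},\,i\le n\}=(1-p(t))^n\to0$ and a Borel--Cantelli argument yields the pointwise statement. The substantive point is to make this uniform in $t$: I would first upgrade pointwise positivity to $p_\ast:=\inf_{t\in[0,1]}p(t)>0$, using $\Pi$-continuity of $\F$, compactness of $[0,1]$, and the robustness of the finite-rank conditions $\{F:\operatorname{rank}F\ge m\}$ (which, unlike injectivity itself, are stable under small perturbations) to obtain lower semicontinuity of $t\mapsto p(t)$; then the bad event at level $n$ has probability at most $(1-p_\ast)^n$ uniformly, and a covering of $[0,1]$ by finitely many intervals short enough that injectivity of a fixed sample persists across each of them (again by continuity, argued through finite-rank truncations) upgrades the pointwise conclusion to the uniform one. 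I expect this uniformisation to be the main obstacle, precisely because injectivity is \emph{not} an open condition on $(\KK,\Pi)$ in infinite dimensions, so the persistence step cannot be run directly and must be routed through the truncated ranks.

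For (ii), continuity of $t\mapsto\widehat\M(t)$ on the (now full) interval should follow from a stability property of the empirical barycenter map: each $\F_i$ is $\Pi$-continuous, so the tuple $(\F_1(t),\dots,\F_n(t))$ varies $\Pi$-continuously in $t$, and the Bures--Wasserstein barycenter depends continuously on its arguments wherever it is uniquely defined. I would obtain this either by reusing the continuity argument already established for Lemma~\ref{lemma : continuity for BW barycenter}, now applied to the empirical measures $\tfrac1n\sum_{i=1}^n\delta_{\F_i(t)}$, or directly from the convergence of the Procrustes iteration used to construct the barycenter together with its stability in the inputs.

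Granting (i) and (ii): for any competitor $\G\in\FF_C$,
\[
\frac1n\sum_{i=1}^n d^2(\F_i,\G)=\int_0^1\Big(\frac1n\sum_{i=1}^n\Pi^2(\F_i(t),\G(t))\Big)dt\;\ge\;\int_0^1\Big(\frac1n\sum_{i=1}^n\Pi^2(\F_i(t),\widehat\M(t))\Big)dt,
\]
with equality only if $\G(t)=\widehat\M(t)$ for Lebesgue-a.e.\ $t$, hence — both flows being continuous — for all $t$. Therefore $\widehat\M\in\FF_C$ is the unique minimiser of \eqref{eq : sample frechet mean flow}, which is the assertion.
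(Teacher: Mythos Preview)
Your overall structure—pointwise existence/uniqueness, then continuity of the stitched flow, then the decoupling inequality—matches the paper's, which simply writes ``the rest follows as in Lemma~\ref{lemma : continuity for BW barycenter}'' for the last two parts. Where you diverge is in step~(i): the paper's argument is pathwise and avoids Borel--Cantelli, the uniform lower bound $p_\ast=\inf_t p(t)>0$, and any probabilistic uniformisation. For each $t$ it sets $N_t=\inf\{m\ge1:\F_m(t)\succ0\}$, almost surely finite by Assumption~\ref{assumption_flow_posdef}; by continuity of the sample path $u\mapsto\F_{N_t}(u)$, injectivity of $\F_{N_t}$ persists on an open interval $I_t\ni t$; the $\{I_t\}$ cover $[0,1]$, a finite subcover over some $B\subset[0,1]$ exists by compactness, and $M:=\max_{t\in B}N_t$ is the a.s.\ finite threshold. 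This is considerably more direct than your route and dispenses with the lower-semicontinuity argument for $p(\cdot)$ altogether.

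You do correctly flag the one genuine subtlety, which the paper states without comment: in infinite dimensions $\{F\succ0\}$ is not $\Pi$-open, so ``$\F_{N_t}(t)\succ0$ implies $\F_{N_t}(u)\succ0$ for $u$ near $t$'' does not follow from continuity alone. However, your proposed workaround via finite-rank truncations does not actually circumvent this—injectivity is $\bigcap_m\{\operatorname{rank}\ge m\}$, a $G_\delta$ rather than open condition, and the uniqueness result from \citet{masarotto2019procrustes} requires genuine injectivity of at least one sample operator, not merely high rank. So both your route and the paper's share this obstacle; the paper's compactness argument is simply a much cleaner wrapper around it, and your additional layers (lower semicontinuity of $p$, uniform Borel--Cantelli, a second covering step) neither resolve nor sidestep it.
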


With the existence and uniqueness of the empirical \Frechet mean flow at hand, we consider its consistency. We show that the empirical \Frechet mean flow $\widehat\M_n$ converges to the population \Frechet mean $\M$ with respect to the integral metric $d$. This essentially follows from the pointwise convergence result in Theorem \ref{thm : FM consistency and CLT} and continuity.

\begin{theorem}\label{thm : FM consistency and CLT}
Let $\F_1,\dots,\F_n$ be i.i.d.\ copies of $\F\in \FF_C$ satisfying \ref{assumption_flow_trbound} and \ref{assumption_flow_posdef} . 
Let  $\M$ and $\widehat\M_n$ be the population and sample barycenters, respectively 
defined in \eqref{eq : pop frechet mean flow} and \eqref{eq : sample frechet mean flow}. Then: 
\begin{equation}\label{eq:consistency}
 d(\M, \widehat\M_n)= o_{p}(1).
\end{equation}
If furthermore  \ref{assumption_flow_boundedmap} and \ref{assumption_invertible_maps} hold:
\begin{equation}
    \label{eq:consistency+rate}
d(\M, \widehat \M_n) = O_{p}(n^{-\nicefrac{1}{2}}).
\end{equation}
\end{theorem}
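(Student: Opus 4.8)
The plan is to reduce both claims to \emph{pointwise} statements in the Bures--Wasserstein space $(\KK,\Pi)$ and then integrate in $t$, exactly as the paragraph preceding the theorem anticipates. By Lemmas~\ref{lemma : continuity for BW barycenter} and~\ref{lemma : empirical FMF via pointwise min} the flows $\M$ and $\widehat\M_n$ are continuous and characterised pointwise by
$$
M_t=\argmin_{G\in\KK}\E\big[\Pi^2(G,\F(t))\big],\qquad
\widehat\M_n(t)=\argmin_{G\in\KK}\tfrac1n\sum_{i=1}^n\Pi^2(\F_i(t),G)
$$
(the latter well defined for all large $n$, almost surely), so that $d^2(\M,\widehat\M_n)=\int_0^1\Pi^2(\widehat\M_n(t),M_t)\,dt$ with a continuous, hence Borel, integrand. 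It therefore suffices to control $t\mapsto\Pi(\widehat\M_n(t),M_t)$ in a manner compatible with integration over $[0,1]$.

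\emph{Consistency.} For fixed $t$, Assumptions \ref{assumption_flow_trbound}--\ref{assumption_flow_posdef} make $\F(t)$ an integrable random element of $(\KK,\Pi)$ with unique barycenter $M_t$, and the empirical Bures--Wasserstein barycenter is consistent, $\Pi(\widehat\M_n(t),M_t)\to 0$, by the law of large numbers for barycenters in $(\KK,\Pi)$ (\citet{masarotto2019procrustes}; see also \citet{santoro2023large}). To integrate, I would use the crude envelope $\Pi^2(\widehat\M_n(t),M_t)\le\tr\widehat\M_n(t)+\tr M_t$ together with the bound $\tr\widehat\M_n(t)\le\tfrac4n\sum_{i=1}^n\tr\F_i(t)$ — obtained by evaluating the empirical Fréchet functional at $G=0$ and using $\Pi(F,G)\ge|\sqrt{\tr F}-\sqrt{\tr G}|$ — noting that $\sup_t\tr M_t<\infty$ because $\M\in\FF_C$ and the trace is $\Pi$-continuous, while $\int_0^1\tfrac1n\sum_i\tr\F_i(t)\,dt\to\E\int_0^1\tr\F(t)\,dt<\infty$ almost surely by \ref{assumption_flow_trbound} and the strong law. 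A generalised dominated convergence argument in $t$ (envelopes converging in $L^1[0,1]$) then yields $\int_0^1\Pi^2(\widehat\M_n(t),M_t)\,dt\to 0$, which is \eqref{eq:consistency}; if only pointwise consistency in probability is available, one reaches the same conclusion by working first with the truncations $\Pi^2\wedge 1$ (bounded convergence in $t$ plus Fubini) and then controlling the residual mass on $\{t:\Pi^2(\widehat\M_n(t),M_t)>1\}$ via the envelope and its uniform integrability.

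\emph{Rate.} Assume in addition \ref{assumption_flow_boundedmap}--\ref{assumption_invertible_maps}; by the discussion following \ref{assumption_invertible_maps} these force $\F(t)$, and hence $M_t$, to be strictly positive, so every $\log_{M_t}\F(t)=T_{M_t}^{\F(t)}-\Id$ is well defined, with $\Pi^2(M_t,\F(t))=\|\log_{M_t}\F(t)\|_{\Tan_{M_t}}^2\le(\|T_{M_t}^{\F(t)}\|_\infty+1)^2\tr M_t$ of finite expectation by \ref{assumption_flow_boundedmap}. The pointwise rate $\E\,\Pi^2(\widehat\M_n(t),M_t)\le C(t)/n$ then follows from the usual $M$-estimation decomposition: since $\widehat\M_n(t)$ minimises $V_n(G):=\tfrac1n\sum_i\Pi^2(\F_i(t),G)$,
$$
V(\widehat\M_n(t))-V(M_t)\ \le\ (V-V_n)(\widehat\M_n(t))-(V-V_n)(M_t),\qquad V(G):=\E\,\Pi^2(G,\F(t)),
$$
and one combines (a) the quadratic variance inequality $V(G)-V(M_t)\ge\kappa(t)\,\Pi^2(G,M_t)$, valid in the Bures--Wasserstein geometry because \ref{assumption_invertible_maps} keeps the transport maps near $M_t$ uniformly invertible so that $\kappa(t)>0$, with (b) control of the empirical fluctuation on the right via a first-order expansion of $G\mapsto\Pi^2(\F_i(t),G)$ around $M_t$, whose leading term averages the log-process $\log_{M_t}\F_i(t)$ and hence has fluctuation variance governed by $\E\|\log_{M_t}\F(t)\|_{\Tan_{M_t}}^2<\infty$, and whose remainder is handled by the Lipschitz estimates for $\log_{M_t}$ afforded by \ref{assumption_invertible_maps} (see \citet{masarotto2019procrustes,santoro2023large}). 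Finally, by Fubini, $\E\,d^2(\M,\widehat\M_n)=\int_0^1\E\,\Pi^2(\widehat\M_n(t),M_t)\,dt\le n^{-1}\int_0^1 C(t)\,dt$, and \eqref{eq:consistency+rate} follows from Markov's inequality once $\int_0^1 C(t)\,dt<\infty$.

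\emph{Main obstacle.} Two points carry the weight. First, establishing the quadratic variance inequality (a), together with the remainder bound in (b), for the infinite-dimensional, singular Bures--Wasserstein functional rather than importing it as a black-box hypothesis; this is where the specific geometry of Gaussian optimal transport must be exploited. Second, and most delicate for the present statement, is the \emph{uniformity in $t$} of $C(t)$: one must show $\inf_t\kappa(t)>0$ and $\sup_t\E\|\log_{M_t}\F(t)\|_{\Tan_{M_t}}^2<\infty$, which requires combining \ref{assumption_flow_boundedmap}--\ref{assumption_invertible_maps} with the continuity of $\M$ and of the flows to propagate the pointwise moment and non-degeneracy controls along $[0,1]$, so that $\int_0^1 C(t)\,dt<\infty$.
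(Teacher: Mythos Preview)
For the consistency part \eqref{eq:consistency}, your approach coincides with the paper's: pointwise convergence $\Pi(\widehat\M_n(t),M_t)\to 0$ from prior work, the trace envelope $\Pi^2\le\tr\widehat\M_n(t)+\tr M_t$, control of $\tr\widehat\M_n(t)$ via the sample average (the paper uses the sharper operator inequality $\widehat\M_n(t)\preceq\tfrac1n\sum_i\F_i(t)$ from \citet{masarotto2019procrustes} rather than your factor-$4$ bound, but this is cosmetic), and dominated convergence in $t$.

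For the rate \eqref{eq:consistency+rate} the routes diverge. The paper does not re-derive a variance inequality and fluctuation bound; it invokes the pointwise rate $\Pi^2(\widehat\M_n(t),M_t)=O_p(n^{-1})$ directly from \citet[Theorem~3]{santoro2023large}---where the analysis you sketch in (a)--(b) has already been carried out---and then passes to the integrated metric $d$ by a soft compactness argument: supposing $\sup_t n^{1/2-\alpha}\Pi(\M(t),\widehat\M_n(t))$ failed to be bounded on a set of probability at least $1-\varepsilon$, one extracts a convergent subsequence $t_{n_k}\to t^\star$ in $[0,1]$ and uses the a.s.\ continuity of $s\mapsto\Pi(\M(s),\widehat\M_n(s))$ to contradict the pointwise rate at $t^\star$. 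Your programme instead aims for an $L^2$ bound $\E\,d^2=O(n^{-1})$ by integrating $C(t)$, which is a priori stronger and sidesteps the compactness step, but shifts the burden to the uniformity $\int_0^1 C(t)\,dt<\infty$ you correctly flag as the main obstacle. Under \ref{assumption_invertible_maps} the two-sided bound $\varepsilon^{-1}\mathcal{R}(t)\preceq\F(t)\preceq\varepsilon\mathcal{R}(t)$ with $\mathcal{R}\in\FF_C$ does deliver $\inf_t\kappa(t)>0$ and $\sup_t\E\|\log_{M_t}\F(t)\|^2_{\Tan_{M_t}}<\infty$ (since $\sup_t\tr\F(t)$ is a.s.\ finite), so your programme is viable; in effect it re-proves the cited pointwise result with explicit, $t$-uniform constants and then integrates, whereas the paper cites the pointwise rate as a black box and leans on compactness of $[0,1]$ to globalise.
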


Having proven consistency of the empirical \Frechet mean flow, we can now consider the estimation of the covariance structure. By \citet[][Theorem 2]{masarotto2022transportation},  the  optimal map from the empirical \Frechet mean flow $\widehat\M_n(t)$ to each $\F_i(t)$ is a proper bounded operator for all $t\in [0,1]$. This guarantees that the log process  $t\mapsto \log _{\widehat{\M}_n(t)} (\F_i (t))$ is well-defined as a (mean-zero) random element in $\mathscr{T}_{\widehat{\M}}$.
 The random flow's covariance operator  \eqref{eq : log-process covariance} may thus be estimated by its finite-sample version:
\begin{equation}
\label{eq : estimated covariance}
\widehat \CC\::\:\TT_{\widehat\M_n} \rightarrow\TT_{\widehat\M_n}, \quad  \qquad 
\widehat{\CC}=\frac{1}{n} \sum_{i=1}^{n}\left(\log _{ \widehat{\M}} \F_{i}\right) \otimes\left(\log _{ \widehat{\M}} \F_{i}\right),    
\end{equation}
which in turn admits the Mercer's eigendecomposion $\widehat{\CC}=\sum_{k=1}^{\infty} \widehat{\lambda}_{k} \widehat{{\Phi}}_{k} \otimes \widehat{{\Phi}}_{k}$ with eigenvalues $\widehat{\lambda}_{1}>\widehat{\lambda}_{2}>\dots>0$ corresponding to the CONS of eigenfunctions $\widehat{{\Phi}}_{k}$, and Karhunen-Loève expansion for the log-process given by:
$$
\log _{\widehat\M_n} \F=\sum_{k=1}^{\infty} \widehat\xi_{k} \widehat{{\Phi}}_{k}.
$$

Establishing convergence of the Karhunen-Loève expansion is more involved, and essentially translates to showing the convergence of the empirical covariance operator $\widehat{\CC}$ to the population covariance ${\CC}$, which are operators acting on different tangent spaces, mainly $\TT_{\M}$ and $\TT_{\widehat\M_n}$, respectively.
However, we may lift the isometric embedding \eqref{eq : embedding} to define a linear mapping which converts a bounded operator acting on the tensor Hilbert space $\TT_{\F}$, for some $\F\in\FF_C$, to a bounded operator acting on $L^2_{\BB_2}$. 
We define such operation on rank one operators by ${\bJ}^{\otimes}_\F\mathscr{G}\::\:L^2_{\BB_2}\rightarrow L^2_{\BB_2}$ . For $U,V\in \TT_{\F}$ define:
$$
{\bJ}^{\otimes}_\F(U\otimes V) :=
{\bJ}_\F(U)\otimes  {\bJ}_\F(V).
$$
Such construction provides a canonical tool for comparing the population and sample log-process auto-covariance operators. Indeed, we may consider:
\begin{equation}\label{eq : pop - samp embedded covs}
{\bJ}^{\otimes}_\M\CC  = \E \left[ {\bJ}_\M\log_{\M} \F \otimes 
{\bJ}_\M\log_{\M} \F
\right],
\qquad
{\bJ}^{\otimes}_{\widehat\M}\widehat{\CC}=\frac{1}{n} \sum_{i=1}^{n}\left({\bJ}_{\widehat\M}\log _{ \widehat{\M}} \F_{i}\right) \otimes\left({\bJ}_{\widehat\M}\log _{ \widehat{\M}} \F_{i}\right),    
\end{equation}
which are operators acting on \textit{the same space}  $L^2_{\BB_2}$. In particular, ${\bJ}^{\otimes}_\M\CC $ and ${\bJ}^{\otimes}_{\widehat\M}\widehat\CC_n$ may be compared using the natural norm of the space ${\BB( L^2_{\BB_2})}$ of bounded linear operators on $L^2_{\BB_2}$.

\begin{theorem}\label{thm : cov consistency}
Let $\F_1,\dots,\F_n$ be i.i.d.\ copies of a time-varying covariance operator $\F\in \FF_C$ satisfying  \ref{assumption_flow_trbound}, \ref{assumption_flow_posdef} and \ref{assumption_flow_boundedmap}. Then:
\begin{equation}
    \label{eq:covariance_consistency}
    \big\Vert {\bJ}^{\otimes}_\M\CC  - {\bJ}^{\otimes}_{\widehat\M_n}\widehat\CC_n  \big\Vert_{{\BB( L^2_{\BB_2})}}= o_{p}(1).
\end{equation}
\end{theorem}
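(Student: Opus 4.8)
The plan is to decompose the error into two pieces: one accounting for the change of base point (from the sample Fréchet mean flow $\widehat{\M}_n$ to the population one $\M$) and one accounting for the usual empirical-vs-population fluctuation of a sample covariance. Concretely, introduce the intermediate ``oracle'' operator
\begin{equation*}
\CC_n^{\star} := \frac{1}{n}\sum_{i=1}^n \bigl({\bJ}_{\M}\log_{\M}\F_i\bigr)\otimes\bigl({\bJ}_{\M}\log_{\M}\F_i\bigr),
\end{equation*}
which acts on $L^2_{\BB_2}$ and uses the \emph{true} mean flow as base point. Then by the triangle inequality
\begin{equation*}
\bigl\Vert {\bJ}^{\otimes}_{\M}\CC - {\bJ}^{\otimes}_{\widehat\M_n}\widehat\CC_n\bigr\Vert_{\BB(L^2_{\BB_2})}
\;\le\;
\underbrace{\bigl\Vert {\bJ}^{\otimes}_{\M}\CC - \CC_n^{\star}\bigr\Vert_{\BB(L^2_{\BB_2})}}_{(\mathrm{I})}
+\underbrace{\bigl\Vert \CC_n^{\star} - {\bJ}^{\otimes}_{\widehat\M_n}\widehat\CC_n\bigr\Vert_{\BB(L^2_{\BB_2})}}_{(\mathrm{II})}.
\end{equation*}
Term $(\mathrm{I})$ is a law of large numbers in the Hilbert space $\BB_2(L^2_{\BB_2})$ of Hilbert--Schmidt operators on $L^2_{\BB_2}$: the summands are i.i.d.\ rank-one operators, and by the isometry property of ${\bJ}_{\M}$ together with Assumption \ref{assumption_flow_boundedmap} (and the integrability that it yields via \ref{assumption_flow_trbound}) their second moment is finite, so $(\mathrm{I})=o_p(1)$ follows from the strong law in the separable Hilbert space $\BB_2(L^2_{\BB_2})$, whose norm dominates the operator norm.

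For term $(\mathrm{II})$, the two sums have the same number of terms but differ in the base point of the log map and embedding, so I would bound it termwise:
\begin{equation*}
(\mathrm{II}) \le \frac{1}{n}\sum_{i=1}^n \Bigl\Vert \bigl({\bJ}_{\M}\log_{\M}\F_i\bigr)^{\otimes 2} - \bigl({\bJ}_{\widehat\M_n}\log_{\widehat\M_n}\F_i\bigr)^{\otimes 2}\Bigr\Vert_{\BB(L^2_{\BB_2})},
\end{equation*}
and then use the elementary identity $a\otimes a - b\otimes b = (a-b)\otimes a + b\otimes(a-b)$, so that each summand is controlled by $\Vert a_i - b_i\Vert_{L^2_{\BB_2}}\bigl(\Vert a_i\Vert_{L^2_{\BB_2}}+\Vert b_i\Vert_{L^2_{\BB_2}}\bigr)$ with $a_i = {\bJ}_{\M}\log_{\M}\F_i$ and $b_i = {\bJ}_{\widehat\M_n}\log_{\widehat\M_n}\F_i$. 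Here $a_i(t) = \bigl(T_{\M(t)}^{\F_i(t)}-\Id\bigr)\M(t)^{1/2}$ and $b_i(t) = \bigl(T_{\widehat\M_n(t)}^{\F_i(t)}-\Id\bigr)\widehat\M_n(t)^{1/2}$; the key is a quantitative stability estimate showing that $\sup_{t}\Vert a_i(t)-b_i(t)\Vert_{\BB_2}$ is small, in probability, uniformly over $i$, when $d(\M,\widehat\M_n)$ is small. This is where the continuity of the transport map and of the square root as functions of the base covariance enters, invoking the bound from \citet{masarotto2022transportation} that guarantees $T_{\widehat\M_n(t)}^{\F_i(t)}$ is a genuine bounded operator, combined with Theorem \ref{thm : FM consistency and CLT} which gives $d(\M,\widehat\M_n)=o_p(1)$.

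The main obstacle is precisely this last stability estimate for term $(\mathrm{II})$: one must show that the map $(\text{base point},\text{target})\mapsto {\bJ}_{\text{base}}\log_{\text{base}}(\text{target})$ is continuous in the appropriate (integrated Hilbert--Schmidt) sense, uniformly enough in the target to survive the average over $i$. Since we have no parallel transport and the space is singular, this continuity cannot be taken for granted and must be established by hand from the closed-form expression \eqref{eq : transport map} for the transport map, tracking how perturbations of the base covariance in the $\Pi$-metric propagate through the operator square roots and inverses; a uniform-integrability argument (Vitali-type) will likely be needed to pass from pointwise-in-$t$ convergence of the integrands to convergence of the integrals and then of the operator norms, using \ref{assumption_flow_boundedmap} as the domination. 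Once both terms are shown to be $o_p(1)$, the conclusion follows.
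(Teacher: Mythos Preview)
Your decomposition into $(\mathrm{I})$ and $(\mathrm{II})$, the SLLN for $(\mathrm{I})$, and the tensor splitting $a\otimes a - b\otimes b = (a-b)\otimes a + b\otimes(a-b)$ for $(\mathrm{II})$ all match the paper's proof exactly. The one place where the paper is sharper is precisely the point you flag as the main obstacle: rather than seeking a stability bound that is ``uniform over $i$'', the paper observes that the summands $\|a_i-b_i\|_{L^2_{\BB_2}}$, though dependent through the shared $\widehat\M_n$, are \emph{identically distributed} in $i$; hence
\[
\E\Bigl[\tfrac{1}{n}\sum_{i=1}^n\|a_i-b_i\|_{L^2_{\BB_2}}\Bigr]=\E\|a_1-b_1\|_{L^2_{\BB_2}},
\]
and it suffices to show the right-hand side tends to zero. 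This collapses the problem to a single index, after which the argument is as you outline: split $a_1-b_1$ into a piece $\|T_1(\M^{1/2}-\widehat\M_n^{1/2})\|$ controlled via \ref{assumption_flow_boundedmap} and Theorem~\ref{thm : FM consistency and CLT}, the piece $\|\M^{1/2}-\widehat\M_n^{1/2}\|_{L^2_{\BB_2}}$ itself, and a piece $\|(T_1-\widehat T_{n,1})\widehat\M_n^{1/2}\|$ for which pointwise-in-$t$ stability of transport maps (Corollary~2 in \citet{santoro2023large}) plus dominated convergence suffices. In particular no $\sup_t$ control is needed anywhere --- the norm $\|\cdot\|_{L^2_{\BB_2}}$ is already an integral over $t$, so pointwise convergence of the integrand together with the domination from \ref{assumption_flow_boundedmap} is enough.
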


\subsection{Finite-Dimensional Case: Covariance Matrix Flows}

In this section we consider in detail the finite dimensional case $\dim(\H)=d<\infty$, i.e.\ of flows of covariance \textit{matrices}, explaining the simplification that arises compared to the functional case, and the stronger result that follow.

The space of finite-dimensional, non-negative definite covariance matrices does not constitute a Riemmanian manifold, except in the full rank case. It is rather a \textit{stratified} space in which the different strata correspond to the Reproducing Kernel Hilbert Spaces (RKHS) generated, 
 as explained in \citet{takatsu2010wasserstein}. Therefore, previous results as \citet{lin2019intrinsic} do not apply here. However, the finite dimensional Bures-Wasserstein space undoubtedly provides various simplifications when considering time-varying covariances. Firstly, in finite dimensions, strictly positive (regular) covariances are \textit{invertible}. The \Frechet mean of a family of covariances that are regular with positive probability is guaranteed to be regular (\citet{kroshnin2021statistical}). Optimal maps between regular covariances always define proper, bounded linear mappings. In fact, the finite dimensional optimal transportation problem is stable, in that optimal maps are \Frechet differentiable -- while even continuity often fails in infinite dimensions (\citet{kroshnin2021statistical}, \citet{santoro2023large}). Furthermore, all regular covariance matrices are equivalent, and  -- consequently -- all tangent spaces to regular covariances encompass the same set of elements. These elements drastically simplify the workload in the finite dimensional case, especially when dealing with elements in the tangent bundle.

\medskip
 Below, we state stronger results we may prove for time-varying covariance matrices, in particular concerning estimation of their \Frechet mean flows and second order operator.

\begin{theorem}\label{thm : FM consistency - finite dim}
Let $\F_1,\dots,\F_n$ be i.i.d.\ copies of a time-varying covariance matrices $\F\in \FF_C$ over $\R^d$ for some $1\leq d<\infty$, satisfying  \ref{assumption_flow_trbound} and \ref{assumption_flow_posdef}. Then:
\begin{equation}\label{eq:consistency-finite dim}
d(\M, \widehat\M_n)= O_p(n^{-\nicefrac{1}{2}}).
\end{equation}
Furthermore, if \ref{assumption_invertible_maps} is satisfied with $\mathcal{R}(t) = R\succ 0$, then:
\begin{equation}\label{eq:consistency-finite dim-uniform}
    \sup_{t\in[0,1]}\E \left[ \Pi\big(\M(t), \widehat\M_n(t)\big)^2\right]= O(n^{-1})
\end{equation}
\end{theorem}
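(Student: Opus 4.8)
The plan is to leverage the pointwise structure established in Lemma~\ref{lemma : empirical FMF via pointwise min} together with the finite-dimensional simplifications (invertibility of regular covariances, stability of optimal maps, all tangent spaces at regular covariances coinciding) to first obtain a pointwise $n^{-1/2}$-rate with uniformly controlled constant, and then integrate in $t$ to conclude. First I would fix $t\in[0,1]$ and study the one-dimensional-in-time object $\widehat\M_n(t)=\argmin_{G\in\KK}\frac1n\sum_i \Pi^2(\F_i(t),G)$ versus $M_t=\argmin_{G\in\KK}\E[\Pi^2(G,F_t)]$. Under \ref{assumption_flow_posdef}, $M_t\succ 0$, so by \citet{kroshnin2021statistical} the finite-dimensional Bures--Wasserstein Fréchet mean is regular, the optimal maps $T_{M_t}^{\F_i(t)}$ are proper bounded (in fact $\Frechet$-differentiable) operators, and the map $G\mapsto \Pi^2(G,F_t)$ is smooth with a non-degenerate Hessian at $M_t$; this yields the quadratic-growth / curvature lower bound that, combined with the $L^1$-moment \ref{assumption_flow_trbound}, gives a standard M-estimation $\sqrt n$-rate for $\Pi(M_t,\widehat\M_n(t))$ at each fixed $t$ — this is the content of the first display \eqref{eq:consistency-finite dim}, once one checks the rate constant is integrable in $t$ (it is, since it is controlled by the pointwise second moment of $\|\F(t)\|_1$, which integrates by \ref{assumption_flow_trbound}). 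Integrating $\Pi^2(M_t,\widehat\M_n(t))$ over $[0,1]$ and applying Fubini gives $\E[d^2(\M,\widehat\M_n)] = O(n^{-1})$, hence \eqref{eq:consistency-finite dim} by Markov/Chebyshev.

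For the uniform statement \eqref{eq:consistency-finite dim-uniform}, the key extra ingredient is \ref{assumption_invertible_maps} with a \emph{time-invariant} envelope $\mathcal R(t)=R\succ 0$: this forces every realisation $\F_i(t)$ to satisfy $\varepsilon^{-1}R\preceq \F_i(t)\preceq \varepsilon R$, so all the operators involved live in a fixed compact order-interval of regular covariances, uniformly in $t$ and $\omega$. On this set the Bures--Wasserstein Hessian of $G\mapsto\Pi^2(G,F)$ is bounded below by a strictly positive constant and the log-maps have operator norm bounded by a deterministic constant, both \emph{independently of $t$}. One then obtains the curvature inequality $\Pi^2(M_t,\widehat\M_n(t)) \lesssim \langle \text{(empirical minus population gradient at }M_t), \log_{M_t}\widehat\M_n(t)\rangle$ with a uniform-in-$t$ implicit constant, bounds the right side by Cauchy--Schwarz, and controls $\E\|\frac1n\sum_i(\log_{M_t}\F_i(t) - \E\log_{M_t}\F(t))\|^2 = O(n^{-1})$ with constant uniform in $t$ (again because the summands are uniformly bounded on the compact order-interval). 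Taking $\sup_{t}$ preserves the $O(n^{-1})$ bound, giving \eqref{eq:consistency-finite dim-uniform}.

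The main obstacle I anticipate is making the curvature / quadratic-growth constant genuinely \emph{uniform in $t$}, rather than merely finite for each $t$. In infinite dimensions this is exactly where things break (optimal maps need not be bounded, the Hessian can degenerate), which is why the paper isolates the finite-dimensional case; the fix is precisely \ref{assumption_invertible_maps} with $\mathcal R(t)=R$, which confines everything to a $t$-independent compact family of strictly positive matrices, on which all the relevant quantities (smallest eigenvalue of the Hessian, Lipschitz constants of $G\mapsto T_{M}^{G}$, operator-norm bounds on log-maps) are bounded uniformly by elementary compactness. A secondary technical point is the measurability and joint continuity in $t$ needed to apply Fubini and to ensure $t\mapsto\widehat\M_n(t)$ is a bona fide element of $\FF_C$; this is already handled by Lemma~\ref{lemma : empirical FMF via pointwise min}, so I would simply cite it. Everything else (the M-estimation argument, Cauchy--Schwarz, the variance bound for an i.i.d.\ average of bounded tangent-space elements) is routine once the uniform constants are in hand.
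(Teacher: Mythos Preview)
Your treatment of \eqref{eq:consistency-finite dim-uniform} is essentially what the paper does: assumption \ref{assumption_invertible_maps} with $\mathcal{R}(t)=R\succ 0$ pins the eigenvalues of $\F(t)$ into a $t$-independent interval $(c_m,c_M)$, and the paper then simply invokes \citet[Corollary~17]{le2022fast}, which packages precisely the uniform strong-convexity/curvature argument you sketch into the non-asymptotic bound
\[
\Pi^2\big(\M(t),\widehat\M_n(t)\big)\ \le\ \frac{4}{n\,(1-\kappa+\kappa^{-1})^2}\,\E\big[\Pi^2(\M(t),\F(t))\big],\qquad \kappa=c_M/c_m,
\]
after which taking $\sup_t$ is immediate. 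So for the second display you and the paper agree, with the paper outsourcing the curvature computation to an external reference.

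For \eqref{eq:consistency-finite dim}, however, your route has a real gap. You write that the pointwise rate constant is ``controlled by the pointwise second moment of $\|\F(t)\|_1$, which integrates by \ref{assumption_flow_trbound}'' --- but \ref{assumption_flow_trbound} is a \emph{first}-moment condition, $\E\int_0^1\|\F(t)\|_1\,dt<\infty$, not a second-moment one. More importantly, under \ref{assumption_flow_trbound}+\ref{assumption_flow_posdef} alone (with no \ref{assumption_invertible_maps}) the curvature lower bound at $M_t$ depends on the spectrum of $M_t$ and on the law of $\F(t)$ in a way that is merely finite for each $t$ but need not be uniformly --- or even integrably --- controlled in $t$; nothing prevents $\lambda_{\min}(M_t)\to 0$ along a sequence of times, in which case your implicit constant blows up and the Fubini step $\E[d^2(\M,\widehat\M_n)]=\int_0^1\E[\Pi^2(\M(t),\widehat\M_n(t))]\,dt=O(n^{-1})$ is unjustified.

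The paper sidesteps this entirely. It observes that in finite dimensions the pointwise rate $\Pi^2(\M(t),\widehat\M_n(t))=O_p(n^{-1})$ already holds for each fixed $t$ under \ref{assumption_flow_trbound}+\ref{assumption_flow_posdef} (this is exactly where finite-dimensionality substitutes for \ref{assumption_flow_boundedmap} and \ref{assumption_invertible_maps}), and then reuses verbatim the argument from the proof of \eqref{eq:consistency+rate} in Theorem~\ref{thm : FM consistency and CLT}: a compactness-of-$[0,1]$ plus continuity-of-$t\mapsto\Pi(\M(t),\widehat\M_n(t))$ subsequence/contradiction argument that upgrades pointwise $O_p(n^{-1})$ directly to $d(\M,\widehat\M_n)=O_p(n^{-1/2})$, without ever needing the rate constant to be integrable in $t$. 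Your Fubini-in-expectation strategy would be cleaner if it worked, but it requires control you do not have under the stated hypotheses.
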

The first part of the Theorem's statement follows similarly to the infinite dimensional case, but requires weaker assumptions. In the second part of the statement, note that the rate is uniform, rather then integral. Here, the assumption \ref{assumption_invertible_maps} allows us to bound the minimum and maximum eigenvalues of $\F(t)$ almost surely, uniformly over $t$, from above and below, away from $0$ and $\infty$ respectively. This, in particular, allows us to apply the finite-dimensional results in \citet{le2022fast} to deduce the uniform convergence result.

\medskip

Next we state the following stronger result concerning estimation of the second-order operator in finite dimensions.
\begin{theorem}\label{thm : cov consistency - finite dim}
Let $\F_1,\dots,\F_n$ be i.i.d.\ copies of a time-varying covariance operator $\F\in \FF_C$ over a $\R^d$ for some $1\leq d<\infty$, satisfying  \ref{assumption_flow_trbound}, \ref{assumption_flow_posdef} and \ref{assumption_flow_boundedmap}. Then:
\begin{equation}
    \label{eq:covariance_consistency_FINDIM}
    \big\Vert {\bJ}^{\otimes}_\M\CC  - {\bJ}^{\otimes}_{\widehat\M_n}\widehat\CC_n  \big\Vert_{{\BB( L^2_{\BB_2})}}= O_{p}(n^{-\nicefrac{1}{2}})
\end{equation}
\end{theorem}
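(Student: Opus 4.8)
The plan is to revisit the proof of Theorem \ref{thm : cov consistency} (the $o_p(1)$ infinite-dimensional statement) and sharpen every $o_p(1)$ bound into an $O_p(n^{-\nicefrac{1}{2}})$ bound, exploiting the simplifications afforded by finite dimensions that are catalogued in the paragraph preceding Theorem \ref{thm : FM consistency - finite dim}. Write $\Delta := {\bJ}^{\otimes}_\M\CC - {\bJ}^{\otimes}_{\widehat\M_n}\widehat\CC_n$, viewed as an operator on $L^2_{\BB_2}$. I would split this into three pieces via the triangle inequality: (i) a ``pure Monte-Carlo'' term comparing ${\bJ}^{\otimes}_\M\CC$ to the \emph{oracle} empirical operator $\frac{1}{n}\sum_i ({\bJ}_\M\log_\M\F_i)\otimes({\bJ}_\M\log_\M\F_i)$, which is a genuine i.i.d.\ average of rank-one operators with finite second moment (by \ref{assumption_flow_boundedmap}, which in finite dimensions via \eqref{eq : inner product on tangent space} controls the $\TT_\M$-norm of the log-process uniformly in $t$), hence $O_p(n^{-\nicefrac{1}{2}})$ by the Hilbert-space CLT / Chebyshev in $\BB(L^2_{\BB_2})$; (ii) a term capturing the replacement of $\M$ by $\widehat\M_n$ inside the embedding and the log map, i.e.\ $\frac1n\sum_i({\bJ}_\M\log_\M\F_i)^{\otimes 2} - \frac1n\sum_i({\bJ}_{\widehat\M_n}\log_{\widehat\M_n}\F_i)^{\otimes 2}$; and (iii) lower-order cross terms absorbed into (i)–(ii).

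The crux is term (ii). Here I would use the finite-dimensional stability of optimal maps: since $\mathcal{R}(t)\equiv R\succ 0$ in the strongest case — and even under \ref{assumption_flow_posdef} alone, by \citet{kroshnin2021statistical} the barycenter is regular and by \citet{le2022fast} the map $\widehat\M_n(t)\mapsto T_{\widehat\M_n(t)}^{\F_i(t)}$ depends Lipschitz-continuously (in operator norm, uniformly over $t$ and over the relevant compact set of matrices) on $\widehat\M_n(t)$. Concretely, $\|{\bJ}_{\widehat\M_n}\log_{\widehat\M_n}\F_i - {\bJ}_\M\log_\M\F_i\|_{L^2_{\BB_2}}$ is bounded by a constant times $\sup_t\|T_{\widehat\M_n(t)}^{\F_i(t)} - T_{\M(t)}^{\F_i(t)}\|_\infty$ plus a term from $\widehat\M_n(t)^{1/2}-\M(t)^{1/2}$; both are controlled by $\sup_t\Pi(\M(t),\widehat\M_n(t))$ (using that matrix square root and matrix inverse are Lipschitz on sets bounded away from $0$ and $\infty$ in the spectrum, which is exactly what \ref{assumption_invertible_maps} with $R\succ 0$ guarantees, and what regularity of the barycenter guarantees near $\M$). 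Then Theorem \ref{thm : FM consistency - finite dim} — specifically the uniform rate \eqref{eq:consistency-finite dim-uniform}, or the integral rate \eqref{eq:consistency-finite dim} combined with an equicontinuity argument on the flows — supplies $\sup_t\Pi(\M(t),\widehat\M_n(t)) = O_p(n^{-\nicefrac{1}{2}})$. Feeding this into a bound of the form $\|A^{\otimes 2}-B^{\otimes 2}\| \le \|A-B\|(\|A\|+\|B\|)$ and averaging over $i$ (using \ref{assumption_flow_boundedmap} to bound the $\|A\|,\|B\|$ factors in mean) yields that term (ii) is $O_p(n^{-\nicefrac{1}{2}})$ as well.

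Assembling (i)–(iii) gives $\|\Delta\|_{\BB(L^2_{\BB_2})} = O_p(n^{-\nicefrac{1}{2}})$. The main obstacle I anticipate is the uniform-in-$t$ Lipschitz control in term (ii): one must verify that the Lipschitz constant for the dependence of the optimal transport map on the base point does not blow up as $t$ varies — this is where \ref{assumption_invertible_maps} with a \emph{time-invariant} $R\succ 0$ is essential, since it pins the spectra of all $\F_i(t)$, and hence of $\M(t)$ and $\widehat\M_n(t)$ for large $n$, into one fixed compact interval $[\varepsilon^{-1}\lambda_{\min}(R),\varepsilon\lambda_{\max}(R)]$, on which all the relevant matrix functions ($X\mapsto X^{1/2}$, $X\mapsto X^{-1/2}$, and the transport-map formula \eqref{eq : transport map}) are uniformly Lipschitz. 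A secondary technical point is passing from the integral rate \eqref{eq:consistency-finite dim} to a supremum rate when $\mathcal R$ is not assumed constant; if one only has \ref{assumption_flow_trbound} and \ref{assumption_flow_posdef}, I would instead keep everything in the integral ($L^2_{\BB_2}$) norm throughout, which still suffices to close the argument at rate $n^{-\nicefrac{1}{2}}$.
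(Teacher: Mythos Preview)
Your decomposition into a ``Monte-Carlo'' term and a ``replacement'' term is exactly the paper's strategy, and your treatment of term (i) via the Hilbert-space CLT matches the paper (which cites \citet{dauxois1982asymptotic}). The divergence is in term (ii).

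For term (ii) the paper does \emph{not} go through stability of optimal transport maps from external references, nor does it use the uniform-in-$t$ rate \eqref{eq:consistency-finite dim-uniform}. Instead it invokes Lemma~\ref{lemma : smoothness of emblog}, which shows directly that $M^{\nicefrac{1}{2}}\mapsto \J_M\log_M F$ is Fr\'echet differentiable with derivative bounded by $3+\|T_M^F\|$. In finite dimensions this derivative is always a bona fide bounded operator (this is the point of the lemma), so one obtains
\[
\frac{1}{n}\sum_{j=1}^n \big\| {\bJ}_{\M}\log_{\M}\F_j - {\bJ}_{\widehat\M_n}\log_{\widehat\M_n}\F_j \big\|_{L^2_{\BB_2}}
= O_p\big(d(\widehat\M_n,\M)\big) = O_p(n^{-\nicefrac{1}{2}}),
\]
using only the \emph{integral} rate \eqref{eq:consistency-finite dim} from Theorem~\ref{thm : FM consistency - finite dim}, which requires only \ref{assumption_flow_trbound} and \ref{assumption_flow_posdef}. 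The moment bound on $\|T_{\M(t)}^{\F(t)}\|$ furnished by \ref{assumption_flow_boundedmap} is what controls the Lipschitz constant in mean.

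Your primary route, by contrast, relies on \ref{assumption_invertible_maps} (with $\mathcal{R}\equiv R\succ 0$) to pin all spectra into a fixed compact interval and thereby obtain \emph{uniform} Lipschitz constants for the map $M\mapsto T_M^F$. But \ref{assumption_invertible_maps} is \emph{not} among the hypotheses of Theorem~\ref{thm : cov consistency - finite dim}, so this is an overreach. Your fallback (``keep everything in the integral norm'') is precisely the right move and is what the paper does --- but you have not identified the tool that makes it go through, namely the explicit derivative bound of Lemma~\ref{lemma : smoothness of emblog}. Without that lemma, the Lipschitz constants coming from generic stability results for $M\mapsto T_M^F$ still depend on $\lambda_{\min}(M)$ pointwise in $t$, and it is not obvious how to close the integral estimate under only \ref{assumption_flow_trbound}, \ref{assumption_flow_posdef}, \ref{assumption_flow_boundedmap}. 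In short: your architecture is correct, but the key lemma you need for term (ii) is Lemma~\ref{lemma : smoothness of emblog}, not external transport-map stability plus \ref{assumption_invertible_maps}.
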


The stronger rates of convergence we obtain in the finite-dimensional case follow from the following result.

\begin{lemma}\label{lemma : smoothness of emblog}
Let  $F$ be a covariance operator. Then the functional:
$$
M^{\nicefrac{1}{2}}\mapsto \J_M\log_M F
$$
defined onto $\BB_2$ has densely defined \Frechet differentiable at any $M\succ 0$, with derivative given by:
$$
{d_{M^{\nicefrac{1}{2}}}}[\J_M\log_{M}F](H) =T_M^F H - M^{-\nicefrac{1}{2}}\der_{ Q^2}g (QH M^{-\nicefrac{1}{2}}Q + QM^{-\nicefrac{1}{2}} HQ) - H
$$
In particular, if the optimal map $T_M^F$ is bounded, we have that:
$$
\|{d_{M^{\nicefrac{1}{2}}}}[\J_M\log_{M}F]\| \leq 3+\|T_M^F\|.
$$
\end{lemma}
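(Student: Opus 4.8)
The plan is to compute the \Frechet derivative of the composite map $M^{1/2}\mapsto \J_M\log_M F = (T_M^F - \Id)M^{1/2}$ by decomposing it through the known formula for the optimal transport map \eqref{eq : transport map}. Writing $M^{1/2} = Q$ (so $Q\succ 0$ and $M = Q^2$), one has $\J_M\log_M F = T_M^F Q - Q = Q^{-1}(QFQ)^{1/2}Q^{-1}\cdot Q - Q = Q^{-1}(QFQ)^{1/2} - Q$, using $T_M^F = M^{-1/2}(M^{1/2}FM^{1/2})^{1/2}M^{-1/2}$. So we need the derivatives, at $Q\succ 0$, of three pieces: (i) $Q\mapsto Q^{-1}$, whose derivative in direction $H$ is $-Q^{-1}HQ^{-1}$; (ii) $Q\mapsto QFQ$, with derivative $HFQ + QFH$; and (iii) the operator square root $g: A\mapsto A^{1/2}$ on positive operators, whose \Frechet differentiability and derivative $\der_{A}g$ are standard (the derivative $\der_{A}g(B)$ solves the Sylvester equation $A^{1/2}X + XA^{1/2} = B$; this is the $\der_{Q^2}g$ appearing in the statement). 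The term $-Q$ contributes $-H$. Assembling by the product and chain rules, and recognizing that $Q^{-1}(QFQ)^{1/2}Q^{-1} = T_M^F$, the first term becomes $T_M^F H$ and the cross terms collapse to the stated expression $T_M^F H - M^{-1/2}\der_{Q^2}g(QHM^{-1/2}Q + QM^{-1/2}HQ) - H$ after substituting the derivative of $Q\mapsto QFQ$ evaluated against $Q^{-1}$ factors. I would carry out this bookkeeping carefully, tracking which factors of $Q^{-1}$ and $Q$ land where.

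For the ``densely defined'' qualifier: the natural domain issue is that $H$ ranges over directions in $\BB_2$ but the formula involves $M^{-1/2}$, which is unbounded when $M$ is not boundedly invertible (e.g. a genuine trace-class covariance on infinite-dimensional $\H$), and even for $M\succ 0$ boundedly invertible the image $T_M^F H$ need not be Hilbert--Schmidt unless $T_M^F$ is bounded. So I would state that the derivative exists on the dense subspace of finite-rank (or suitably smooth) directions $H$ and extends to a bounded operator precisely under the stated hypothesis that $T_M^F\in\BB$; the density of finite-rank operators in $\BB_2$ is what ``densely defined'' refers to. The differentiability claim itself is then: on this dense domain, the difference quotient converges in $\BB_2$-norm to the displayed linear map, which follows from the differentiability of $g$ (operator square root) plus continuity of operator multiplication and inversion on the relevant domains.

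For the norm bound, assuming $T_M^F$ is bounded: apply the triangle inequality to the three terms. The first gives $\|T_M^F H\|_2 \le \|T_M^F\|_\infty\|H\|_2$. The last gives $\|H\|_2$, contributing $1$. The middle term is the delicate one: I would argue that $M^{-1/2}\der_{Q^2}g(QHM^{-1/2}Q + QM^{-1/2}HQ)$ has $\BB_2$-norm at most $2\|H\|_2$. The key identity is that $\der_{A}g(B)$ with $A = QFQ$ and $B$ as above, premultiplied by $M^{-1/2} = Q^{-1}$, simplifies: since $(QFQ)^{1/2} = QM^{-1/2}(M^{1/2}FM^{1/2})^{1/2}M^{-1/2}Q\cdot$(appropriate conjugation) one can relate $Q^{-1}\der_{QFQ}g(\cdot)$ back to $T_M^F$-weighted quantities. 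Concretely I expect the middle term to reorganize into something like $\tfrac12(T_M^F H + H)$ or a similar average, whose norm is at most $\tfrac12(\|T_M^F\|_\infty + 1)\|H\|_2$ — but to match the claimed total bound $3 + \|T_M^F\|$ I instead expect the cleaner route: bound the solution $X$ of $A^{1/2}X + XA^{1/2} = B$ by $\|X\|_2 \le \|A^{-1/2}\|_\infty\|B\|_2$ (valid when $A\succ 0$), then track the $Q$-factors. The main obstacle is exactly this middle term: keeping the unbounded factors $M^{-1/2}$ under control and showing the combination is bounded by $2\|H\|_2$ (so that $1 + 2 + \|T_M^F\|_\infty = 3 + \|T_M^F\|$), which requires using the Sylvester-equation representation of $\der g$ together with the cancellation between $M^{-1/2}$ and the $Q$'s inside $B$, rather than naive operator-norm estimates which would blow up.
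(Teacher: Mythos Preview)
Your route to the derivative is the paper's: write $\J_M\log_M F = M^{-1/2}(M^{1/2}FM^{1/2})^{1/2} - M^{1/2}$ and apply the product and chain rules together with the Fr\'echet derivative $g:A\mapsto A^{1/2}$. There is, however, a notational mismatch that will prevent you from landing on the displayed formula: you set $Q = M^{1/2}$, whereas in the statement (and in the paper) $Q$ is defined by $Q^2 = M^{1/2}FM^{1/2}$, i.e.\ $Q = (M^{1/2}FM^{1/2})^{1/2}$. The passage from the raw chain-rule output to the stated form uses the Sylvester-type identity $d_{Q^2}g(BQ) = B - d_{Q^2}g(QB)$ (and its transpose), applied with $B = HM^{-1/2}Q$ and $B = QM^{-1/2}H$; this is what produces the term $M^{-1/2}QM^{-1/2}H = T_M^F H$ and the specific argument $QHM^{-1/2}Q + QM^{-1/2}HQ$ inside $d_{Q^2}g$. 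Your outline does not mention this manipulation.

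The norm bound is where there is a genuine gap. Your proposed estimate $\|d_A g(B)\|_2 \le \|A^{-1/2}\|_\infty\|B\|_2$ is vacuous here: $A = Q^2 = M^{1/2}FM^{1/2}$ is trace-class in the infinite-dimensional case, so $\|A^{-1/2}\|_\infty = \infty$. The working mechanism is different and does not invoke $Q^{-1}$ at all. One uses that the maps $X\mapsto Q\,d_{Q^2}g(X)$ and $X\mapsto Q^{1/2}\,d_{Q^2}g(X)\,Q^{1/2}$ are contractions on $\BB_2$ (in the eigenbasis of $Q$, the Schur multipliers are $q_i/(q_i+q_j)\le 1$ and $\sqrt{q_iq_j}/(q_i+q_j)\le \tfrac12$), together with the boundedness of $M^{-1/2}Q^{1/2}$, which holds precisely under the hypothesis $T_M^F\in\BB$ since $\|M^{-1/2}Q^{1/2}\|_\infty^2 = \|M^{-1/2}QM^{-1/2}\|_\infty = \|T_M^F\|_\infty$. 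The $Q$-factors already sitting inside the argument $QHM^{-1/2}Q + QM^{-1/2}HQ$ are exactly what allow this factorisation to close; ``tracking the $Q$-factors'' without the contraction property will not give a finite bound. Your guess that the middle term reorganises into $\tfrac12(T_M^F H + H)$ is incorrect.
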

In finite dimensions, the above theorem yields that the (embedded) logarithm map \textit{always} defines a smooth function. To the contrary, in infinite dimensions such mapping has only densely defined derivative which does not, in general, constitute a proper bounded functional, as the optimal transport map appearing above might be unbounded.

\medskip

Finally, we make two observations regarding estimation of the eigendecomposition of the second-order process $\CC,$ which are direct consequence of the perturbation results  in \citet[][Lemma 4.2, Theorem 4.4 and Lemma 4.3]{bosq2000linear}.
\begin{remark}
    Denote by $\{\varphi_k,\lambda_k\}_{k\geq 1}$ and $\{\widehat\varphi_k,\widehat\lambda_k\}_{k\geq 1}$ the eigenvalue/eigenfunction pairs corresponding to $ {\bJ}^{\otimes}_\M\CC, {\bJ}^{\otimes}_{\widehat\M_n}\widehat\CC_n  $ , with the latter oriented so that $\mathrm{sign}(\langle \varphi_k,\widehat\varphi_k\rangle) =1$. Then:
    \begin{equation}\label{eq : covflow cov consistency : eig-values}
    \sup_{k\geq 1} |\lambda_k -\widehat\lambda_k | = O_p(n^{-\nicefrac{1}{2}}).
    \end{equation} 
    Let $k\geq 1$ such that $\lambda_k$ has multiplicity one, i.e\ the corresponding eigenspace is one 
    dimensional. Then, setting $\Delta_k:= \min\{ (\lambda_{k-1}- \lambda_{k}), \: (\lambda_{k}- \lambda_{k+1})\}$:
    \begin{equation}
        \|\varphi_k - \widehat\varphi_k\|_{L^2_{\BB_2}}
        = O_p(\Delta_k^{-1}n^{-\nicefrac{1}{2}}).
    \end{equation} 
\end{remark}

\section{Irregular observations}
\label{sec : irregular obs}
In this section we show how the methods we developed in the case of fully observed flows may be adapted to more realistic data-collection scenarios: indeed, while in the exposition above we considered the covariance flows to be fully and perfectly observed, real-data is often available only discretely -- densely or sparsely -- while noise corruption is a reality that cannot be escaped and should be taken into account when developing a statistical method.

\subsection{Discrete Observations}

When functional observations are available discretely over the interval $[0,1]$ but the grid is sufficiently \textit{dense}, a standard approach is that of smoothing the observations to recover the trajectories.

In our context, let $\F$ be some flow, which we assume to only observe on a grid 
$ \{T_i\}_{i=1,\dots,r}$ of randomly distributed points over $[0,1]$, and furthermore assume to have only access to \textit{estimates} $\hat F_i$ of the covariance $\F(T_i)$, for $i=1,\dots,r$. This naturally occurs for instance when covariances are constructed on the basis of a finite sample of the corresponding processes,  under complete, regular or sparse observation sampling regimes. That is, in practice -- for instance when data is coming from fMRI data \cite{dai2019analyzing} -- each covariance is $\F(T_i)$ is estimated on the basis of $K$ identical realisations $\{X^i_\ell\}_{\ell=1,\dots,K}\in L^2([0,1],\R)$ say, with covariance $\E [X^i_\ell\otimes X^i_\ell ] = \F(T_i)$. On the basis of these samples, we usually have an estimator $\widehat F_i$ of the latent covariance $\F(T_i)$ satisfying 
\begin{equation}
    \label{eq:covobsrate}
\E \| \widehat F_i - \F(T_i) \|_1 = \mathcal{O}(\mathcal{R}(K)),
\end{equation}
with convergence rate depending on the specific observation regime of the curves $X_\ell^i$, but generally decreasing as $K$ diverges. For instance, assuming the observations $X^i_\ell$ are independent across indices, if each trajectory  $X^i_\ell$ is observed {fully}, across the time interval $[0,1]$ with no noise corruption, then $\mathcal{R}(K) =K^{-1/2}$ by the standard Central Limit Theorem. 
Of course many other rates, corresponding to other observational regimes (dependent curves, discrete/sparse observations, functional fragments,...) can arise.

\medskip

Regardless, if a flow $\F \::\: [0,1]\rightarrow \KK$ is observed via proxies $\hat F_1,\dots,\hat F_K$ satisfying \eqref{eq:covobsrate} corresponding to (random) timepoints $T_1,\dots,T_K$, then
-- as in classical Functional Data Analysis -- the flow $\F$ may be estimated by smoothing the perturbed observations  over the grid of points $\{T_i\}_{i=1,\dots,K}$, for instance by the classical Nadaraya-Watson estimator:
\begin{equation}\label{eq:NW}
\hat{\F}^{\mathrm{NW}}_r(t)= \frac{1}{ \sum_{i=1}^r W_h\left(T_{i}-t\right)} \sum_{i=1}^r W_h\left(T_{i}-t\right)\hat F_i.
\end{equation}
where $W_h(\cdot) = \frac{1}{h}W(\cdot/h)$ for some positive kernel $W(\cdot)$ and bandwith parameter $h$.
Some simple math shows consistency of this estimator as $r, K$ diverge, provided some regularity conditions are satisfied.

\begin{proposition}\label{prop:NWsmoothing}
    Assume that the flow $\F$ is Lipschitz continous with constant $L$, the kernel $W(\cdot)$ is compactly supported and the sampling times $T_j$ are uniformly distributed on the time inerval. Then, if $h_r \asymp r^{-\nicefrac{1}{3}}$:
    $$
    \E\left[ d(\F(t), \hat{\F}_r^{\mathrm{NW}})^2\right]  = 
    \mathcal{O}\left(\mathcal{R}(K) + r^{-\nicefrac{2}{3}}\right).
    $$
\end{proposition}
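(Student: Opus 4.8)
The plan is to bound $\E[d(\F(t),\hat\F^{\mathrm{NW}}_r)^2]$ by a standard bias-variance-type decomposition adapted to the Bures-Wasserstein geometry, exploiting convexity of $\Pi^2$ along with the triangle inequality. First I would insert the ``oracle'' Nadaraya-Watson estimator that uses the true covariances $\F(T_i)$ in place of the proxies $\hat F_i$, call it $\tilde\F_r(t) := \big(\sum_i W_h(T_i-t)\big)^{-1}\sum_i W_h(T_i-t)\F(T_i)$. By the triangle inequality for $\Pi$ and $(a+b)^2 \le 2a^2+2b^2$,
$$
d(\F(t),\hat\F^{\mathrm{NW}}_r)^2 \le 2\,\Pi\big(\F(t),\tilde\F_r(t)\big)^2 + 2\,\Pi\big(\tilde\F_r(t),\hat\F^{\mathrm{NW}}_r(t)\big)^2.
$$
For the second term, I would use that $\tilde\F_r(t)$ and $\hat\F^{\mathrm{NW}}_r(t)$ are weighted barycenters with the same weights $w_i = W_h(T_i-t)/\sum_j W_h(T_j-t)$ but of covariances $\F(T_i)$ versus $\hat F_i$; a stability bound for weighted Bures-Wasserstein barycenters — which can be derived from convexity of $F\mapsto \Pi^2(F,G)$ or simply from $\Pi(F,G)\le \|F-G\|_1^{1/2}$ applied entrywise together with Jensen — gives $\Pi(\tilde\F_r(t),\hat\F^{\mathrm{NW}}_r(t))^2 \le \sum_i w_i\|\F(T_i)-\hat F_i\|_1$, whose expectation is $\mathcal{O}(\mathcal{R}(K))$ by \eqref{eq:covobsrate}.

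For the first (oracle smoothing) term, I would condition on the design points $T_1,\dots,T_r$ and treat it as the error of a Nadaraya-Watson estimator valued in the metric space $(\KK,\Pi)$. Again writing $\tilde\F_r(t)$ as a weighted barycenter of the $\F(T_i)$ with weights $w_i$, and using the geodesic-convexity-type inequality $\Pi(\F(t),\tilde\F_r(t))^2 \le \sum_i w_i \Pi(\F(t),\F(T_i))^2$ (valid for weighted barycenters, since the barycenter minimizes $G\mapsto\sum_i w_i\Pi(G,\F(T_i))^2$ and in particular does no worse than $G=\F(t)$), the Lipschitz assumption yields $\Pi(\F(t),\F(T_i))\le L|T_i-t|$, hence
$$
\Pi\big(\F(t),\tilde\F_r(t)\big)^2 \le L^2\sum_{i=1}^r w_i (T_i-t)^2.
$$
Taking expectations over the uniform design and using the compact support of $W$: the numerator $\frac1r\sum_i W_h(T_i-t)(T_i-t)^2$ concentrates around $\int W_h(u-t)(u-t)^2\,du = \mathcal{O}(h^2)$, while the normalizer $\frac1r\sum_i W_h(T_i-t)$ concentrates around $\int W_h(u-t)\,du = 1+o(1)$, so the ratio is $\mathcal{O}(h^2)$ in expectation, with the stochastic fluctuation of the normalizer contributing an additional $\mathcal{O}(1/(rh))$ term from a standard variance computation (and a boundary argument or the assumption that $t$ is interior, absorbed into the stated rate). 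Balancing $h^2$ against $1/(rh)$ gives $h\asymp r^{-1/3}$ and rate $r^{-2/3}$, matching the statement.

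The main obstacle I anticipate is making the weighted-barycenter convexity inequality $\Pi(\F(t),\tilde\F_r(t))^2\le\sum_i w_i\Pi(\F(t),\F(T_i))^2$ fully rigorous in the possibly-degenerate, infinite-dimensional Bures-Wasserstein setting: one must ensure the weighted barycenter $\tilde\F_r(t)$ is well-defined (unique) — which needs at least one $\F(T_i)$ to be injective, guaranteed with probability one under \ref{assumption_flow_posdef} for large $r$ — and one must invoke the appropriate variance inequality / comparison for Bures-Wasserstein barycenters (available from \citet{masarotto2019procrustes} or the convexity of $\Pi^2$ along generalized geodesics) rather than a naive triangle-inequality bound, which would only give the weaker $(\sum_i w_i\Pi(\F(t),\F(T_i)))^2$. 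The remaining steps — the proxy-error stability bound and the kernel-smoothing variance calculation — are routine. A minor technical point, the behaviour near the endpoints of $[0,1]$, can be handled by the usual boundary-kernel remark or simply noting the $\mathcal{O}(\cdot)$ in the statement absorbs it.
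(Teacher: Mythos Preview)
Your overall decomposition---oracle Nadaraya--Watson estimator, proxy error bounded by $\mathcal{R}(K)$, oracle bias bounded via Lipschitz---is sound and reaches the stated rate. But the key step is misargued. The estimator $\tilde\F_r(t)=\sum_i w_i \F(T_i)$ is a \emph{linear} convex combination of covariances, not a Bures--Wasserstein barycenter. Your justification ``since the barycenter minimizes $G\mapsto\sum_i w_i\Pi(G,\F(T_i))^2$ and in particular does no worse than $G=\F(t)$'' therefore does not apply; and even if $\tilde\F_r(t)$ \emph{were} the barycenter $M$, that minimisation only yields $\sum_i w_i\Pi(M,\F(T_i))^2\le\sum_i w_i\Pi(\F(t),\F(T_i))^2$, which bounds the variance about $M$, not $\Pi(\F(t),M)^2$. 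Upgrading the former to the latter would need a variance inequality of the kind that \emph{fails} in nonnegatively curved spaces like $(\KK,\Pi)$. Consequently, the ``main obstacle'' you flag at the end (uniqueness of the weighted barycenter, requiring some $\F(T_i)$ injective) is a phantom: no barycenter is ever taken. The inequality you want, $\Pi(\F(t),\tilde\F_r(t))^2 \le \sum_i w_i \Pi(\F(t),\F(T_i))^2$, \emph{does} hold, but for the right reason: the map $F\mapsto \Pi^2(G,F)=\tr(G)+\tr(F)-2\tr(G^{1/2}FG^{1/2})^{1/2}$ is \emph{linearly} convex in $F$ (concavity of $A\mapsto\tr(A^{1/2})$ composed with the linear map $F\mapsto G^{1/2}FG^{1/2}$), so Jensen gives the bound directly for the linear average. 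With that correction your route goes through.

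The paper takes a more elementary path that sidesteps all of this. Since the NW estimator is linear, it works in trace norm throughout: it bounds $\|\check\F_r^{\mathrm{NW}}(t)-\hat\F_r^{\mathrm{NW}}(t)\|_1$ by $\mathcal{R}(K)$ and $\|\check\F_r^{\mathrm{NW}}(t)-\F(t)\|_1$ by $Lh\cdot\mathbf{1}_{A_{t,h}}+\|\F(t)\|_1\mathbf{1}_{A_{t,h}^c}$ (where $A_{t,h}$ is the event that some $T_i$ lands in the $h$-ball around $t$) via the ordinary triangle inequality for $\|\cdot\|_1$, integrates over $t$, and converts to $\Pi$ only at the very end via $\Pi(A,B)^2\le\|A-B\|_1$. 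Your route has the merit of using the Lipschitz hypothesis in the $\Pi$-metric (as the statement presumably intends), whereas the paper's trace-norm argument tacitly reads ``Lipschitz'' as Lipschitz in $\|\cdot\|_1$; but the paper's approach is otherwise simpler---no convexity of $\Pi^2$, no barycenter theory, no curvature considerations.
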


    \subsection{Sparse Longitudinal Data}
In practice, the discrete and irregular nature of functional observation may be more extreme, and only longitudinal and \textit{sparse} observations could be available. Indeed, assume to have access to $n$ random flows $\F_1,\dots,\F_n$ through observations:

\begin{equation}\label{eq:sparseobs}
F_{ij} =  F(T_{ij}), \qquad i=1,\dots,n, \quad j = 1,\dots, r(i)
\end{equation}

Such an observational regime does arise in practice for covariance-valued data  (see \cite{lin2019intrinsic}) and is in fact quite common in functional connectivity studies: ``\textit{for most longitudinal biomedical and social studies, data are not continuously observed in time, and more often than not the time points at which measurements are available are irregular and sparse}" \cite{dai2018principal}.

\medskip

We revisit the principal component analysis through conditional
expectation (PACE) for longitudinal data due to \citet{yao2005functional} and in particular its adaptation to manifold-valued data by \citet{dai2021modeling}.
This allows us to conduct principal component analysis for sparsely observed  longitudinal covariance flows, and consequently recover  best fits for individual trajectories  on the basis of the estimated principal components.

\medskip

We will assume the following:
\begin{enumerate}
    
    \item[\namedlabel{assumption_sparse_time}{\textbf{S(1)}}] $\left\{T_{i j}\right\}$ is the triangular array of the ordered random design points $\left(T_{i k}<T_{i j}\right.$, for $1 \leq k<j \leq r(i)$, and $i=1,2, \ldots, n)$, each drawn independently from a uniform density on $[0,1]$.
    
    \item[\namedlabel{assumption_sparse_r}{\textbf{S(2)}}] $\{r(i)\}$ is the sequence of grid sizes, determining the denseness of the sampling scheme, with $r_i\geq r\geq 2.$

    \item[\namedlabel{assumption_sparse_independence}{\textbf{S(3)}}]  $\left\{\F_i\right\}$ and $\left\{{T}_{i j}\right\}$  are totally independent across all indices.
\end{enumerate}

\medskip

As in classical FDA, the \Frechet mean flow may be estimated by a \textit{pool and smooth} approach, longitudinally grouping observations and smoothing. For this we will resort to  
 Local \Frechet Regression  (LFR, see \citet{petersen2019frechet}):
 \begin{equation}\label{eq:LFRMeq}
\hat\M^{LFR}(t) = \argmin_{G\in\KK}\sum_{i=1}^n\sum_{j=1}^{r(i)} s(T_{ij},t,h )\Pi^2(G,F_{ij}).
 \end{equation}
which generalises local linear smoothing to general metric spaces .
Here
$s\left(T, t ,h \right) := \frac{1}{\mu_0\mu_2-\mu_1^2}K_h(T-t)[\mu_2-\mu_1(T-t)]$ and $\mu_j = \frac{1}{r} \sum_{i=1}^r K_h(T-t)(T-t)^j$, 
and we refer to the original reference for the details motivating this expression.

\begin{proposition}\label{prop:loclin FM from sparse}
  Let $\F_1,\dots,\F_n$ be random i.i.d.\ flows  satisfying  \ref{assumption_flow_trbound}, \ref{assumption_flow_posdef}, \ref{assumption_flow_boundedmap} and \ref{assumption_invertible_maps}. Assume that the flows are observed sparsely as in \eqref{eq:sparseobs}, and satisfy
\ref{assumption_sparse_time},\ref{assumption_sparse_r},\ref{assumption_sparse_independence}. Let $K$ be compactly supported. Then:
$$\Pi(\hat\M^{LFR}(t)  , \M(t))\: \overset{p}{\longrightarrow} \:0 , \qquad \forall\:t\in[0,1], \quad \text{as } n\rightarrow\infty, h\rightarrow 0.$$
In particular if $\mathrm{dim}(\H)<\infty$:
  $$\sup_{t\in[0,1]}\Pi(\hat\M^{LFR}(t)  , \M(t)) = O_p(h^2 + \sqrt{nh}).$$
\end{proposition}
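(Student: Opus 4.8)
The plan is to treat the local Fréchet regression estimator \eqref{eq:LFRMeq} as an M-estimator and to proceed in the spirit of \citet{petersen2019frechet} and its sparse longitudinal adaptation in \citet{dai2021modeling}, importing the Bures--Wasserstein-specific ingredients (uniqueness of the barycenter, coercivity of the Fréchet functional, and, in finite dimensions, quadratic growth and smoothness) from Lemma~\ref{lemma : continuity for BW barycenter}, Lemma~\ref{lemma : smoothness of emblog}, \citet{masarotto2019procrustes}, \citet{kroshnin2021statistical} and \citet{le2022fast}. Fix $t\in[0,1]$, write $L_t(G):=\E[\Pi^2(G,\F(t))]$ for the population pointwise Fréchet functional, whose unique minimiser is $\M(t)$ by \ref{assumption_flow_posdef} together with Lemma~\ref{lemma : continuity for BW barycenter}, and $\widehat L_{n,h}(G):=\sum_{i,j}s(T_{ij},t,h)\,\Pi^2(G,F_{ij})$ for the (random, normalised, sign-changing) objective minimised by $\widehat\M^{LFR}(t)$. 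The population local-linear target localises to $\M(t)$ because the weights $s(\cdot,t,h)$ concentrate at scale $h$ around $t$ and $u\mapsto\F(u)$ is continuous.

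\emph{Consistency.} First I would establish that, for every trace-norm-bounded set $\mathcal K_0\subset\KK$ containing $\M(t)$, $\sup_{G\in\mathcal K_0}\big|\widehat L_{n,h}(G)-L_t(G)\big|\overset{p}{\to}0$ as $n\to\infty$, $h\to0$. This follows from the usual bias/variance split of a local-linear smoother: the weights $s(\cdot,t,h)$ are asymptotically a partition of unity localised at scale $h$ (with $\mu_0\mu_2-\mu_1^2>0$ with probability tending to one under \ref{assumption_sparse_time}--\ref{assumption_sparse_independence} and compact support of $K$), so $\E\widehat L_{n,h}(G)\to L_t(G)$ by continuity of $u\mapsto\F(u)$, dominated convergence and \ref{assumption_flow_trbound} (using $\Pi^2(G,F)\le 2\tr G+2\tr F$), while $\operatorname{Var}\widehat L_{n,h}(G)=O((nh)^{-1})$ by \ref{assumption_sparse_independence} and \ref{assumption_flow_trbound}; equicontinuity of $G\mapsto\Pi^2(G,\cdot)$ on $\mathcal K_0$ upgrades this to a uniform statement. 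Coercivity, $L_t(G)\ge \tr G-2\sqrt{\tr G}\,\sqrt{\E\tr\F(t)}+\E\tr\F(t)\to\infty$ as $\tr G\to\infty$, confines $\widehat\M^{LFR}(t)$ to a fixed $\mathcal K_0$ eventually. Finally, uniqueness of $\M(t)$ and lower semicontinuity of $L_t$ give the well-separation property $\inf\{L_t(G)-L_t(\M(t)):G\in\mathcal K_0,\ \Pi(G,\M(t))\ge\delta\}>0$ for every $\delta>0$, so the standard argmin-consistency lemma (e.g.\ \citet{van1996weak}) yields $\Pi(\widehat\M^{LFR}(t),\M(t))\overset{p}{\to}0$.

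\emph{Finite-dimensional rate.} When $\dim\H=d<\infty$ and \ref{assumption_invertible_maps} holds with $\mathcal R(t)\equiv R\succ0$, Assumptions~\ref{assumption_flow_posdef} and \ref{assumption_invertible_maps} confine all realisations $\F(t)$, hence $\M(t)$, to a common compact set of uniformly non-degenerate matrices ($\varepsilon^{-1}R\preceq\F(t)\preceq\varepsilon R$ a.s., uniformly in $t$). On such a set the finite-dimensional Bures--Wasserstein Fréchet functional satisfies a uniform two-sided quadratic bound near its minimiser, $c\,\Pi(G,\M(t))^2\le L_t(G)-L_t(\M(t))\le C\,\Pi(G,\M(t))^2$ with $c,C$ independent of $t$ (\citet{kroshnin2021statistical}, \citet{le2022fast}), and $G\mapsto\Pi^2(G,F_{ij})$ is $C^2$ on a uniform neighbourhood of $\M(t)$ with derivatives controlled through the spectra of $\M(t)$ and $F_{ij}$ by Lemma~\ref{lemma : smoothness of emblog} (together with \ref{assumption_flow_boundedmap}). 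These are precisely the conditions under which the local-linear M-estimation expansion of \citet[][Thm.~2]{petersen2019frechet}, in the sparse-design form of \citet{dai2021modeling}, applies: a first-order expansion of $\widehat L_{n,h}$ about $\M(t)$ gives $\Pi(\widehat\M^{LFR}(t),\M(t))\lesssim \big\|\nabla\widehat L_{n,h}(\M(t))\big\|$, and the right-hand side splits into (i) a deterministic bias, which is $O(h^2)$ because the local-linear weights $s$ annihilate the first-order term and $u\mapsto\E[\Pi^2(G,\F(u))]$ is twice differentiable, and (ii) a mean-zero stochastic term of variance $O((nh)^{-1})$ by \ref{assumption_sparse_time}--\ref{assumption_sparse_independence} and the effective sample size $\asymp nh$. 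Hence $\Pi(\widehat\M^{LFR}(t),\M(t))=O_p(h^2+(nh)^{-1/2})$, and the uniform spectral bounds make the $O_p$ uniform over $t\in[0,1]$ via a covering of $[0,1]$ and the equicontinuity in $t$ of the quantities involved.

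\emph{Main obstacle.} The genuinely delicate points are: for consistency, controlling the weighted objective $\widehat L_{n,h}$ over the \emph{non-compact} space $\KK$ with weights $s$ that are not convex combinations (handled by coercivity-based localisation plus the standard asymptotics of local-linear weights); and, for the finite-dimensional rate, making the first-order expansion rigorous despite $\widehat\M^{LFR}(t)$ solving a \emph{constrained} minimisation over $\KK$ --- one must first show it lies in the regular interior so that Lemma~\ref{lemma : smoothness of emblog} can be invoked, and then propagate the finite-dimensional curvature and smoothness constants uniformly in $t$, which is exactly where \ref{assumption_invertible_maps} with $\mathcal R\equiv R\succ0$ is essential.
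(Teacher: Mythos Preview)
Your outline is sound and would lead to a correct proof, but it takes a noticeably different route from the paper's. The paper does not attack $\widehat\M^{LFR}(t)$ directly via a gradient expansion of $\widehat L_{n,h}$; instead it introduces the intermediate target $\check M_t:=\argmin_G \E[s(T,t,h)\,\Pi^2(F_T,G)]$ and splits $\Pi(\widehat M_t,M_t)\le\Pi(\widehat M_t,\check M_t)+\Pi(\check M_t,M_t)$. For the first piece it observes that, because $\Pi^2(F,G)=\tr F+\tr G-2\tr(F^{1/2}GF^{1/2})^{1/2}$ is $1$-homogeneous in $F$, the weighted Fréchet functional is (up to a constant in $G$) again a Bures--Wasserstein Fréchet functional for the \emph{scaled} covariances $\tilde F_T=s(T,t,h)F_T$; so $\widehat M_t$ is a genuine BW barycenter of rescaled data, and the existing barycenter CLTs give $\Pi(\widehat M_t,\check M_t)=O_p((nh)^{-1/2})$ with essentially no extra work. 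For the second piece the paper uses the fixed-point characterisation $M_t=\E[(M_t^{1/2}\F(t)M_t^{1/2})^{1/2}]$ of BW barycenters and Taylor-expands the map $\Psi_t(A):=\E[(A^{1/2}\F(t)A^{1/2})^{1/2}]-A$, invoking the Petersen--M\"uller $O(h^2)$ bias bound together with bounded invertibility of $d_{M_t}\Psi_t$ (from \ref{assumption_invertible_maps}) to get $\Pi(\check M_t,M_t)=O(h^2)$. What this buys compared to your approach is that it sidesteps entirely the verification of the abstract M-estimation hypotheses (well-separation, quadratic growth, metric-entropy control) by reducing to tools already built for BW barycenters; your approach, by contrast, stays within the generic Fréchet-regression template and must import those curvature and smoothness statements from \citet{kroshnin2021statistical}, \citet{le2022fast} and Lemma~\ref{lemma : smoothness of emblog}, which is more bookkeeping but arguably more transparent about where each assumption is used. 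Note also that the stochastic rate in the statement should be read as $(nh)^{-1/2}$, matching what you wrote; the $\sqrt{nh}$ in the displayed statement is a typographical slip (the paper's own proof calls $nh$ the effective sample size).
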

Note that the rate of convergence is the usual one for local regression with real valued responses.

\begin{remark} 
The smooth estimator is defined in terms of the local linear estimate suggested by  \citet{petersen2019frechet}. However, our setting escapes the assumptions under which they prove consistency. In fact, our strategy of proof for establishing a rate of convergence is distinctly different and exploits the additional structure inherent to our context, which in turn allows us to avoid making difficult and opaque assumption on the data generation process.
\end{remark}

For estimating the  principal components and principal components scores, we proceed as follows.
First, observe that the result in \citet{ocana1999functional} yields that the PCA of the flows $\F_1,\dots,\F_n$ in the tangent bundle is equivalent to the PCA of the Hilbertian embedded log-projected flows $\bJ_{\M}\log_{\M}\F_1,\dots\bJ_{\M}\log_{\M}\F_n \in \L^2_{\BB_2}$, in the sense that the eigenvalues (i.e., the principal component scores) remain the same, and the eigenvectors (i.e., the principal components) change predictably according to the embedding map. This  motivates us to cosider the (perturbed) embedded logarithmic projections:
$$
\chi_{ij} := \J_{\hat M_{ij}}\log_{\hat M_{ij}}F_{ij} \in \BB_2,  \qquad i=1,\dots,n, \quad j = 1,\dots, r(i)
$$
where $\hat M_{ij} := \hat \M^{LFR}(T_{ij})$.
It is then natural to estimate the corresponding covariance by scatterplot smoothing of the tensor products. That is, consider
$\Gamma_{ij\ell} := \chi_{ij} \otimes \chi_{i\ell}$ for $i=1,\dots,n$ and $j,\ell = 1,\dots,r(i)$, and the estimator $\hat\Gamma(s,t):= \widehat A_0$ where:
\begin{equation}\label{eq:covPaceEq}
 \left(\hat{A}_0, \hat{A}_1, \hat{A}_2\right) :=\argmin_{A_0, A_1, A_2} \sum_{i=1}^n \sum_{ j \neq l } K_{h_{\Gamma}}\left(T_{i j}-s\right) K_{h_{\Gamma}}\left(T_{i l}-t\right)
\times\left\|\Gamma_{i j l}-A_0-\left(T_{i j}-s\right) A_1-\left(T_{i l}-t\right) A_2\right\| .
\end{equation}
where $h_{\Gamma}>0$ is a bandwidth.

\medskip

 \begin{proposition}\label{prop:pace}
 Let $\F_1,\dots,\F_n$ be random i.i.d.\ flows  satisfying  \ref{assumption_flow_trbound}, \ref{assumption_flow_posdef}, \ref{assumption_flow_boundedmap} and 
\ref{assumption_invertible_maps}. Assume that the flows are observed sparsely as in \eqref{eq:sparseobs}, and satisfy
\ref{assumption_sparse_time},\ref{assumption_sparse_r},\ref{assumption_sparse_independence}. Then, if $\mathrm{dim}(\H)<\infty$:
 $$
 \sup _{s, t \in \mathcal{T}} \| \hat{\Gamma}(s, t)- \Gamma(s, t) \|^2=
O_P\left(h_{\M}^4  + {h_{\M}n} + h_{\Gamma}^4 +   \frac{\log nh_{\Gamma}^{-4}}{n}\left(h_{\Gamma}^4+\frac{h_{\Gamma}^3}{r}+\frac{h_{\Gamma}^2}{r^2}\right)\right)
$$
where $h_{\M}$ and $h_{\Gamma}$ respectively denote the bandwidths parameters in \eqref{eq:LFRMeq} and \eqref{eq:covPaceEq}. 
If $\mathrm{dim}(\H)=\infty$ then $\hat{\Gamma}(s, t)$ is pointwise consistent as $n\rightarrow \infty$.
\end{proposition}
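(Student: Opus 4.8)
The plan is to split the error into an \emph{oracle} smoothing error, in which the true mean flow $\M$ is used inside the embedded logarithm, and a \emph{plug-in} error accounting for the replacement of $\M$ by $\hat\M^{LFR}$. Set $\tilde\chi_{ij} := \J_{\M(T_{ij})}\log_{\M(T_{ij})}F_{ij}$ (so that, in the notation of Proposition \ref{prop:KarhunenLoeve-pop}, $\tilde\chi_{ij}$ is the value $\chi_i(T_{ij})$ of the oracle embedded log-process, with $\E[\tilde\chi(s)\otimes\tilde\chi(t)]=\Gamma(s,t)$), put $\tilde\Gamma_{ij\ell}:=\tilde\chi_{ij}\otimes\tilde\chi_{i\ell}$, and let $\tilde\Gamma(s,t)$ be the estimator \eqref{eq:covPaceEq} built from the $\tilde\Gamma_{ij\ell}$ rather than the $\Gamma_{ij\ell}$. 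Then
$$
\hat\Gamma(s,t)-\Gamma(s,t)=\big(\hat\Gamma(s,t)-\tilde\Gamma(s,t)\big)+\big(\tilde\Gamma(s,t)-\Gamma(s,t)\big),
$$
and I would bound the two terms separately --- uniformly in $(s,t)$ when $\dim\H<\infty$ and pointwise otherwise.

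For the oracle term $\tilde\Gamma(s,t)-\Gamma(s,t)$: by \eqref{eq : null average proj} the $\tilde\chi_{ij}$ are (conditionally on the design) mean zero, and by \ref{assumption_flow_boundedmap} together with the isometry of $\J_{\M(t)}$ they have uniformly bounded second moment. Thus estimating $\Gamma$ from $\{\tilde\Gamma_{ij\ell}:j\neq\ell\}$ is a two-dimensional local-linear regression for $\BB_1$-valued responses on a sparse uniform design, precisely the setting of \citet{yao2005functional} in the Hilbertian formulation of \citet{dai2021modeling}. When $\dim\H<\infty$ all Schatten norms on $\BB_2$ are equivalent, and the standard bias/variance split gives the $h_\Gamma^4$ contribution (the squared $O(h_\Gamma^2)$ bias, which requires second-order smoothness of $\Gamma$ inherited from regularity of the flow) and the $\tfrac{\log(nh_\Gamma^{-4})}{n}\big(h_\Gamma^4+h_\Gamma^3/r+h_\Gamma^2/r^2\big)$ variance contribution, uniformly in $(s,t)$ after the design-concentration argument that keeps the local-linear Gram quantities $\mu_0\mu_2-\mu_1^2$ bounded away from $0$ with probability tending to one. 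When $\dim\H=\infty$ one loses norm equivalence, but pointwise consistency at fixed $(s,t)$ still follows from the law of large numbers for $\BB_1$-valued variables and the same Gram control.

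For the plug-in term $\hat\Gamma(s,t)-\tilde\Gamma(s,t)$: the map \eqref{eq:covPaceEq} is linear in the responses with effective weights that are $O(1)$ uniformly in $(s,t)$ with probability tending to one, so it suffices to bound $\max_{i,j,\ell}\|\chi_{ij}\otimes\chi_{i\ell}-\tilde\chi_{ij}\otimes\tilde\chi_{i\ell}\|$; by $a\otimes b-a'\otimes b'=(a-a')\otimes b+a'\otimes(b-b')$ and the bounded moments of $\chi,\tilde\chi$, this further reduces to bounding $\|\chi_{ij}-\tilde\chi_{ij}\|$. In finite dimensions, under \ref{assumption_invertible_maps} with $\mathcal R\equiv R\succ0$ the spectra of all relevant covariances lie in a fixed compact subset of $(0,\infty)$, so by Lemma \ref{lemma : smoothness of emblog} the \Frechet derivative of $M^{\nicefrac{1}{2}}\mapsto\J_M\log_M F$ is uniformly bounded (by $3+\|T_M^F\|$, with $\|T_M^F\|$ uniformly bounded), and, the matrix square root being Lipschitz on that spectral range and $\Pi$ being equivalent to the Procrustes/Frobenius distance on regular covariances, $\|\chi_{ij}-\tilde\chi_{ij}\|\lesssim\Pi\big(\hat\M^{LFR}(T_{ij}),\M(T_{ij})\big)\le\sup_t\Pi\big(\hat\M^{LFR}(t),\M(t)\big)$. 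Squaring the uniform rate of Proposition \ref{prop:loclin FM from sparse} then yields the $h_\M^4+h_\M n$ contribution. When $\dim\H=\infty$ one uses instead only the pointwise convergence $\hat\M^{LFR}(t)\to\M(t)$ from Proposition \ref{prop:loclin FM from sparse} and the boundedness of the optimal maps $T_{\hat\M^{LFR}(T_{ij})}^{F_{ij}}$ from \citet[][Theorem 2]{masarotto2022transportation}, so that continuity (rather than differentiability) of the embedded log map gives $\|\chi_{ij}-\tilde\chi_{ij}\|\overset{p}{\to}0$, hence pointwise consistency.

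The main obstacle is the plug-in term: making the dependence on the mean-flow error quantitative \emph{and} uniform. The two delicate points are (i) that $M^{\nicefrac{1}{2}}\mapsto\J_M\log_M F$ is only \Frechet differentiable on a dense subset, with a derivative that need not extend to a bounded operator when $\dim\H=\infty$ --- which is exactly why a rate is available only in finite dimensions and merely pointwise consistency otherwise --- and (ii) that the local-linear smoother has signed weights, so transferring a sup-norm bound on the raw perturbations to one on the smoothed object requires uniform-in-$(s,t)$ control of the local-linear Gram matrix under \ref{assumption_sparse_time}. The remaining ingredients (the oracle bias/variance accounting and the tensor-product algebra) are routine once finite-dimensionality supplies norm equivalence and uniformly bounded optimal maps.
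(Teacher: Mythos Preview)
Your proposal is correct and follows essentially the same approach as the paper: both split $\hat\Gamma-\Gamma$ into an oracle smoothing error (handled by existing local-linear results for Hilbert-valued responses, yielding the $h_\Gamma$ terms) and a plug-in error (handled via Lemma~\ref{lemma : smoothness of emblog} and the uniform rate of Proposition~\ref{prop:loclin FM from sparse} in finite dimensions, yielding the $h_{\M}$ terms). The only cosmetic difference is that the paper writes out the explicit $S_{ab},R_{ab}$ representation of the local-linear solution before bounding $\|R_{ab}-R_{ab}'\|$, whereas you argue directly that the smoother is linear in the responses with uniformly bounded weights; these are equivalent.
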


By the Karhunen-Loéve theorem for hilbertian functional data, we may write:
$$
\bJ_{\M}\log_{\M}\F_i =  \sum_{j=1}^{N}
\xi_{ij}
\varphi_{j}(t), \qquad i=1,\dots,n
$$
where $\xi_{ij} := \langle\bJ_{\M}\log_{\M}\F_i, \varphi_{j}\rangle$ as we proved in Proposition \ref{prop:KarhunenLoeve-pop}. The idea of the PACE method is to construct estimates for the full trajectories on the basis of a truncation parameter $K$ -- which may be chosen according to the corresponding fraction of variation explained -- and estimates for the scores $\{\xi_{ij}\}_{j=1,\dots,K}$, and the eigenfunction $\varphi_{j}$.
On one hand, estimates for the eigenfunctions $\varphi_j$ of the integral operator corresponding to $\Gamma$ are then obtained by the eigenfunctions $\hat{\varphi_j}$ of the integral operator corresponding to  $\hat{\Gamma}$. On the other, the scores $\xi_{ij}$ may be estimated from sparse data by best linear unbiased prediction.

\section{Computational Aspects}\label{sec : computation}

\subsection{Discretisation}
In practice data are collected in discrete form, both in time and in space, and we now describe how the data are represented/manipulated in a typical discretisation scenario. The first discretisation is relative to the time-interval over which the flow is measured: a continuous flow $\F\::\:[0,1]\rightarrow \KK_{\H}$ over an (infinite dimensional) Hilbert space $\H$ will be measured at a discrete grid of time points $\vartriangle = \{t_k\}_{k=1, \dots, \lvert \vartriangle \rvert}$, which may be for instance equally spaced over the time interval $[0,1]$. Hence, a flow $\F$ will be measured as 
$$
 \begin{bmatrix}
F_1, & \dots & F_{|	\vartriangle|} \end{bmatrix}\in \KK_{\H}^{\lvert \vartriangle \rvert},
$$ for some grid $	\vartriangle$ and where we interpret $F_j=\F(t_j)$ for $ j=1,\dots,\lvert \vartriangle\rvert$.

\medskip 

The discretisation in space, on the other hand, often arises by basis representation. Indeed, let $F_j$ represent the covariance operator of some (mean-zero) random process in $\H$, say $\mathsf{X}_j$: that is, $F_j = \E[\mathsf{X}_j\otimes \mathsf{X}_j]$. A discretisation in space arises naturally when considering a finite representation over a set of basis elements $\{e_1,\dots,e_d\}\subset\H$, i.e.\ by considering the multivariate process given by the decomposition of $\mathsf{X}_j$ on the basis elements 
 $\begin{bmatrix}
X_{j,1} & \dots & X_{j,d} \end{bmatrix}\in \H^{d}$, where $X_{j,k} = \langle \mathsf{X}_j, e_k\rangle$ for $k=1,\dots n$. Of course, there are many different valid ways to do this: one can use pre-specified bases, such
as B-splines, or empirical bases corresponding to Karhunen-Lo\`eve type expansions, for instance. Regardless, once a choice of (complete) basis is made, a discretisation in space of $F_j$, for $j=1,\dots,\lvert \vartriangle \rvert$ arises via the outer products 
$$
 F_j^{(d)} :=\E \left(\begin{bmatrix}
X_{j,1} \\ \vdots \\ X_{j,d} \end{bmatrix}
\begin{bmatrix}
X_{j,1} & \dots & X_{j,d} \end{bmatrix}
\right),
$$ 
which discretise the tensor product $\E[\mathsf{X}_j\otimes\mathsf{X}_j]$.
 Note that, setting $\mathscr{P}_d$ the projection onto the span of the basis elements  $\{e_1,\dots,e_d\}$, we have that:
$$
 F_j^{(d)} 
=
\mathscr{P}_d \F(t_j) \mathscr{P}_d^*, \qquad j=1,\dots,\lvert\vartriangle\rvert.
$$

To summarise, given a flow $\F$ on Hilbert space $\H$ over the time interval $[0,1]$, given a discrete grid of time points $\vartriangle = \{t_k\}_{k=1, \dots, \lvert \vartriangle \rvert}$ and given a finite set of basis elements in $\{e_1,\dots,e_d\}\subset\H$, a discretisation of $\F$ in time and spaces arises as:
$$
\begin{bmatrix}
 F_1^{(d)} & \dots ,&  F_{\lvert \vartriangle \rvert}^{(d)}
\end{bmatrix}
$$
where each $ F_j^{(d)}$ is a $d\times d$ dimensional matrix approximating $\F(t_j)$, for $j= 1,\dots,\vert\vartriangle\rvert.$

\subsection{Approximation}
Given a discretised covariance flow, the natural question is how well this approximates the latent continuous (in time and space) flow. Let $\F,\G\in \FF_C$. We discretise them according to some grid $\vartriangle$ on $[0,1]$, which we assume for simplicity to be comprised of equally spaced points, and some choice of orthonormal basis elements $\{e_1,\dots,e_d\}$ as described in the paragraph above:
$$
\begin{bmatrix}
 F_1^{(d)} & \dots ,&  F_{\lvert \vartriangle \rvert}^{(d)} 
\end{bmatrix}
\quad \text{and} \quad\begin{bmatrix}
 G_1^{(d)} & \dots ,& G_{\lvert \vartriangle \rvert}^{(d)} 
\end{bmatrix}.
$$
We claim that, for $\vartriangle$ sufficiently dense and $d$ sufficiently large, the quantity $\frac{1}{\lvert \vartriangle \rvert}\sum_{i=1}^{\lvert \vartriangle \rvert}\Pi(
F_i^{(d)}, G_i^{(d)})^2$ is a good approximation of $d(\F,\G)^2$.
In other words, that our metric is \emph{stable to discretisation}:
$$
\forall \: \delta> 0 \: \exists \: \lvert\vartriangle\rvert,d<\infty \::\: 
\left\lvert
\frac{1}{\lvert\vartriangle \rvert}\sum_{i=1}^{\lvert \vartriangle \rvert}\Pi(
F_i^{(d)}, G_i^{(d)})^2 - d(\F,\G)^2 \right\rvert < \delta.
$$
By standard approximation of integrals by Riemmanian sums, it is straightforward to see that:
$$
\left\lvert
\frac{1}{\lvert\vartriangle (\delta)\rvert}\sum_{i=1}^{|\vartriangle(\delta)|}\Pi(
\F(t_i), \G(t_i))^2 - d(\F,\G)^2 \right\rvert <\delta 
$$
where $\vartriangle(\delta)$ depends on  $\delta>0$, which can be chosen arbitrarily small.
Next, observe that for all $i=1,\dots,\lvert\vartriangle\rvert$:
$$
\lim_{d\rightarrow\infty}\Pi(F_i^{(d)}, G_i^{(d)}) = \lim_{d\rightarrow\infty}\Pi(\mathscr{P}_d \F(t_i) \mathscr{P}_d^*, \mathscr{P}_d \G(t_j) \mathscr{P}_d^*) = \Pi(\F(t_i), \G(t_i))
$$
This can be easily seen in consequence to \citet[][Lemma 5]{masarotto2019procrustes}, together with the continuity of Wasserstein distances.
In particular, for we may choose $d$ large enough so that:
$$\max_{i=1,\dots,\lvert\vartriangle\rvert}\left\lvert\Pi(F_i^{(d)}, G_i^{(d)}) - \Pi(\F(t_i), \G(t_i)\right\rvert<\delta,$$ which establishes an approximation result. In particular, if the trajectories of the flows $\F,\G$ almost surely lie within a compact set -- for instance when Assumption \ref{assumption_invertible_maps} is satisfies for some constant flow $\mathcal{R}(t) \equiv R$ -- then such convergence holds uniformly over all possible discretisations.

\medskip

Now consider $n$ flows $\F_1,\dots,\F_n$, with (empirical) \Frechet mean flow $\M_n$, and the corresponding $d$-dimensional projections 
$\{F^{(d)}_{i,j}\}_{j=1,\dots, \vartriangle}$ for $i=1,\dots,n$,
 where $F^{(d)}_{i,j} = \mathscr{P}_d \F_i(t_j)\mathscr{P}_d$ for $i=1,\dots,n$ and $j=1,\dots,\lvert \vartriangle \rvert$, with corresponding \Frechet means given by $M^{(d)}_{n,j}$.

 A natural question is then whether $\Pi(M^{(d)}_{n,j}, \M_n(t_j)) \rightarrow 0$ as $d\rightarrow\infty$. We have a positive answer: for every fixed time instant $j \in \vartriangle$, as the dimension of the projections $d$ diverges to infinity, the \Frechet mean of the finite dimensional projections $\{F_{i,j}^{(d)}\}_{i=1,\dots,n}$ converges to the \textit{true} \Frechet mean of the \textit{operators} $\{ \F_{i}(t_j)\}_{i=1,\dots,n}$: see \cite[][Theorem 11]{masarotto2019procrustes}.

\medskip

In particular, in the framework described above it is natural to define a \textit{flow} $\tilde\M_n^{(d)}$ interpolating of the  values $\{M_{n,j}^{(d)}\}_{j=1,\dots,\vert\vartriangle\rvert}$, for instance employing Mc'Cann's interpolant \eqref{eq : McCann's interpolant}.
Let $\varepsilon>0$ be arbitrary. Then, provided the discretisation grid $\vartriangle$ is sufficiently dense, standard approximation of integrals by Riemmanian sums yields that:
$$
d(\M_n, \tilde\M_n^{(d)})^2 \leq \frac{\varepsilon^2}{2}  
+ \lvert \vartriangle\rvert \sum_{j=1,\dots,\vert\vartriangle\rvert} \Pi(\M_n(t_j), \tilde\M_{n}^{(d)}(t_j))^2.
$$
for such grid,  which we remark to be fixed and of finite size, the stability argument above guarantees the existence of $d$ large enough so that:
$$
\max_{j=1,\dots,\vert\vartriangle\rvert} \Pi(M^{(d)}_{n,j},\M_{n}(t_j) \leq \frac{\varepsilon}{\sqrt{2 \vert\vartriangle\rvert}},
$$
and putting the two equations above together yields that
$$
d(\M_n, \tilde\M_n^{(d)}) \leq \varepsilon.
$$ In other words, for any arbitrary small $\varepsilon>0$ the there exists a sufficiently dense grid $\vartriangle$ and large enough projection size $d$ such that the latent empirical flow may be approximated at $\varepsilon$-distance by discretisation and interpolation.

\subsection{Optimisation}
Our methodology relies on the estimation of the \Frechet mean flow of a collection of i.i.d.\ covariance flows as a major stepping stone. 
 As per Lemma \ref{lemma : empirical FMF via pointwise min},  such optimisation problem may be reduced to pointwise minimisation. Therefore, given the discretised data,  we need to be able to compute the \Frechet mean of finite collection of $n$ covariance matrices, and repeat this over all grid points. That is, once the flows $\F_1,\dots,\F_n$ have been discretised on some grid $\vartriangle = \{t_j\}_{j=1\dots,\lvert \vartriangle \rvert}$, and are thus represented by the sequences of matrices
 $$
 \{F^{(d)}_{1,j}\}_{j=1,\dots, \vartriangle}, 
 \quad \dots,\quad 
 \{F^{(d)}_{n,j}\}_{j=1,\dots, \vartriangle},
 $$ 
 where $F^{(d)}_{i,j} = \mathscr{P}_d \F_i(t_j)\mathscr{P}_d$ for $i=1,\dots,n$ and $j=1,\dots,\lvert \vartriangle \rvert$, the discretised \Frechet mean flow at time $t_j$, namely $M^{(d)}_{n,j}$, is computed as the \Frechet mean of the matrices $\{F^{(d)}_{i,j}\}_{i=1,\dots,n}$.

\medskip

Though the \Frechet mean of such covariances does not admit a closed form expression, the gradient of the \Frechet functional does, allowing for a (provably convergent) gradient descent procedure. See \citet{zemel2019frechet}.  
Below we give the pseudo-code for the Gradient Descent algorithm (Algorithm \ref{ALG:GD}) for computing the \Frechet mean $\M^{(d)}$ of $n$ covariances $F_1^{(d)},\dots,F_n^{(d)}$.
\begin{center}
\noindent\begin{minipage}{0.68\textwidth}
\begin{algorithm}[H]
  \caption{Gradient Descent}\label{ALG:GD}
  \begin{algorithmic}[1]
    \Procedure{GD}{$M_0^{(d)}, \{ F_1^{(d)},\dots, F_n^{(d)}\}, K$}
      \For{$k = 1, \ldots, K$}
        \State $T_k^{(d)} \gets n^{-1} \sum_{i=1}^n T^{(d)}_{k,i}$
        \State $M_{k+1}^{(d)}\gets T_k^{(d)} M_k^{(d)} T_k^{(d)}$
      \EndFor
      \State \textbf{return } $M_K^{(d)}$
    \EndProcedure
  \end{algorithmic}
\end{algorithm}
\end{minipage}
\end{center}

The corresponding algorithm (see Algorithm \ref{ALG:GD}) proceeds as follows. Let $M^{(d)}_0$ be an injective initial point and suppose that the current iterate at step $k$ is $M^{(d)}_k$. For each $i$ compute the optimal maps from $M_k^{(d)}$ to each of the prescribed operators $F_i^{(d)}$, namely $$T^{(d)}_{k,i}=\left(M_k^{(d)}\right)^{-\nicefrac{1}{2}}\left[\left(M_k^{(d)}\right)^{\nicefrac{1}{2}} F^{(d)}_i\left(M_k^{(d)}\right)^{\nicefrac{1}{2}}\right]^{\nicefrac{1}{2}}\left(M_k^{(d)}\right)^{-\nicefrac{1}{2}}.$$ 
Define their average $T^{(d)}_k=n^{-1} \sum_{i=1}^n T^{(d)}_{k,i}$, which is a positive matrix, and then set the next iterate to $M^{(d)}_{k+1}(t)=T_k^{(d)} M_k^{(d)} T_k^{(d)}$. The algorithm starts with an initial guess of the \Frechet mean, for instance one of the $F_j^{(d)}$; it then lifts all observations to the tangent space at that initial guess via the log map, and averages linearly on the tangent space; this linear average is then retracted onto the manifold via the exponential map, providing the next guess, and iterates. 

\medskip

In finite dimension \citet{chewi2020gradient} established a linear rate of convergence for gradient descent (GD). Furthermore, they also demonstrated that a stochastic gradient  (SGD) descent algorithm converges at a parametric rate. See also \citet{altschuler2021averaging}.

\begin{center}
\begin{minipage}{0.68\textwidth}
\begin{algorithm}[H]
  \caption{Stochastic Gradient Descend}\label{ALG:SGD}
  \begin{algorithmic}[1]
    \Procedure{SGD}{$M_0,\{F_1^{(d)},\dots,F_n^{(d)}\}, \{\eta_j\}_{j = 1}^K, \{Y_j^{(d)}\}_{j = 1}^K$}
      \For{$t = 1, \ldots, K$}
        \State $
        S_k^{(d)} \gets (1-\eta_k)\Id^{(d)} + \eta_k T_{M_k^{(d)}}^{{Y}^{(d)}_k}
        $
        \State $M^{(d)}_{k+1}\gets S_k^{(d)} M_k^{(d)} S_k^{(d)}$
      \EndFor
      \State \textbf{return } $M^{(d)}_K$
    \EndProcedure
  \end{algorithmic}
\end{algorithm}
\end{minipage}
\end{center}

The SGD additionally requires a sequence $\{\eta_k\}_{k=1,\dots,K}$ of step sizes, as well as an i.i.d.\ sample $\{{Y}_k^{(d)}\}_{k=1\dots,K}$, distributed as  $F_1^{(d)}$. Such a sequence may for instance be obtained by resampling.
At each iteration, SGD moves the iterate along the geodesic
between the current estimate $M_k^{(d)}$ and ${Y}_k^{(d)}$ for a step size $\eta_k$.

As the authors point out in the reference, this stochastic variant exhibits a
slower convergence in the estimation of empirical barycenter, but has a much cheaper iteration cost, and lends itself better to parallelization.

\subsection{PCA}\label{subsec:comp:pca}

We now describe the implementation of the PCA described in Subsection \ref{subsec : PCA}.
\begin{enumerate}
    \item Given flows $\F_1,\dots,\F_n$, which we assume to be observed via $d$-dimensional projections on some grid $\vartriangle = \{t_j\}_{j=1\dots,\lvert \vartriangle \rvert}$, i.e.\ as the family of arrays of matrices
 $\{F^{(d)}_{i,j}\}_{i=1,\dots,n, \: j=1,\dots, \vartriangle}$, we estimate their \Frechet mean flow $\{\widehat M_{n,j}^{(d)}\}_{j=1\dots,\lvert \vartriangle \rvert}$ employing either gradient descend Algorithm \ref{ALG:GD} or \ref{ALG:SGD}.
 
    \item We log-transform to lift our sample to the tangent space. That is, for every $i=1,\dots,n$ we consider the array of log-processes 
    $$
    \left(\log_{\widehat{M}_{n,1}^{(d)}}F^{(d)}_{i,1},
    \quad \dots, \quad \log_{\widehat{M}_{n,\lvert \vartriangle \rvert}^{(d)}}F^{(d)}_{i,\lvert \vartriangle \rvert}\right)
    \in \TT_{\widehat M_{n,1}^{(d)}} \times \dots\times \TT_{\widehat M_{n,\lvert \vartriangle \rvert}^{(d)}},
    $$
    where  for $i=1,\dots,n$ and $j=1,\dots,\lvert \vartriangle \rvert$:
    $$
    \log_{\widehat{M}_{n,j}^{(d)}}F^{(d)}_{i,j} = (\widehat{M}_{n,j}^{(d)})^{\nicefrac{1}{2}}\left[(\widehat{M}_{n,j}^{(d)})^{\nicefrac{1}{2}} 
    F^{(d)}_{i,j}
    (\widehat{M}_{n,j}^{(d)})^{\nicefrac{1}{2}} \right]^{\nicefrac{1}{2}}(\widehat{M}_{n,j}^{(d)})^{\nicefrac{1}{2}} - \Id^{(d)}, 
    $$
    and we denoted by $\Id^{(d)}$ the $d\times d$ identity matrix.
    
    \item Though each $\TT_{\widehat M_{n,j}^{(d)}}$ is proper Hilbert space, for every $j=1,\dots,\lvert \vartriangle \rvert$, these have distinct geometrical structures. Therefore, to ease the implementation of PCA, we appeal to the isometric embedding into the space of flows of Hilbert-Schmidt operators introduced in \eqref{eq : embedding}. In fact, performing the PCA on the log-processes on the product space $ \TT_{\widehat M_{n,1}^{(d)}} \times \dots\times \TT_{\widehat M_{n,\lvert \vartriangle \rvert}^{(d)}}$ with respect to the inherited geometry is fully equivalent to performing PCA on the family of arrays:
    $$  \left({\J}_{\widehat M_{n,1}}\log_{\widehat{M}_{n,1}^{(d)}}F^{(d)}_{i,1},
    \quad \dots, \quad {\J}_{\widehat M_{n,\lvert \vartriangle \rvert}}\log_{\widehat{M}_{n,\lvert \vartriangle \rvert}^{(d)}}F^{(d)}_{i,\lvert \vartriangle \rvert}\right)
    \in \BB_2^{\lvert \vartriangle \rvert},
    $$
    as elements of the space $\BB_2^{\lvert \vartriangle \rvert}$, which is euclidean and allows for a straightforward PCA.
    
    Note that for $i=1,\dots,n$ and $j=1,\dots,\lvert \vartriangle \rvert$, we may explicitly write:
    $$ {\J}_{\widehat M_{n,j}}\log_{\widehat{M}_{n,j}^{(d)}}F^{(d)}_{i,j} 
    = 
    (\widehat{M}_{n,j}^{(d)})^{\nicefrac{1}{2}}\left((\widehat{M}_{n,j}^{(d)})^{\nicefrac{1}{2}}\left[(\widehat{M}_{n,j}^{(d)})^{\nicefrac{1}{2}} 
    F^{(d)}_{i,j}
    (\widehat{M}_{n,j}^{(d)})^{\nicefrac{1}{2}} \right]^{\nicefrac{1}{2}}(\widehat{M}_{n,j}^{(d)})^{\nicefrac{1}{2}} - \Id^{(d)}\right).
    $$
\end{enumerate}

\section{Application to Phase Varying Functional Time Series}
\label{sec : fts}

In this section we show how our methods can be applied to conduct a frequency domain based analysis of functional time series by making use of the spectral representation of stationary sequences.

\medskip

 A functional time series (FTS) is a time-ordered sequence of random elements in a Hilbert space $\H$, say $L^2( [ 0 , 1 ] ,  \mathbb{R})$, usually denoted as $X \equiv\left\{X_t\right\}_{t \in \mathbb{Z}}$. 
If the sample paths are continuous, one can interpret a functional time series as a sequence of random curves $\{X_t(x) \::\: x \in[0,1]\}_{t \in \mathbb{Z}}$ at each point; the discrete index parameter $t$ is often interpreted as a time variable, and the argument variable $x$ represents a continuous spatial location in the domain $[0,1]$.
 
\medskip

In the following we will assume that the series $\left\{X_t\right\}_{t \in \mathbb{Z}}$ is \textit{strictly stationary}: that is, for any finite set of indices $I \subset \mathbb{Z}$ and any $s \in \mathbb{Z}$, the joint law of $\left\{X_t, t \in I\right\}$ coincides with that of $\left\{X_{t+s}, t \in I\right\}$.
 Furthermore, we will assume that $\mathbb{E}\left\|X_0\right\|_2<\infty$. This allows us to define the mean of $X_t$ as an element of $L^2([0,1], \mathbb{R})$,  $m(\tau)=\mathbb{E} X_t(\tau)$, which is independent of $t$ by stationarity. Furthermore, we may define the following lag-$h$ autocovariance operator:
\begin{equation}
    \label{eq : autocovariance operators}
\mathscr{R}_h^X=\mathbb{E}\left[\left(X_h-m_X\right) \otimes\left(X_0-m_X\right)\right]=\mathbb{E}\left[\left\langle\cdot, X_0-m_X\right\rangle\left(X_h-m_X\right)\right] .
\end{equation}
which encapsulates the second-order stochastic dynamics of the process.
 \citet{panaretos2013cramer, panaretos2013fourier} introduced a frequency-domain approach to functional time series analysis. Their method develops a spectral representation for stationary sequences that simultaneously capture both the within curve dynamics (the dynamics of the curve $\{X_0(\tau) \;:\; \tau \in [0, 1]\}$) as well as the between curve dynamics (the dynamics of the sequence $\{X_t \;:\; t \in \mathbb{Z}\}$). The building block of their analysis is the definition of a suitable notion of Fourier transform of the autocovariance operators \eqref{eq : autocovariance operators}, referred to as \textit{spectral density operator} of the time-series $\left\{X_t\right\}$:
\begin{equation}
\label{eq : spectral density operators}
\F_\omega=\frac{1}{2 \pi} \sum_{t \in \mathbb{Z}} e^{-\mathrm{i} \omega t} \RR_t,
\end{equation}
which is well-defined in trace norm, provided the autocovariance operators satisfy the weak dependence condition $
\sum_{h \in \mathbb{Z}}\left\|\mathscr{R}_h^X\right\|_1<\infty.$ 
 Furthermore, the following inversion formula holds:
$$
\RR_t = \int_0^{2\pi} e^{\mathrm{i} \omega t}\F_\omega
$$
 establishing that the autocovariance and the spectral density operators constitute a Fourier pair.

\medskip
Set $\H = L^2([0,1])$, and denote by $\H^{\mathbb{C}}$ its complexification.
Observe that, for all $\omega\in[-\pi,\pi]$, the spectral density operator $\F_\omega$ is an element on the space $\BB(\H^{\mathbb{C}})$ of bounded operators on the complexification of $\H$. In fact, it may be shown that the spectral density operator $\F_\omega$ is self-adjoint and nonnegative. Furthermore, whenever $\sum_{h\in\mathbb{Z}}|\tr(\RR_{h})|<\infty$, one may show that the spectral density operator is uniformly bounded and trace-class: 
$$
\sup_{\omega\in[0,2\pi]}\left\|\F_\omega\right\|_1 \leq \frac{1}{2 \pi} \sum_{h\in\mathbb{Z}}|\tr(\RR_{h})|
$$
and that the mapping 
$$\FF\;:\; [-\pi,\pi]\rightarrow \KK(\H^{\mathbb{C}}), \quad \FF(\omega) =\FF_{\omega}$$
defines a uniformly continuous flow of covariance operators on $\H^{\mathbb{C}}$. Indeed:
\begin{align*}
\left\|\FF_{\omega_1}-\FF_{\omega_2}\right\|_{\BB_1(\H^{\mathbb{C}})} & \leq \sum_{t \in \mathbb{Z}}\left|e^{-\mathrm{i} t \omega_1}-e^{-\mathrm{i} t \omega_2}\right|\left\|\RR_t\right\|_{\BB_1(\H^{\mathbb{C}})} \\
& \leq C \sum_{|t| \leq N}\left|e^{-\mathrm{i} t \omega_1}-e^{-\mathrm{i} t \omega_2}\right|+2 \sum_{|t|>N}\left\|\RR_t\right\|_{\BB_1(\H^{\mathbb{C}})},
\end{align*}
where $C=\max _{t \in \mathbb{Z}}\left\|\RR_t\right\|_{\BB_1(\H)}$. Fixing $\varepsilon>0$, since $\sum_{t \in \mathbb{Z}}\left\|\RR_t\right\|_1<\infty$, we can choose $N=N(\varepsilon)>0$ such that the right-hand sum is smaller than $\varepsilon / 2$, and since for each $t$, the function $\omega \mapsto e^{-\mathrm{i} t \omega}$ is uniformly continuous, we may also choose $\delta=\delta(N, \varepsilon)>0$ such that the left-hand side is smaller than $\varepsilon / 2$, by which uniform continuity follows.

\medskip

It is readily seen that the methodology and theoretical results we presented herein, where we consider a real-valued Hilbert space, may be directly extended to complex Hilbert spaces, starting from the definition of the Bures-Wasserstein distance \eqref{eq : bures-wasserstein distance} all the way to the principal component analysis for covariance flows and the Karhunen-Loève decomposition appearing in Subsection \ref{subsec : PCA}. In particular, our methods concerning flows of covariance operators may be adapted and applied to flows of complex-valued covariance operators, and specifically to the analysis of spectral density operators of functional time series.

\medskip

This in turn allows for the introduction of hierarchical models for collections of heterogeneous time series. 
 Given an $m$-vector whose components are stationary functional time series, say $(X^{(1)},\dots,X^{(m)})$, one can postulate a model where each $X^{(j)}$ is generated conditionally on the realisation of (a random) spectral density $\F^{(j)}$, via a Cram\'er-Karhunen-Lo\`eve expansion \citep{panaretos2013cramer}. Conversely, one may conduct a PCA on the spectral densities, i.e. the flows of spectral density operators $\F^{(1)}, \dots,\F^{(n)}$, interpreted as continuous flows of covariance operators, and proceeding according to the methodology we described herein.
We apply this to real data in Section \ref{subsect : FTS mortality}.

\section{Numerical Simulations}
\label{sec : simulations}

\subsubsection*{A generative model}

\label{subsec freqdom genmod}
Conducting a simulation study testing our methodology requires the generation of independent, identically distributed flows of covariances $\F_1,\dots,\F_n$. We do this by independently applying smooth perturbations to a fixed template flow. 
Our construction of the flows goes as follows.
Consider a fixed template $\M:[0,1]\rightarrow\KK$ and a family of random i.i.d.\ non-negative, continuous ``perturbation curves'' 
$\{T_i: [0,1]\rightarrow \BB(\H)\}_{i=1,\dots,n},$ in that $\lim_{s\rightarrow t}\|T_i(t)-T_i(s)\| =0$ for all $i$ and $t$ We then define a sample of i.i.d.\ flows $\F_1,\dots,\F_n$ by setting:
\begin{equation}\label{eq : template}
\F_i(t) = T_i(t)\M(t)T_i^*(t), \qquad \text{for every } t\in[0,1] \text{ and } i=1,\dots,n,
\end{equation}
with $\E T_1(t) = \Id$ for $t\in[0,1]$, so that $\M$ is \Frechet mean flow of the $\F_j$.

To construct the mean-identity continuous non-negative family of perturbing curves we proceeded as follows.
Consider the harmonics
$
\{\psi_k \::\: x\mapsto \psi_k(x)\}_{k\in\mathbb{N}},
$

$$
\psi_k(x) =\begin{cases}
1 \hspace{2cm} \text{if } k=0\\
\sin(2\pi k x), \quad \text{if } k>0, k \text{ odd}\\
\cos(2\pi k x), \quad \text{if } k>0, k \text{ even},
\end{cases}
$$
which constitute a complete orthonormal system for $L^2([0,1],\R)$.  Each perturbation flow $T_i$, for $i=1,\dots,n$, evaluated at a point in time $t\in[0,1]$, is constructed by setting:
\begin{equation}\label{eq : T freq gen}
    T_i(t) = \sum_{k=0}^\infty \lambda^{(i)}_k(t) \psi_k(\cdot - \theta^{(i)}(t)) \otimes \psi_k(\cdot -  \theta^{(i)}(t)),
    \qquad \text{with}\quad \lambda^{(i)}_k(t) = \frac{1}{\nu} c^{(i)} \cdot W^{(i)}_k(t),
\end{equation}
with $c^{(i)}\sim \chi^2(\nu)$ a chi-squared random variables with $\nu$ degrees of freedom, $t\mapsto W^{(i)}(t)$ a continuous random curve that is everywhere positive and of mean $1$, and
$
t\mapsto \theta^{(i)}(t)
$
 a stochastic process taking values in $[0,2\pi]$ with continuous trajectories. 
 All objects are mutually independent.
Note that
$\E T_i(t) =\Id$, regardless the distribution of each $W^{(i)}, \theta^{(i)}$, with the parameter $\nu$ controlling the concentration of $ T_i(t)$ around the identity.

\medskip 

In the simulations we considered $\H = \R^d$ for $d=100$. The template flow is taken to be the geodesic flow (McCann interpolation) between the standard Brownian motion and standard Brownian Bridge covariances. The unit interval was discretized and subdivided into $100$ sub-intervals. The summation in \eqref{eq : T freq gen} was truncated at the first $50$ summands. 
The empirical \Frechet mean was estimated by the gradient descent Algorithm \ref{ALG:GD}.

\begin{figure}[h!]
    \centering
    \includegraphics[width=1\textwidth]{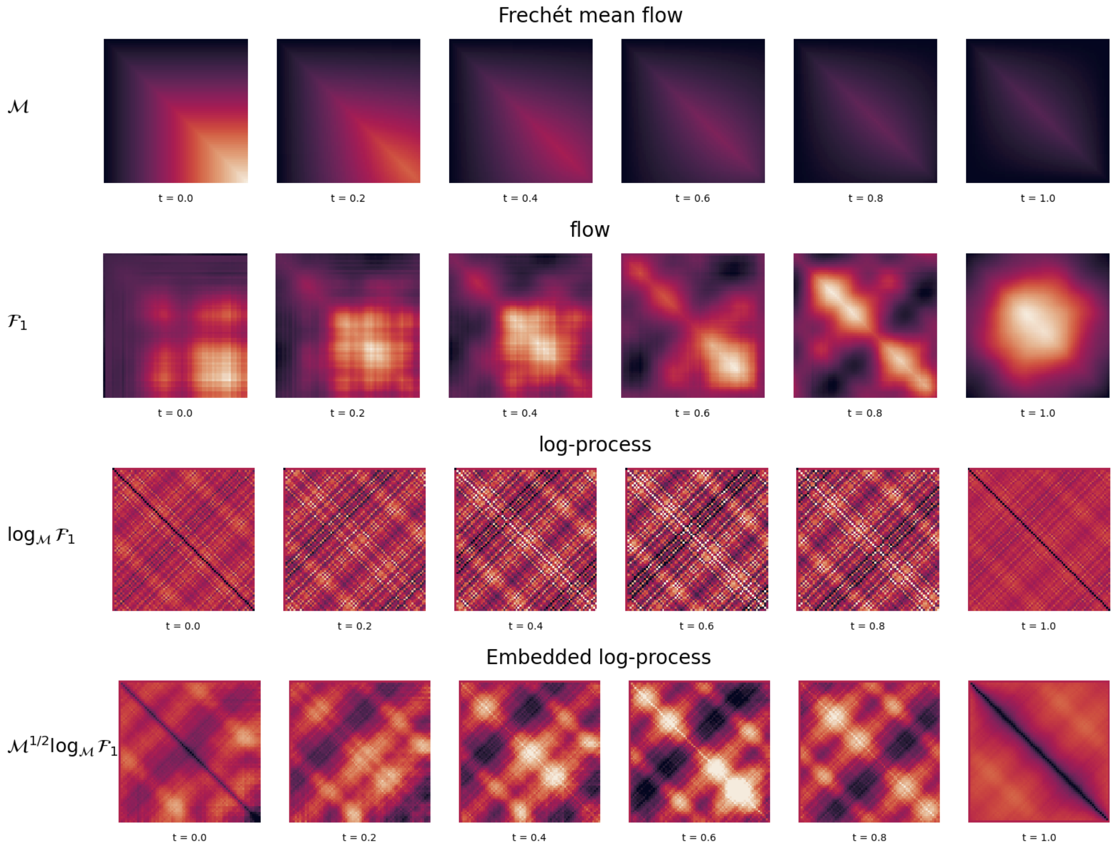}
    \caption{ From top to bottom: the latent template -- obtained by McCann's interpolation of the Brownian Motion covariance and the Brownian Bridge covariance, a random perturbation of it, its logarithmic projection on the tensor Hilbert space of the barycenter, and its embedding into $L^2_{\BB_2}$. }
    \label{fig:flow_log_emblog}
\end{figure}

\subsubsection*{PCA \& Clustering}
In this subsection we show that the functional Principal Component Analysis described in Subsection \ref{subsec : PCA} can be effective in extracting patterns in covariance-valued functional data.
With this aim in mind, we generated a synthetic dataset of covariance flows sampling from a bimodal distribution. Specifically, we generated a dataset by functional geodesic perturbations of a fixed template as in \eqref{eq : template}, with perturbation flows constructed as follows:
$$
T_i(t) = W^{(i)}(t) \sum_{k=0}^\infty 1/k c^{(i)}_k  \psi_k(\cdot - \theta^{(i)}) \otimes \psi_k(\cdot -  \theta^{(i)}),
$$
where, for $i=1,\dots,n$:
the scalars $(c^{(i)}_k)_{k\geq 0}$ are i.i.d.\ $\chi^2(\nu)$ for $\nu =10$;  $\theta^{(i)} \sim \mathrm{Uniform}(0,2\pi)$ are i.i.d.; and $t\mapsto W^{(i)}(t)$ are random i.i.d.\ smooth curves, with mean  $1$, but with a bimodal law.
Specifically, to generate the $W^{(i)}(t)$ we considered two distinct fixed random smooth curves, $t\mapsto g_1(t)$ and $t\mapsto g_1(t)$, with trajectories in $[0,1]$. Then, we generated $n$ two-dimensional points $\{ (a^{(i)}_1,a^{(i)}_2) \}_{i=1,\dots,n} \subset \R^2$ by:
$$
(a^{(i)}_1,a^{(i)}_2) \sim \begin{cases}
\mathrm{Uniform}(0,1) \times \mathrm{Uniform}(-1,0), \quad \text{with probability } 1/2,\\
\mathrm{Uniform}(-1,0) \times \mathrm{Uniform}(0,1), \quad \text{with probability } 1/2.
\end{cases}
$$
Each curve $t\mapsto W^{(i)}(t)$ was obtained as linear combination of $t\mapsto g_1(t)$ and $t\mapsto g_2(t)$, with coefficients given by $(a^{(i)}_1,a^{(i)}_2)$:
$$
 W^{(i)}(t) = 1 + a^{(i)}_1 \cdot g_1(t) +  a^{(i)}_2 \cdot g_2(t), \qquad \text{for } i=1,\dots,n.
$$

\begin{figure}[h!]
    \centering
    \includegraphics[width=.9\textwidth]{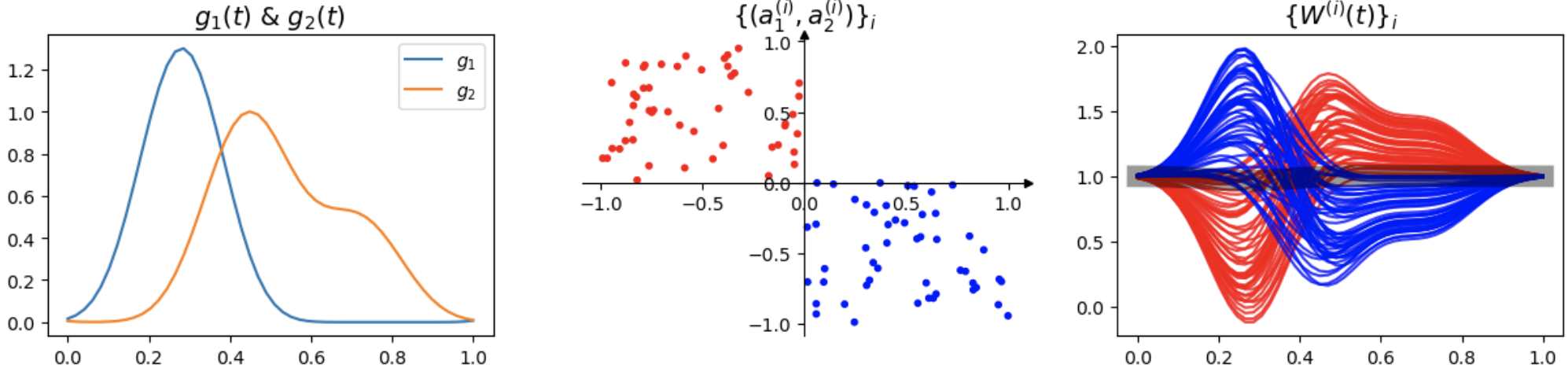}
    \caption{Generation of samples of the  random bimodal curves $W^{(i)}(t)$. The two clusters, associated with the two modes of variation of the random curve, are identified by the color red and blue. }
    \label{fig:_gen_bimodal_clusters}
\end{figure}

Note that the construction ensures that $\E[ W^{(i)}(t)] = 1$ for every $i=1,\dots,n$ and $t\in[0,1]$, but with two modes of fluctuation. Since $W^{(i)}$ essentially determines the rate of the spectral decay of the perturbation map $T_i$, we expected such a construction to generate a dataset of flows with two distinct clusters, and the Principal Component Analysis described in Subsection \ref{subsec:comp:pca} to detect these two modes of variation and distinguish the clusters. 
And indeed generating 100 flows as described above, we found that the information relative to the two distinct clusters is encoded in the variation corresponding to the first principal component, which was in fact sufficient to achieve perfect classification of the data, though explaining only 34\% of the variability in the data.

\subsubsection*{Visualising the modes of variation}
In (real-valued) functional data analysis, it is typical to visualise the effect of the first principal component, say $\phi_1$, by means of comparing the plots of the mean function $m$ with those of $m+\lambda \phi_1$ and $m-\lambda \phi_1$, for some $\lambda>0$. In our setting, such a visualisation is not straightforward, and not (only) for the complicated nature of our kind of data, bur rather because the Principal Component Analysis we developed takes place at the level of the embedded tangent space, and not directly at the level of the ambient space of covariance flows. 

Indeed, with our analysis we obtain principal components $\{\Phi_k\}_{k\geq 1}\in L^2_{\BB_2}$ in consequence to the linear analysis of the embedded log-transformations of the observed flows,
$\bJ_{\M_n}\log_{\M_n}\F_i\}_{i=1,\dots,n}$, where $\M_n$ denotes the \Frechet mean flow of the $\{\F_i\}_{i=1,\dots,n}$. Note however that, by \citet{ocana1999functional}, this is equivalent to performing PCA on the log-processes in the tangent space norm, in the sense that the eigenvalues (i.e., the variances) remain the same, and the eigenfunctions $\Phi_k$ of the embedded log-processes with respect to $\langle\cdot,\cdot\rangle_{\BB_2}$ are simply $\J_{\M_n}$ applied to the eigenfunctions $\Psi_k$ of the covariance of the log-processes in $\langle\cdot,\cdot\rangle_{\TT_{\M}}$. That is, $\Psi_k := {\bJ}_{\M_n}^{\dag}\Phi_k$, for $k\geq 1$.

Given the manifold structure of the space of flows, \textit{locally} the projection on the tangent space provides a faithful representation of $\KK$ near $\M_n$: though the exponential map does not constitute a homeomorphism, for a given tangent vector $V\in\TT_{\M_n}$ and for small values of $\lambda \in\R$, the map $\lambda \mapsto 
 \exp_{\M_n(t)}\{\lambda V(t)\}$ well approximates constant-speed geodesic paths.
Furthermore, for each $k\geq 1$, $t\in[0,1]$ and small enough values of $\lambda$, we may  interpret $\lambda \Psi_k(t) + \Id$ and $-\lambda\Psi_k + \Id$  as optimal transport maps. 
In particular, we may consider:
$$
\lambda \mapsto 
 \exp_{\M_n(t)}\{ \lambda \Psi_k(t)\} 
 = \left( \lambda\Psi_k(t) + \Id\right)\M(t)\left( \lambda\Psi_k(t) + \Id\right),
 \qquad  \lambda \in [-\lambda_{\star}, \lambda_{\star}]
 $$
 for some upper bound $\lambda_{\star}>0$,
which heuristically represent in $\FF_C$ the \textit{positive and negative linear deformations} near $\M_n$ which are caused by the effect of the principal component $\Psi_k$ at the level of the tangent space $\TT_{\M_n}$. See Figure \ref{fig : mean +- pc}.

\begin{figure}[h!]
    \centering
    \includegraphics[width=1\textwidth]{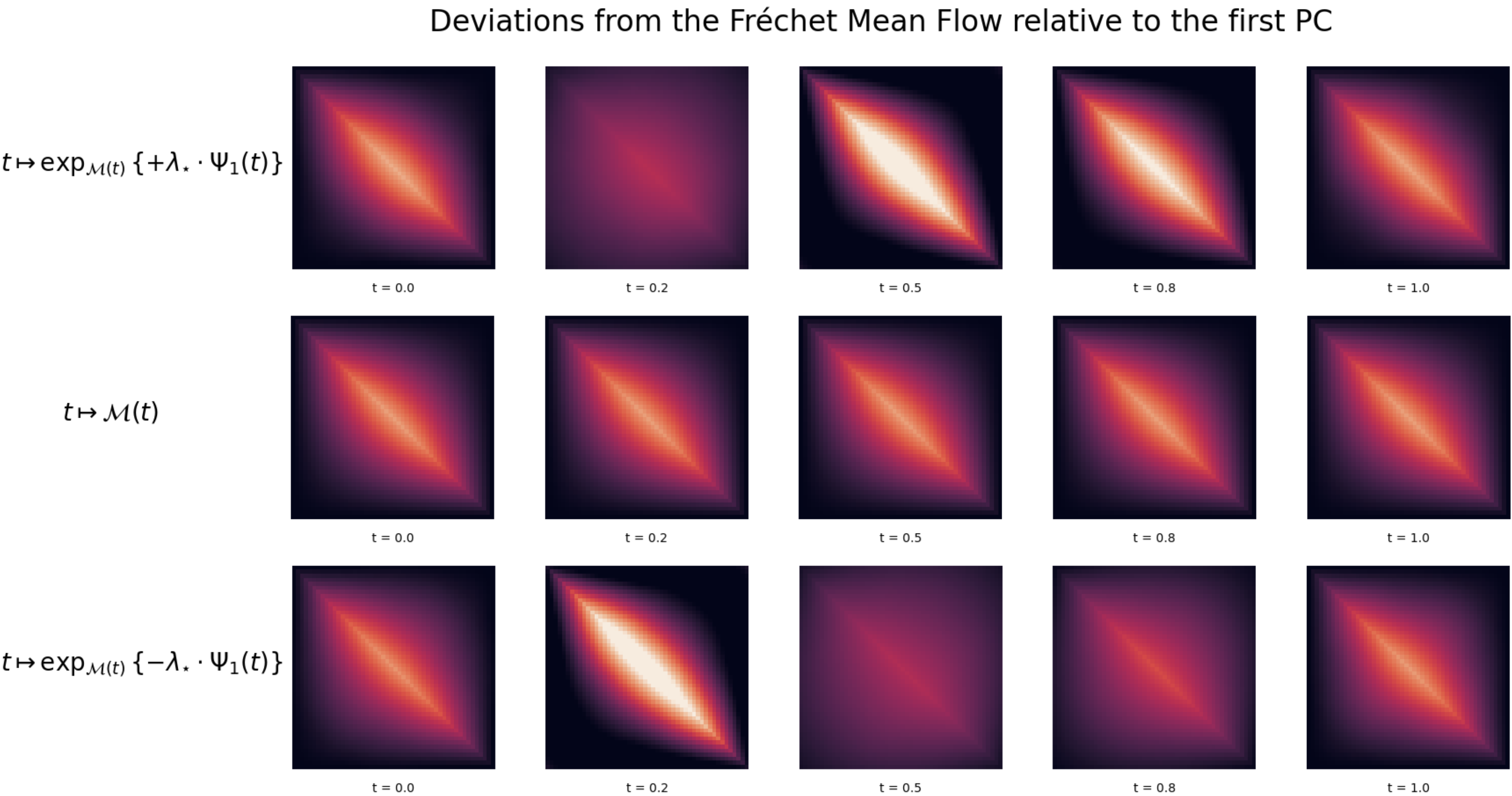}
    \caption{Positive and negative deviation from the \Frechet mean flow relative to the first principal component through the exponential map. It is interesting to notice the consistency with the modulus of the difference between the two clusters in the right plot in Figure \ref{fig:_gen_bimodal_clusters}.}
    \label{fig : mean +- pc}
\end{figure}

\section{Data Analysis:}

\subsection{EEG Movement/Imagery Dataset}
\label{subsect : EEG}
This data set consists of over 1500 one and two minute 64-channel EEG recordings, obtained from 109 volunteers. In this analysis we compare the recordings of resting state EEG signals with the measurements obtained while subjects performed motor/imagery tasks. In particular, we consider:
\begin{itemize}
    \item recordings of a baseline activity, measuring brain signals of the subject at rest with with eyes closed; such data was recorded twice for each volunteer, for a duration of one minute.
    \item recordings of the brain activity during the following task: a target appears on the side of the screen, at which point the subject opens and closes the corresponding fist until the target disappears, after which the subject relaxes; such data was recorded thrice for each volunteer, for a duration of two minutes.
\end{itemize}
During each task, the
brain of each subject is scanned and  brain activities are recorded at 160 samples per second, from 64 electrodes as per the international 10-10 system as shown in Figure \ref{fig : eeg loc}.
\begin{figure}[h!]
    \centering
    \includegraphics[width=.5\textwidth]{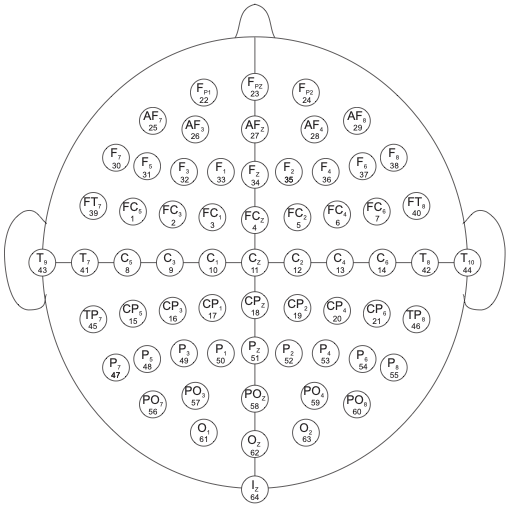}
    \caption{Placement of the electrodes, as per the international 10-10 system; the numbers below each electrode name indicate the order in which they appear in the records.}
    \label{fig : eeg loc}
\end{figure}

At each design time
point $t$, the functional connectivity between brain areas of each subject $i$ is represented by the
covariance matrix $\F_i(t)$ of EEG signals from regions of interest (ROI) at time $t$. 

Let $X_{i,t}$ be the 64-dimensional column vector
that represents the EEG signals at time $t$ from the i-th subject during some experiment. To estimate the connectivity matrix $\F_i(t)$ we adopted a local sliding window approach as in \citet{lin2019intrinsic}:
$$
\F_i(t)=\frac{1}{2 h+1} \sum_{j=t-h}^{t+h}\left(X_{i j}-\bar{X}_{i t}\right)\left(X_{i j}-\bar{X}_{i t}\right)^T \quad \text{with}\quad \bar{X}_{i t}=\frac{1}{2 h+1} \sum_{j=t-h}^{t+h} X_{i j}
$$
where $h$ is a positive integer that represents the length of the sliding window; this is required to be sufficiently large for regularisation, but not too large to avoid significant bias. Our analysis was in fact found to be rather robust with respect to values of $h$ ranging from 100 to 1000. Below we consider $h=500$.
When an experiment was repeated multiple times by a subject, the estimate of the time-varying connectivity matrix was obtained by averaging over the estimates obtained in separate experiments.

\bigskip

We conducted a permutation test to test the equality in distribution of the flows of connectivity matrices at rest and during the movement/imagery task. We considered the distance between the two \Frechet mean flows corresponding to the two subgrups as test statistic, which lead us to reject the null. 

\bigskip

We performed our Principal Component analysis method for random flows of covariances on the pooled samples of time-varying connectivity matrices, at the level of the tangent space of their pooled \Frechet mean flow.
We found that the first two principal component scores -- while explaining for only ~50\% of the data variability -- where sufficient in inferring
whether a sample came from a resting or active subject in over 4/5 of the cases by means of e functional linear discriminant analysis (fLDA)
classifier, which has been previously used on EEG signals in \cite{masak2023separable}. Our results were validated by out of sample leave-one-out cross-validation.
In particular, we observed a cross-validated accuracy ranging from 83\% to 89\%, depending on the number of considered components. But most interestingly, we observed that the second principal component in itself carried most of the relevant information relative to the classification, achieving a classification accuracy of 81\%, while explaining only $\sim$10\% of the data variabuility. We found that the effect of the second principal component mostly concerned changes in the correlations between areas of the frontal lobe (F and AF), which were increased in the subjects performing the activity. This is a reasonable result, as the frontal lobe is known to be relevant in all voluntary movement, and is further evidence that our method is effective in extracting valuable information from time-varying coviariance data.

\subsection{FTS analysis:  Human Mortality Database}
\label{subsect : FTS mortality}

We analyse the functional time series given by the age-at-death rates for $N = 32$ countries between the years 1960 to 2011, 
obtained from the Human Mortality Database of UC Berkeley and the Max Planck Institute for
Demographic Research (openly accessible on \href{www.mortality.org}{\texttt{www.mortality.org}}). 
Death rates are provided by
single years of age up to 109, with an open age interval for 110+.

\begin{figure}[h!]
    \centering
    \includegraphics[width=1\textwidth]{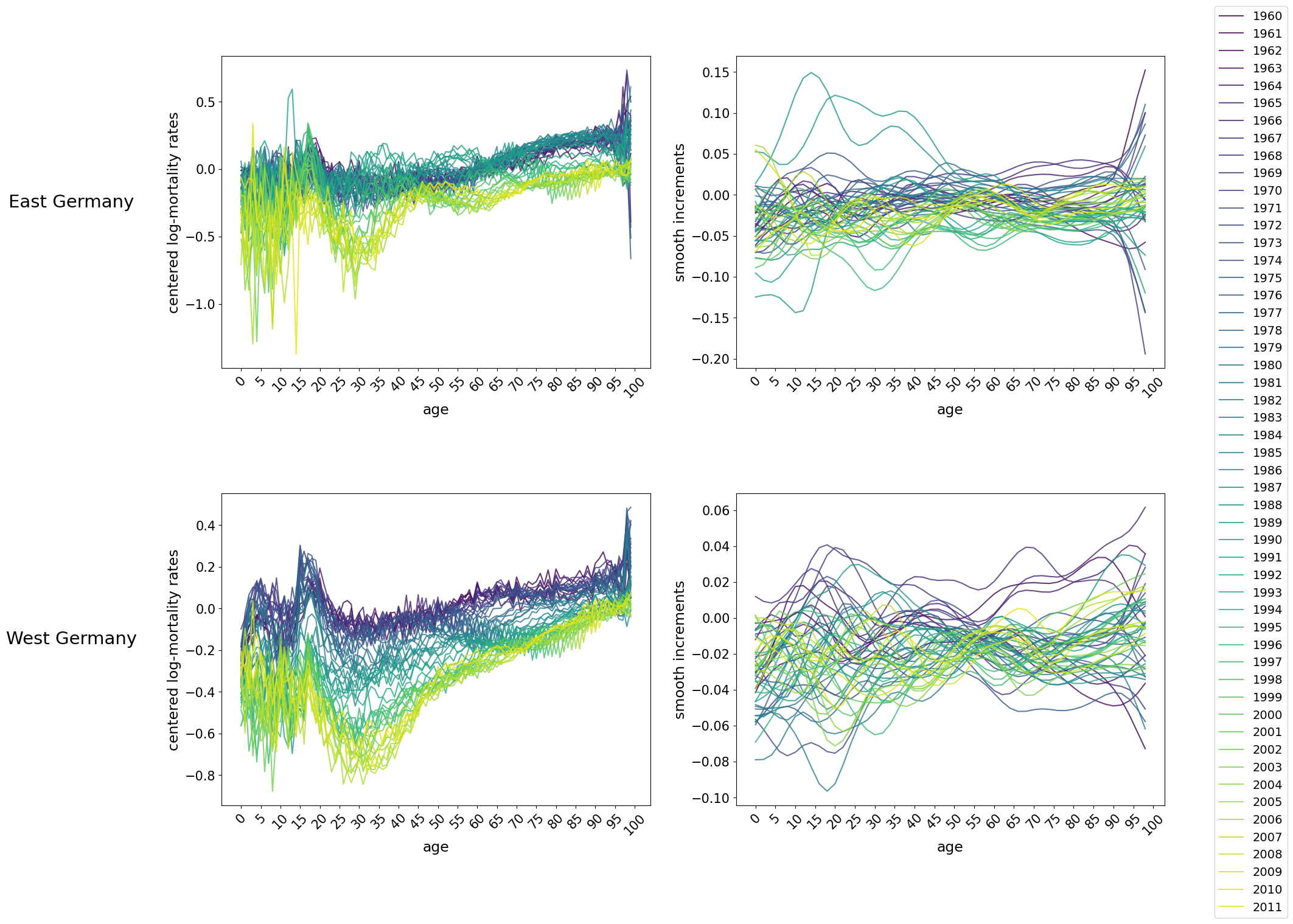}
     \caption{{``Rainbow plot" of the centered logarithmic age-at-death rates (left) and corresponding smoothed increments (right) between 1960 and 2011 in the Eastern and Western bundeslands in Germany. 
     Recall that the Berlin Wall fell in 1989, roughly at the temporal mid point of the time series.}}
    \label{fig : east_west_DE}
\end{figure}

Age-specific mortality rates have been
studied and analysed as functional time series by various authors: see for instance \citet{tang2022clustering} and the references therein. The value of the time series on any given  year can be regarded as a function, with age as its continuous argument. Each distinct country thus yields its own functional time series, indexed by year. 

\medskip

In our analysis, we consider the centered log-transforms of the mortality rates. We aggregate the population aged over 100. Since these are evidently non-stationary, we considered the corresponding increments, obtained by differencing the log-mortality rates. These were then smoothed by kernel smoothing, with bandwith $h=5$ and Gaussian kernel, and averaged over consecutive years. We denote by $\{ (X^{(i)}_t)_{t=1960,\dots,2010}  \}_{i=1,\dots,32}$ the corresponding time series, where the index $i$ represents the corresponding country and $t$ the year.
For each functional time series $X^{(i)}$,  we estimated its corresponding lag-$h$ autocovariance kernel:
$$
\widehat \RR^{(i)}_h = \frac{1}{50-h}\sum_{t=0,\dots,50-h} X^{(i)}_{t+h}X^{(i)}_t, \quad  h =0,\dots,50, \: i=1,\dots,32
$$
and the associated spectral density operators:
$$
\widehat \FF^{(i)}_{\omega} = \sum_{h=0,\dots,50} e^{-it\omega} \widehat\RR^{(i)}_h.
$$

We then clustered the countries running a naive $k$-means algorithm (with 1000 random initial points) on the flows of spectral density operators $\FF^{(1)}_{\cdot}, \dots, \FF^{(32)}_{\cdot}$, with the integrated Bures-Wasserstein distance \eqref{eq : distance between flows} as metric.

Given our data’s nature, to avoid the “curse of dimensionality”, we first apply the Principal Component dimensional reduction method described in Subsection \ref{subsec : PCA} to the spectral density operator flows,
before applying the k-means clustering method. Of course the number of clusters is unknown in
advance, and needs to be determined before clustering. Our criterion for the choice of the number of clusters $k$ was based on the analysis of the dispersion and of inertia of the produced clusters.  Distortion is the average sum of squared distance between each data point to the centroid, while inertia is just the sum of squared distance between the data point to the center of the cluster/centroid. The ``elbow method" suggested $k=5$ for both considered metrics: Figure \ref{fig : elbow}.

\begin{figure}[h!] 
\centering
    \includegraphics[width=.95\textwidth]{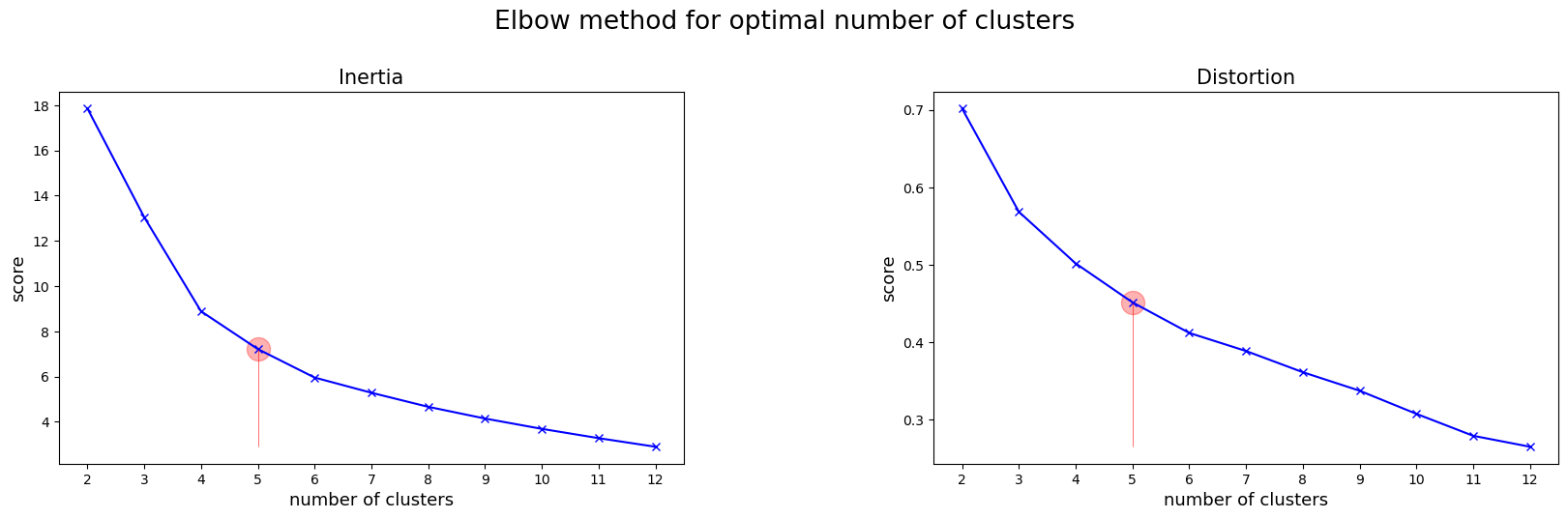}
     \caption{Selection of the optimal number of clusters $k$ to run $k$-means, based on the inertia and distortion scores, on which we evaluated the algorithms output for $k\in\{2,\dots,15\}$. In red the ``elbow" detected by the kneedle algorithm. } 
    \label{fig : elbow}
\end{figure}

The clusters produced by the $k$-means algorithm with $k=5$ applied to the data are illustrated by the pairwise distance matrix in Figure \ref{fig : clust_matrix}. To illustrate the usefulness of a clustering
analysis on multi-country mortality data, we finally report the rainbow plots illustrating the respective common time trends of each detected cluster: see Figure \ref{fig : trends}.
Clusters are shown on a World map in Figures \ref{fig : clust_map_ks} and \ref{fig : clust_map}.

\begin{figure}[h!]
    \centering
    \includegraphics[width=1\textwidth]{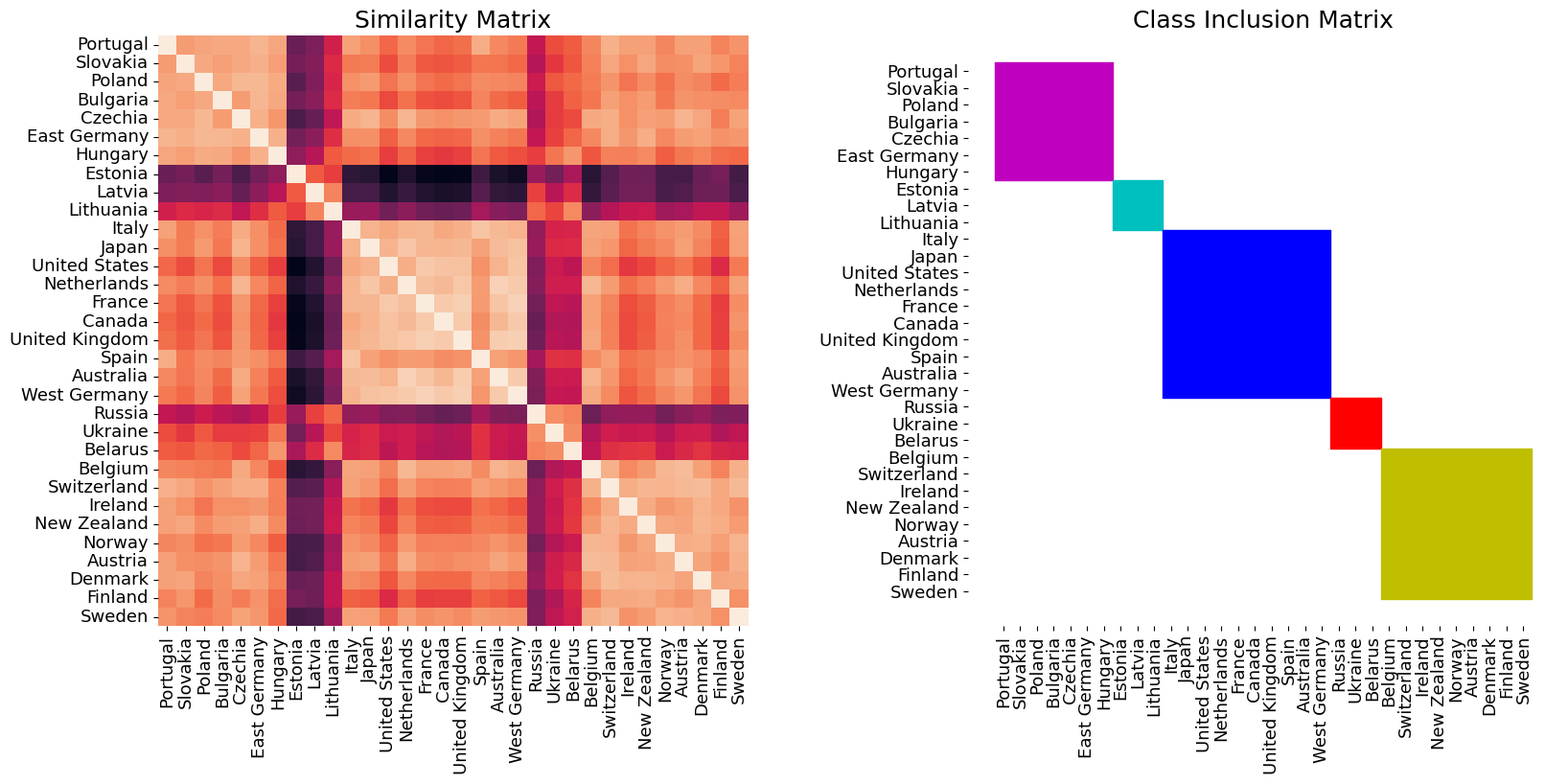}
         \caption{Left: similarity matrix reporting the pairwise distances between power spectral density operators corresponding to the Functional Time Series. Right: cluster inclusion matrix, reporting the clusters obtained via $k$-means with $k=5$. }
    \label{fig : clust_matrix}
    
    \vspace{.5cm}

    \includegraphics[width=1\textwidth]{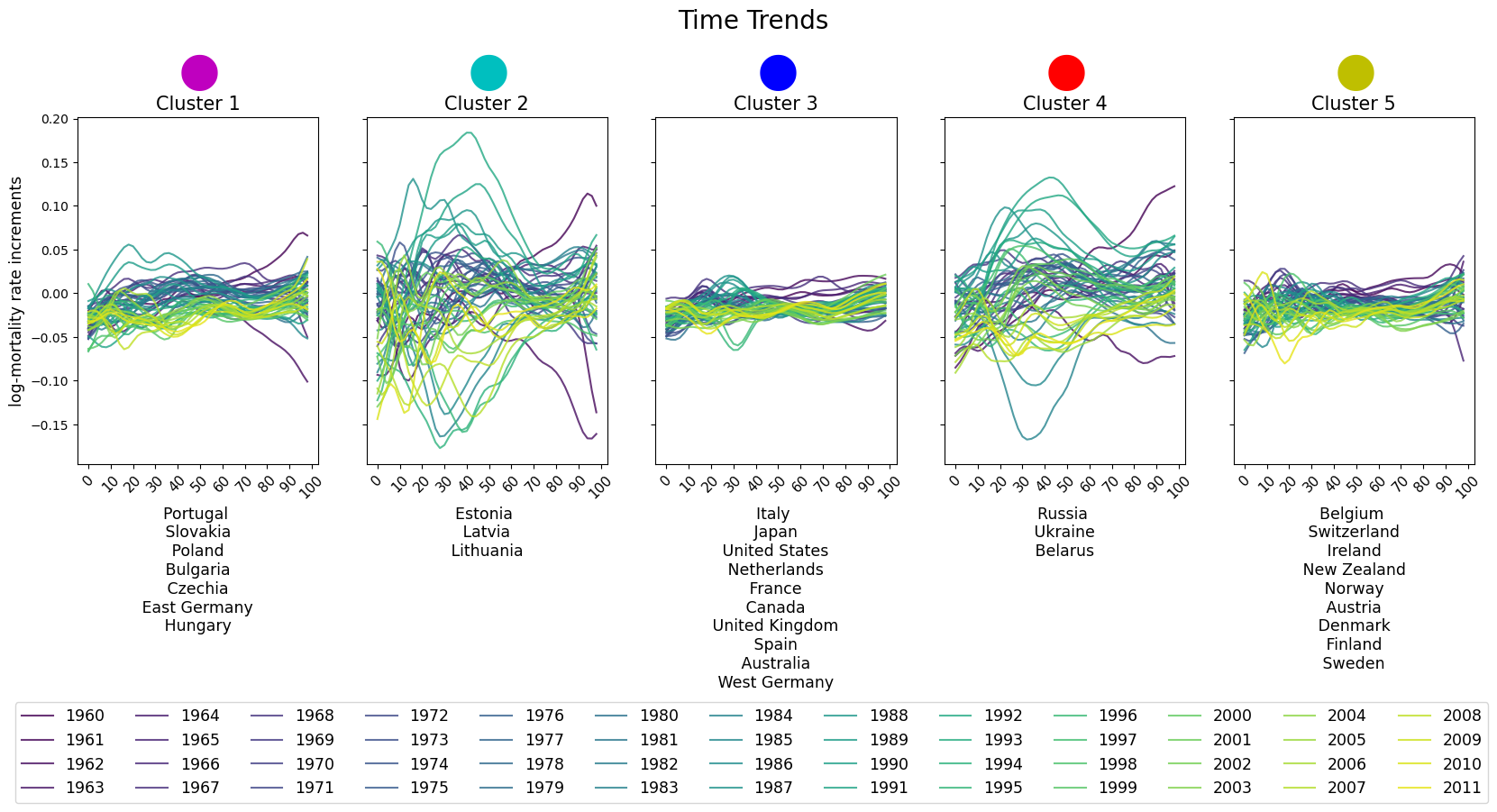}
     \caption{Rainbow plots illustrating the respective common time trends in each cluster detected by $k$-means with $k=5$. }
    \label{fig : trends}
\end{figure}

\begin{figure}[h!]     
    \centering
    \includegraphics[width=1\textwidth]{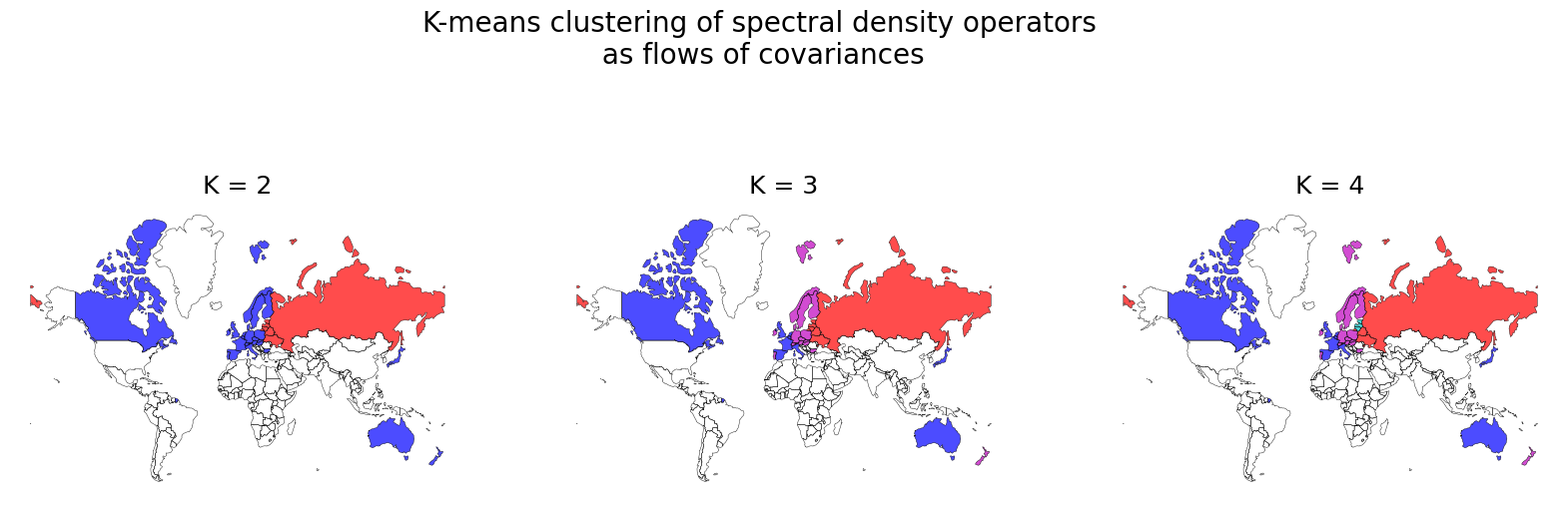}
     \caption{The clustering produced by K-means algorithm with $k=2,3,4$.  }
     \label{fig : clust_map_ks}

\vspace{1.5cm}

    \includegraphics[width=.95\textwidth]{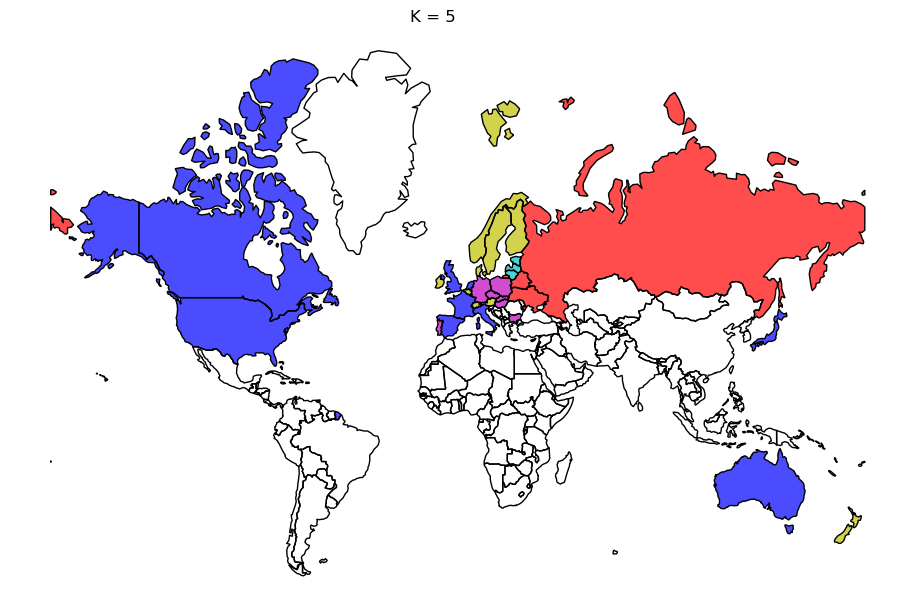}
     \caption{The clustering produced by K-means algorithm with $k=5$. .  }
    \label{fig : clust_map}
\end{figure}

As may be observed from our plots, the clustering produced rather intuitive results, grouping together countries that have shared similar paths in history, and have traditionally been considered to have similar lifestyles and diets. When requiring the algorithm to partition the countries in two groups,  these seem to reflect the partition of the cold war (western and eastern blocks). Increasing the number of clusters to three, certain central and northern Europe countries form their own cluster, while most leading European economies are excluded from this cluster. Interestingly, this is reflected in the case of East and West Germany: while the first remains grouped with the US and more affluent European economies, the latter clusters with most Eastern-European countries, and Portugal.  When $k=4$, the Baltic republics form an additional separate cluster.
Finally increasing $k$ to $5$, which is the optimal value according to inertia and dispersion scores, we see the formation of an additional cluster which includes the northern European countries, together with Ireland, Belgium, Switzerland, Austria and New Zealand.
Our findings present multiple similarities to those obtained by \citet{tang2022clustering}, who develop and apply a functional panel data
model with fixed effects to cluster the time series. Overall, the cluster results reflect a number of factors, such as, geography, lifestyle, ethnicities, socio-economic status and dietary traditions.

\section{Appendix}\label{sec:appendix}
\setcounter{equation}{0}
\renewcommand{\theequation}{A.\arabic{equation}}

\begin{proof}[Proof of Proposition \ref{prop : space pf continous flows is a metric space}]
Note that the distance between continuous flows is continuous. That is, consider $\F,\G\in\FF_C$, $\F= \{F_t\}_{t\in[0,1]}, \G = \{G_t\}_{t\in[0,1]}$; the map $t\mapsto f(t):= \Pi(F_t,G_t)$ is continuous:
\begin{align*}
    |f(t+h)-f(t)|
    \leq & \: | \Pi(F_{t+h}, G_{t+h}) - \Pi(F_{t+h}, G_{t})| \: +\: | \Pi(F_{t+h}, G_t) - \Pi(F_t, G_t) |\\
    \leq & \: | \Pi(G_{t+h}, G_t) - \Pi(F_{t+h}, F_t)|
\end{align*}
which goes to $0$ with $h$ by continuity.
If $d(\F,\G) = 0$, then $\Pi(F_t,G_t)=0$ for almost all $t$. However, by continuity of $t\mapsto \Pi(F_t,G_t)$, we conclude that equality holds everywhere and $\F=\G$.
Symmetry clearly holds. Moreover, essentially following the same steps as in the proof of Minkowski's inequality for ${L}^2$ spaces, one can show that the triangle inequality is satisfied. Indeed, for any $\M=\{M_t\}_{t\in[0,1]}\in\FF_C$:
\begin{align*}
            & \int_{0}^1( \Pi(F_t,M_t) + \Pi(M_t,G_t) )^2 dt \:= \\
            & =\:\int_{0}^1 \Big( \Pi(F_t,M_t) + \Pi(M_t,G_t)\Big) \Pi(F_t,M_t) dt 
            \: +  \: \int_{0}^1\Big( \Pi(F_t,M_t) + \Pi(M_t,G_t) \Big)\Pi(M_t,G_t) dt\\
            & \leq\: \left( \int_{0}^1( \Pi(F_t,M_t) + \Pi(M_t,G_t) ))^2 dt \right)^{\nicefrac{1}{2}}
            \cdot \left[ 
            \left( \int_{0}^1 \Pi(F_t,M_t)^2 dt \right)^{\nicefrac{1}{2}}  \: + \: \left( \int_{0}^1 \Pi(M_t,G_t)^2 dt \right)^{\nicefrac{1}{2}}
            \right].
\end{align*}
where we have used H\"older's inequality.
In particular:
\begin{align*}
     \left(\int_{0}^1\Pi(F_t, G_t)^2dt\right)^{\nicefrac{1}{2}} 
    \leq& \left(\int_{0}^1( \Pi(F_t,M_t) + \Pi(M_t,G_t) )^2 dt\right)^{\nicefrac{1}{2}}\\
    \leq&  \left( \int_{0}^1 \Pi(F_t,M_t)^2 dt \right)^{\nicefrac{1}{2}}  \: + \: \left( \int_{0}^1 \Pi(M_t,G_t)^2 dt \right)^{\nicefrac{1}{2}}
\end{align*}
which yields the triangular inequality.
\end{proof} 

\begin{proof}[Proof of Lemma \ref{lemma : continuity for BW barycenter}] 
Existence follows from coercivity of the \Frechet functional, and uniqueness from its strong convexity: see  \cite[][Theorem 1]{santoro2023large}. As for the pointwise characterisation part, the proof essentially consists in using continuity.
Let $t_n\rightarrow t$ be an arbitrary convergent sequence in $[0,1]$. First, note that for every $m\geq 1$ and $t\in[0,1]$, $\widehat\M(t)\preceq \frac{1}{m}\sum_{i=1}^m\F_i(t)$ \cite{masarotto2019procrustes}  which in turn implies that 
$\M(t_n) \preceq \E[\F(t_n)]$ for each $n \geq 1$ by monotone convergence..
Furthermore, note that  $\E[\F(t_n)] \rightarrow \E[\F(t)]$.
This shows that the sequence $\{\M(t_n)\}_{n\geq 1}$ is relatively compact, as it is enclosed in the compact set $\{F \::\: 0\preceq F \preceq G\}$ for some $G$, in light of the convergence of $\{\E[\F(t_n)]\}_{n\geq 1}$. Furthermore, it is clear that any limit point need be a \Frechet mean of $\F(t)$. Upon observing that such mean need be unique (by assumption \ref{assumption_flow_posdef}), the usual sub-subsequence argument yields the conclusion.
\end{proof}

\begin{proof}[Proof of Proposition \ref{prop : tensor Hilbert space is complete and separable}]
We first show completeness by reproducing the results in \citet[][Theorem 1]{lin2019intrinsic}. Suppose $\left\{\V_{n}\right\}_{n\geq 1}$ is a Cauchy sequence in $\TT_{\F}$. Our next steps show the existence of a subsequence $\left\{\V_{n_{k}}\right\}_{k\geq 1}$ satisfying
\begin{equation}
    \label{eq : 14}
    \sum_{k=1}^{\infty}\left\|\V_{n_{k+1}}(t)-\V_{n_{k}}(t)\right\|_{\Tan_{\F(t)}}<\infty.
\end{equation}
Since $\Tan_{\F(t)}$ is complete, the limit $\V(t)=\lim _{k \rightarrow \infty} \V_{n_{k}}(t)$ is well defined and in $\Tan_{\F(t)}$. 
Fix any $\epsilon>0$ and choose $N$ such that $n, m \geq M$ implies $\left\|\V_{n}-\V_{m}\right\|_{\Tan_{\F(t)}} \leq \epsilon$. An application of
Fatou's lemma to the function $\left|\V(t)-\V_{m}(t)\right|$ yileds that if $m \geq N$, then $\left\|\V(t)-\V_{m}(t)\right\|_{\Tan_{\F(t)}}^{2} \leq \lim\inf _{k \rightarrow \infty}\left\|\V_{n_{k}}-\V_{m}\right\|_{\TT_{\F}}^{2} \leq \epsilon^{2} .$ This shows that $\V-\V_{m} \in \TT_{\F}$. 
Since $\V=\left(\V-\V_{m}\right) + \V_{m}$, we see that $\V \in \TT_{\F}$. 
The arbitrariness of $\epsilon$ implies that $\lim _{m \rightarrow \infty}\left\|\V(t)-\V_{m}(t)\right\|_{\Tan_{\F(t)}}=0$ for $t\in[0,1]$.
Because $\left\|\V-\V_{n}\right\|_{\TT_{\F}} \leq$ $\left\|\V-\V_{m}\right\|_{\TT_{\F}}+\left\|\V_{m}-\V_{n}\right\|_{\TT_{\F}} \leq 2 \epsilon$, we conclude that $\V_{n}$ converges to $\V$ in $\TT_{\F}$.
It remains to show \eqref{eq : 14}. To do so, we choose $(n_{k})_k$ so that $\left\|\V_{n_{k}}-\V_{n_{k+1}}\right\|_{\Tan_{\F(t)}} \leq$ $2^{-k}$. Let $\U \in \TT_{\F}$. 
By Cauchy-Schwarz inequality:
$$\int_{0}^1 \| \U_t\|_{\Tan_{\F(t)}} \cdot\| \V_{n_{k}}(t)-\V_{n_{k+1}}(t)\|_{\Tan_{\F(t)}} \der t  \leq\|\U\|_{\TT_{\F}} \| \V_{n_{k}}- \V_{n_{k+1}}\left\|_{\TT_{\F}} \leq 2^{-k}\right\| \U \|_{\TT_{\F}}$$
and summing over $k$: 
$$\sum_{k} \int_{0}^1 \|\U(t)\|_{\Tan_{\F(t)}} \cdot \|\V_{n_{k}}(t)-\V_{n_{k+1}}(t)\|_{\Tan_{\F(t)}} \der t \leq\|\U\|_{\TT_{\F}}<\infty,$$
which implies  \eqref{eq : 14}. 

\medskip

Next, we show separability following the steps of \citet{zhou2021intrinsic}. First, note that for each $t\in[0,1]$, the tangent space $\Tan_{\F(t)}$ may be identified with a closed subset of the separable Hilbert space $\mathcal{L}^2(\mu(t))$, where $\mu(t) \equiv \N(0,F(t))$, and is therefore itself separable. For $t\in[0,1]$, let $(\Phi_i(t))_{i\geq 1}$ denote a complete orthonormal system (CONS) of $\Tan_{\F(t)}$. In particular, for $i\geq 1$, denote by $\Phi_i$ the equivalence class in $\TT_\F$ of the function $ t\mapsto \Phi_i(t)$.
We may view $(\Phi_i(t))_{i\geq 1, t\in[0,1]}$ as an orthonormal frame along the flow $F(t)$.
Let $\tilde{ \V}^{(1)},\tilde{ \V}^{(2)}$ be two arbitrary representatives of the same equivalence class $\V\in \TT_{\F}$, and denote by $(\tilde a_{1,i}(t))_i$ and $(\tilde a_{2,i}(t)_i$ the corresponding decomposition of $\tilde{ \V}_1(t),\tilde{ \V}_2(t)$ with respect to the CONS $\{\Phi_i(t)\}_i$. That is, for every $t\in[0,1]$ and $j\in\{1,2\}$: $\|\V_j(t) - \sum_{i\geq 1} \tilde a_{j,i}(t)\Phi_i(t)\|_{\Tan_{\F(t)}}=0$. Since $\tilde{ \V}^{(1)},\tilde{ \V}^{(2)}$ belong to the same class $\V$,
$$\|  \tilde{ \V}_{1}(t) - \tilde{ \V}_{2}(t)\|_{\Tan_{\F(t)}}^2 \:=\:\| \sum_{i\geq 1}( \tilde a_{1,i}(t) -  \tilde a_{2,i}(t) )\Phi_i(t)\|_{\Tan_{\F(t)}}^2 \:\:  0\qquad\text{for a.e.\ }t\in[0,1]
$$
and by Parseval's identity:
  $
 \sum_i( \tilde a_{1,i}(t) -  \tilde a_{2,i}(t) )^2= 0
 $
for a.e.\ $t\in[0,1]$. We view $\tilde a_{j,i}$ as element in the space $\ell^{2}$ of square-summable sequences.
 Therefore, we see that each element $\V$ in $\TT_{\F}$ is associated to an element in $L^{2}\left(\T, \ell^{2}\right)$, which we denote by $\Upsilon(\V)$ and interpret as the coordinate representation for $\V$ associated to $(\Phi_i)_i$. It is well known that that $L^{2}\left(\T, \ell^{2}\right)$ is a separable Hilbert space, with inner product
 $\int_{0}^1 \sum_i a_i(t)b_i(t) dt $.
In fact, we can define a map $\Upsilon: \mathscr{T}(\F) \longmapsto L^{2}\left(\T, \ell^{2}\right)$, and we can immediately see that $\Upsilon$ is a linear map, and hence injective. It is also surjective, because for any 
$ \boldsymbol{a} \in L^{2}\left(\T, \ell^{2}\right)$,
the vector field $\U$ along $\M$ given by 
$\U(t)=\sum_{i} {a}_{i}(t) \Phi_{i}(t)$ for $t\in[0,1]$ -- where ${a}_{i}(t)$ denotes the $i$-th component of $\boldsymbol{a}(t)$ --
is an element in $\mathscr{T}(\M)$ satisfying $\Upsilon(\U) = \boldsymbol{a}$. Moreover, $\Upsilon$ preserves the inner product. Indeed,  every $\V_1, \V_2\in\TT_{\F}$:
\begin{align*}
 \langle \V_1,\V_2 \rangle_{\F} \:
    =&\: \int_{0}^1 \langle \V_1(t),\V_2(t)\rangle _{\Tan_{\F(t)}}\\
     =&\: \int_{0}^1 \langle \sum_{i} a_{1,i}(t)\Phi_i(t),  \sum_{i} a_{2,i}(t)\Phi_i(t) \rangle _{\Tan_{\F(t)}}\\
    =&\: \int_{0}^1 \sum_{i}   a_{1,i}(t) a_{2,i}(t) \rangle _{\Tan_{\F(t)}}
    = \: \langle \Upsilon(\V_1) , \Upsilon(\V_2)\rangle_{L^{2}(\T, \ell^2)}.
\end{align*}
 Therefore, it is a Hilbertian isomorphism, implying the separability of $\TT (\F)$.
\end{proof}

\begin{proof}[Proof of Lemma \ref{lemma : empirical FMF via pointwise min}]
It suffices to show that, with probability one, the pointwise minimiser $\widehat{\M}(t)$ eventually (in $n$) exists uniquely, for all $t$. The rest follows as in Lemma 1. For any $t\in [0,1]$, assumption \ref{assumption_flow_posdef} guarantees that the stopping time $N_t=\inf\{m\geq 1:F_m(t)\succ 0\}$ is a well-defined (almost surely finite) geometric random variable. By continuity, there is an open interval $I_t\ni t$ such that $F_{N_t}(u)\succ 0$ for all $u\in I_t$. The open intervals $\{I_t\}_{t\in [0,1]}$ form a cover of $[0,1]$, which admits a finite sub-cover $\{I_t\}_{t\in B}$, for some finite set $B\subset[0,1]$. Now define $M:=\max\{N_t:t\in B\}$, and observe that this is an almost-surely finite random variable. We have thus established that the pointwise minimiser $\widehat\M(t)$ exists uniqueley for all $n\geq M$, as claimed.
\end{proof}

\begin{proof}[Proof of Theorem \ref{thm : FM consistency and CLT}]

We first show \eqref{eq:consistency} by a dominated convergence argument. 
Set $f_n(t):= \Pi^2(\M(t),\widehat\M_n (t))$ for $t\in[0,1]$. 
By \citet[][Theorem 2]{santoro2023large}, $f_n(t)$ almost surely converges to $0$ for every $t\in[0,1]$. Moreover, we may upper bound $f_n$ as follows:
$$
    f_n(t) = \Pi^2(\M(t),\widehat\M_n (t)) \leq \tr(\M(t)) +   \tr(\widehat\M_n (t) )
$$
and since $\widehat\M_n (t) \leq S_n (t) = \frac{1}{n}\sum_{i=1}^n \F_i (t)$   (see \citet[][Theorem 12]{masarotto2019procrustes}), the SLLN yields that:
$$
    f_n(t) \leq \tr(\M(t) ) +   \tr( S_n (t) ) \leq  \tr( \M(t) ) + 2  \tr(\E \F(t)), \qquad \text{almost surely,}
$$
for large $n$. In particular, the sequence of random functions $f_n$ is almost surely dominated by an integrable deterministic function.
Hence, employing the dominated convergence theorem twice -- once for the integral and once for the expectation -- as well as Fubini-Tonelli to exchange integral and expectation, we obtain:
$$
\E \Big[ \int_{0}^1  f_n(t) dt \Big] = \int_{0}^1 \E [\, f_n(t)\, ]dt\rightarrow 0.
$$
In particular, since convergence in expectation implies convergence in probability :
$$
 d(\M, \widehat\M_n)^2 = \int_{0}^1  \Pi^2(\M(t),\widehat\M_n (t)) dt = 
 \int_{0}^1  f_n(t)dt
\: \overset{\PP}{\longrightarrow} \:0, \quad \text{as } n\rightarrow 0,
$$
proving the consistency result in \eqref{eq:consistency}

\medskip

Finally we prove the stronger result \eqref{eq:consistency+rate}. Observe that by \ref{assumption_invertible_maps}, via \citet[Theorem 3]{santoro2023large}, we obtain that:
\begin{equation}\label{eq:pointwiseCLT}
\Pi(\M(t), \widehat\M_n(t))^2 = O_p(n^{-1}),\qquad \forall\:t\in[0,1].
\end{equation}
We claim that:
$$
\forall \varepsilon>0 \:\exists\: C_\varepsilon>0, \tilde\Omega_{\varepsilon}\::\: \PP(\tilde\Omega_{\varepsilon})>1-\varepsilon \quad \text{s.t.:}\quad 
\limsup_{n\rightarrow\infty} \sup_{t\in[0,1]}n^{\nicefrac{1}{2}-\alpha}\Pi(\M(t),\widehat\M_n(\omega, t)) <C, \quad \forall\:\omega\in\tilde\Omega_{\varepsilon}.$$
Indeed, suppose the contrary.
Then, for some $\varepsilon$ and any positive $C>0$ we could find a set $A$ with $\PP(A)>\varepsilon$ and a sequence $\{t_n\}_{n\geq 1}\subset[0,1]$ such that:
$$
\limsup_{n\rightarrow\infty} \Pi(\M(t_n),\widehat\M_{n}(\omega, t_n)) > C, \quad \forall\:\omega\in A.
$$
Note that $t_n$ stays in the compact $[0,1]$, and we may thus extract a subsequence $\{t_{n_k}\}_{k\geq 1}$ converging to some $t^{\star}$. And since $s\mapsto \Pi(\M(s),\widehat\M_n(s))$ is almost surely continuous, for $n$ large enough we have that:
$$
 \Pi(\M(t^{\star}),\widehat\M_n(\omega, t^{\star}) \geq \Pi(\M(t_{n_k}),\widehat\M_n(\omega, t_{n_k}) - C/2 
$$
and hence:
$$
\limsup_{k\rightarrow\infty} n_k^{\nicefrac{1}{2}-\alpha}\Pi(\M(t^{\star}),\widehat\M_n(\omega, t^{\star}) \geq C/2,  \quad \forall\:\omega\in A,
$$
clearly contradicting \eqref{eq:pointwiseCLT}.

Hence, for any $\delta>0$:
\begin{align*}
    \limsup_{n\rightarrow\infty} \PP(  n^{\nicefrac{1}{2}-\alpha}d(\M,\widehat\M_n) > \delta) 
    & \leq  \lim_{n\rightarrow\infty} \PP(    n^{\nicefrac{1}{2}-\alpha}d(\M,\widehat\M_n) > \delta \:\cap\: \tilde\Omega_{\epsilon}) + {\PP(\tilde\Omega_{\epsilon}^c)}\\
    & \leq  \lim_{n\rightarrow\infty} \PP( n^{\nicefrac{1}{2}-\alpha}d(\M,\widehat\M_n >\delta )\:|\: \tilde\Omega_{\epsilon}] + \epsilon \\
    & \leq \frac{1}{\delta}  \lim_{n\rightarrow\infty} \E[ n^{\nicefrac{1}{2}-\alpha}d(\M,\widehat\M_n)\:|\: \tilde\Omega_{\epsilon}] + \epsilon \\ 
    & \leq \ \epsilon.
\end{align*}
where we have used Markov's inequality, and that the conditioning on $\tilde\Omega_\epsilon$ allows us to apply dominated convergence to prove that $\E[ n^{\nicefrac{1}{2}-\alpha}d(\M,\widehat\M_n)\:|\: \tilde\Omega_{\epsilon}] =o(1)$.

Since $\epsilon$ can be chosen to be arbitrary small, $n^{\nicefrac{1}{2}-\alpha}d(\M,\widehat\M_n) = o_p(1)$ for all $\alpha>0$, and hence
$d(\M,\widehat\M_n) = O_p(n^{-\nicefrac{1}{2}})$, proving the result.

\end{proof}

\begin{proof}[Proof of Theorem \ref{thm : cov consistency}]
We split the term in 2 summands, which may be controlled separately.
\begin{equation}\label{eq:proof:covconv:firstsplit}
\begin{split}
    \| {\bJ}^{\otimes}_\M\CC  - {\bJ}^{\otimes}_{\widehat\M_n}\widehat\CC_n  \|_{\BB( L^2_{\BB_2})} 
    =&\: 
    \Big\Vert \E \left[ (\log_\M\F) \M^{\nicefrac{1}{2}} \otimes (\log_\M\F) \M^{\nicefrac{1}{2}} \right] 
    \\
        &\qquad \qquad
        - \frac{1}{n} \sum_{i=1}^n (\log_{\widehat\M_n}\F_i)\widehat\M_n^{\nicefrac{1}{2}}\otimes (\log_{\widehat\M_n}\F_i)\widehat\M_n^{\nicefrac{1}{2}} \Big\Vert_{\BB( L^2_{\BB_2})}  
        \nonumber \\ 
    \leq & \:
    \Big\Vert \E \left[ (\log_\M\F) \M^{\nicefrac{1}{2}} \otimes (\log_\M\F) \M^{\nicefrac{1}{2}} \right] 
            \\
        &\qquad \qquad
        - \frac{1}{n} \sum_{i=1}^n (\log_{\M}\F_i)\M^{\nicefrac{1}{2}}\otimes (\log_{\M}\F_i)\M^{\nicefrac{1}{2}} \Big\Vert_{\BB( L^2_{\BB_2})} \nonumber\\
    & \:  + \Big\Vert \frac{1}{n} \sum_{i=1}^n  (\log_{\M}\F_i)\M^{\nicefrac{1}{2}}\otimes (\log_{\M}\F_i)\M^{\nicefrac{1}{2}}  
        \\
        &\qquad \qquad
        -  \frac{1}{n} \sum_{i=1}^n (\log_{\widehat\M_n}\F_i)\widehat\M_n^{\nicefrac{1}{2}}\otimes (\log_{\widehat\M_n}\F_i)\widehat\M_n^{\nicefrac{1}{2}}  \Big\Vert_{\BB( L^2_{\BB_2})}
    \nonumber
\end{split}
\end{equation}

The first term vanishes by the strong law of large numbers.
The second summand in the rhs requires some additional work. First, we split it as follows:
\begin{align*}
    \Big\Vert &\frac{1}{n} \sum_{i=1}^n \left( \log_{\M}\F_i\M^{\nicefrac{1}{2}}\otimes \log_{\M}\F_i\M^{\nicefrac{1}{2}}  -  \log_{\widehat\M_n}\F_i\widehat\M_n^{\nicefrac{1}{2}}\otimes \log_{\widehat\M_n}\F_i\widehat\M_n^{\nicefrac{1}{2}} \right) \Big\Vert_{\BB( L^2_{\BB_2})}
     \\ & \leq  
     \Big\Vert \frac{1}{n} \sum_{i=1}^n (\log_{\M}\F_i\M^{\nicefrac{1}{2}} - \log_{\widehat\M_n}\F_i\widehat\M_n^{\nicefrac{1}{2}}) \otimes \log_{\M}\F_i\M^{\nicefrac{1}{2}} \Big\Vert_{\BB( L^2_{\BB_2})}    
    \\ &\quad +
    \Big\Vert \frac{1}{n} \sum_{i=1}^n \log_{\widehat\M_n}\F_i\widehat\M_n^{\nicefrac{1}{2}}\otimes (\log_{\M}\F_i\M^{\nicefrac{1}{2}}  - \log_{\widehat\M_n}\F_i\widehat\M_n^{\nicefrac{1}{2}} ) \Big\Vert_{\BB( L^2_{\BB_2})}
\end{align*}
We show in detail how to handle the first term; the second one may be treated equivalently, given the evident symmetry.
\begin{multline*}
    \Big\Vert  \frac{1}{n} \sum_{i=1}^n (\log_{\M}\F_i\M^{\nicefrac{1}{2}} - \log_{\widehat\M_n}\F_i\widehat\M_n^{\nicefrac{1}{2}}) \otimes (\log_{\M}\F_i)\M^{\nicefrac{1}{2}} \Big\Vert_{\BB( L^2_{\BB_2})} 
    \hspace{3cm}\\
\leq  
    \frac{1}{n^2}\sum_{i,j=1}^n 
        \Big( \|  \log_{\M}\F_i\M^{\nicefrac{1}{2}}  \|_{L^2_{\BB_2}} + \|  \log_{\M}\F_j\M^{\nicefrac{1}{2}}  \|_{L^2_{\BB_2}}   \Big)
        \Big(\|  \log_{\M}\F_i\M^{\nicefrac{1}{2}} - \log_{\widehat\M_n}\F_i\widehat\M_n^{\nicefrac{1}{2}}  \|_{L^2_{\BB_2}} 
        \\
        \qquad + \|  \log_{\M}\F_j\M^{\nicefrac{1}{2}} -
        \log_{\widehat\M_n}\F_j\widehat\M_n^{\nicefrac{1}{2}}  \|_{L^2_{\BB_2}}
\\
\leq \frac{1}{n^2}\sum_{i,j=1}^n 
        \Big( \|  \log_{\M}\F_i  \|_{\TT_{\M}} + \|  \log_{\M}\F_j  \|_{\TT_{\M}}   \Big)
        \Big(\|  \log_{\M}\F_i\M^{\nicefrac{1}{2}} - \log_{\widehat\M_n}\F_i\widehat\M_n^{\nicefrac{1}{2}}  \|_{L^2_{\BB_2}}
        \\
        + \|  \log_{\M}\F_j\M^{\nicefrac{1}{2}} - \log_{\widehat\M_n}\F_j\widehat\M_n^{\nicefrac{1}{2}}  \|_{L^2_{\BB_2}}
        \Big)     
\\
        \leq \frac{2}{n^2}\sum_{i,j=1}^n  
        \|  \log_{\M}\F_i  \|_{\TT_{\M}} \|  \log_{\M}\F_j\M^{\nicefrac{1}{2}} - \log_{\widehat\M_n}\F_j\widehat\M_n^{\nicefrac{1}{2}}  \|_{L^2_{\BB_2}}
        \\
        +     \frac{1}{n}\sum_{i=1}^n
        \|  \log_{\M}\F_i  \|_{\TT_{\M}} \|  \log_{\M}\F_i\M^{\nicefrac{1}{2}} - \log_{\widehat\M_n}\F_i\widehat\M_n^{\nicefrac{1}{2}}  \|_{L^2_{\BB_2}}.
\end{multline*}

Therefore, it suffices to show that:
\begin{equation}\label{eqCovEst_vanishingAverageLog}
\frac{1}{n}\sum_{i=1}^n\| \log_{\M}\F_i\M^{\nicefrac{1}{2}} - \log_{\widehat\M_n}\F_i\widehat\M_n^{\nicefrac{1}{2}}  \|_{L^2_{\BB_2}}\: \overset{\PP}{\longrightarrow} \:0 \quad \text{as} \quad n\rightarrow\infty.
\end{equation}
Observe that $\{ \log_{\M}\F_i\M^{\nicefrac{1}{2}} - \log_{\widehat\M_n}\F_i\widehat\M_n^{\nicefrac{1}{2}}\}_{j=1,\dots,n}$ comprise a family of \textit{dependent, though identically distributed, random variables. Therefore:}
$$
\E\left[ \Big|\frac{1}{n}\sum_{i=1}^n\| \log_{\M}\F_i\M^{\nicefrac{1}{2}} - \log_{\widehat\M_n}\F_i\widehat\M_n^{\nicefrac{1}{2}}  \|_{L^2_{\BB_2}} \Big| \right] = \E\left[ \| \log_{\M}\F_1\M^{\nicefrac{1}{2}} - \log_{\widehat\M_n}\F_1\widehat\M_n^{\nicefrac{1}{2}}  \|_{L^2_{\BB_2}}  \right],
$$
and thus \eqref{eqCovEst_vanishingAverageLog} follows if we can show that $\|\log_{\M}\F_1\M^{\nicefrac{1}{2}} - \log_{\widehat\M_n}\F_1\widehat\M_n^{\nicefrac{1}{2}}\|_{L^2_{\BB_2}}$ converges to zero in expectation.

In the following, we write $T_i = T_{\M}^{\F_i}$ and $\widehat T_{n,i} = T_{\widehat\M_n}^{\F_i}$ for the time-evolving optimal maps from the population and sample barycenters to $\F_i$, respectively, for $i=1,\dots,n$. Writing the logarithms explicitly in terms of the transport maps, and employing the triangle inequality, we obtain:
\begin{equation}
\label{eqCovEst_2terms}
\| \log_{\M}\F_i\M^{\nicefrac{1}{2}} - \log_{\widehat\M_n}\F_i\widehat\M_n^{\nicefrac{1}{2}}  \|_{L^2_{\BB_2}} 
\leq
\| T_i\M^{\nicefrac{1}{2}} -  \widehat T_{n,i}\widehat\M_n^{\nicefrac{1}{2}}\|_{L^2_{\BB_2}} 
+
\|\M^{\nicefrac{1}{2}} - \widehat\M_n^{\nicefrac{1}{2}} \|_{L^2_{\BB_2}}.
\end{equation}
It is easy to see that the second term in \eqref{eqCovEst_2terms} vanishes. Indeed, we may expand it by:
\begin{equation}
    \label{eqCovEst_convHSnormBARY}
\begin{split}
\|\M^{\nicefrac{1}{2}} - \widehat\M_n^{\nicefrac{1}{2}}\|_{L^2_{\BB_2}}^2 = & \:
 \int_{0}^1  \| \M(t)^{\nicefrac{1}{2}} - \widehat\M_n (t)^{\nicefrac{1}{2}} \|_2^2 dt  
 \\ \leq &  \:
 \int_{0}^1  \Big( \tr(\M(t))^{\nicefrac{1}{2}} + \tr(\widehat\M_n (t))^{\nicefrac{1}{2}} \Big)
 \Big( \Pi(\M(t),\widehat\M_n (t) )\Big)dt
  \\ \leq &  \:
  4 \Big( \int_{0}^1  \tr(\M(t)) dt \Big)^{\nicefrac{1}{2}}
 \Big(  \int_{0}^1  \Pi^2(\M(t),\widehat\M_n (t)dt )\Big)^{\nicefrac{1}{2}}
   \\ \leq &  \:
 c\cdot 
 d(\M,\widehat\M_n)\: \overset{\PP}{\longrightarrow} \:0, \quad \text{as} \: n\rightarrow \infty,
\end{split}
\end{equation}
where we have used that $\| F-G\|_1 \leq (\tr(F)^{\nicefrac{1}{2}} + \tr(G)^{\nicefrac{1}{2}}) \Pi(F , G )$ for any two $F,G\in\KK$ \cite[][Lemma 1]{santoro2023large}, H\"older's inequality, and the convergence in \eqref{eq:consistency}. It remains to show that the first term in \eqref{eqCovEst_2terms} vanishes as well. We proceed as follows:
\begin{equation}\label{eqCovEst_2terms2_B}
    \begin{split}
     \| T_i\M^{\nicefrac{1}{2}} -  \widehat T_{n,i}\widehat\M_n^{\nicefrac{1}{2}}\|_{L^2_{\BB_2}}^2  = & \:
    \int_{0}^1 \| T_i(t)\M(t)^{\nicefrac{1}{2}} -  \widehat T_{n,i}(t)\widehat\M_n (t)^{\nicefrac{1}{2}} \|_2 \: dt 
     \\ \leq & \:
      \int_{0}^1 \| T_{i}(t)\big(\M(t)^{\nicefrac{1}{2}} -  \widehat\M_n (t)^{\nicefrac{1}{2}} \big)\|_2 \: dt
      +
     \int_{0}^1 \| \big(T_i(t) - \widehat T_{n,i}(t)\big)\widehat\M_n(t)^{\nicefrac{1}{2}}  \|_2 \: dt. 
\end{split}
\end{equation}
We handle the two terms appearing in the last line of \eqref{eqCovEst_2terms2_B} separately. Employing H\"older's inequality, the second term may be bounded by:
$$
   \left( \int_{0}^1 \| T_{i}(t)\|_{\infty}^2  dt \right)^{\nicefrac{1}{2}} \cdot  \|\M^{\nicefrac{1}{2}} - \widehat\M_n^{\nicefrac{1}{2}}\|_{L^2_{\BB_2}}
$$
and thus consists in the product of a term that is bounded in probability by assumption \ref{assumption_flow_boundedmap}, and a second one vanishing in probability. Therefore their product vanishes in probability by Slutsky's theorem.

Next we show that the second term in \eqref{eqCovEst_2terms2_B} vanishes as well. Note that, by \citet[][Corollary 2]{santoro2023large}, the integrand converges pointwise to $0$ in probability. That is, that for every $t\in[0,1]$ we have $ \| \big(T_i(t) - \widehat T_{n,i}(t)\big)\M(t)^{\nicefrac{1}{2}}  \|_2^2\: \overset{\PP}{\longrightarrow} \:0$. By dominated convergence, this implies pointwise convergence to zero in expectation as well. In particular:
$$
\E \int_{0}^1 \| \big(T_i(t) - \widehat T_{n,i}(t)\big)\M(t)^{\nicefrac{1}{2}}  \|_2 ^2dt 
=
\int_{0}^1 \| \E \big(T_i(t) - \widehat T_{n,i}(t)\big)\M(t)^{\nicefrac{1}{2}}  \|_2 ^2 dt \rightarrow 0 
$$
where we have employed Fubini-Tonelli to exchange integral and expectation and integrability of $t\mapsto \E \tr(\M(t))$ for the dominated convergence.  Since convergence in expectation implies convergence in probability, we obtain that \eqref{eqCovEst_2terms2_B} vanishes in probability, and consequently \eqref{eqCovEst_vanishingAverageLog}, hence proving \eqref{eq:covariance_consistency}.

\end{proof}

\begin{proof}[Proof of Theorem \ref{thm : FM consistency - finite dim}]
The proof of \eqref{eq:consistency-finite dim} is identical to that of \eqref{eq:consistency+rate}, where it is presented in the infinite-dimensional setting: in that general case, the stronger assumptions where needed to get the pointwise rate of convergence, but the proof for deducing the rate for the integral metric $d$ is identical.

\medskip

Next we prove the uniform convergence in \eqref{eq:consistency-finite dim-uniform}. 
The assumption \ref{assumption_invertible_maps} with $\mathcal{R}(t)=R\succ 0 $ allows us to bound the minimum and maximum eigenvalues of $\F(t)$ almost surely, uniformly over $t$, from above and below, away from $0$ and $\infty$ respectively. That is, there exist $0<c_m\leq c_M<\infty$ such that:
$$
\inf_{t\in[0,1]}\lambda_{\min}(\F(t)) >c_m, 
\qquad \text{and}\qquad
\sup_{t\in[0,1]}\lambda_{\max}(\F(t)) <c_M, \qquad \text{almost surely.}
$$
Define $\kappa:= {c_M}/{c_m}$. By \citet[][Corollary 17]{le2022fast}, we may see that:
\begin{align*}
    \sup_{t\in[0,1]}\Pi(\M(t), \widehat\M_n(t))^2 \leq & \frac{1}{n}\frac{4}{(1-\kappa+\kappa^{-1})^2} \sup_{t\in[0,1]} \E\left[\Pi^2(\M(t), \F(t))\right] = O(n^{-1}).
\end{align*}
    
\end{proof}

\begin{proof}[Proof of Lemma \ref{lemma : smoothness of emblog}]
Let us write logarithms explicitly in terms of the transport maps:
\begin{equation}\label{eq:taylor_inproof}
\J_M\log_{M}F =
(T_{M}^{F}-\Id)M^{\nicefrac{1}{2}} = M^{-\nicefrac{1}{2}}(M^{\nicefrac{1}{2}}FM^{\nicefrac{1}{2}})^{\nicefrac{1}{2}} - M^{\nicefrac{1}{2}}.
\end{equation}
Taking derivatives, we see that:
\begin{align*}
    {d_{M^{\nicefrac{1}{2}}}}[\J_M\log_{M}F](H) =& 
    M^{-\nicefrac{1}{2}}\der_{ M^{\nicefrac{1}{2}}FM^{\nicefrac{1}{2}}}g (H FM^{\nicefrac{1}{2}} + M^{\nicefrac{1}{2}}F H) - M^{-\nicefrac{1}{2}}HM^{-\nicefrac{1}{2}}(M^{\nicefrac{1}{2}}FM^{\nicefrac{1}{2}})^{\nicefrac{1}{2}} - H.
\end{align*}
where $\der_{ A^2 }g$ denotes the \Frechet derivative of the square-root operator $A\mapsto A^{\nicefrac{1}{2}}$.
For convenience, we define $Q^2 = M^{\nicefrac{1}{2}}FM^{\nicefrac{1}{2}}$. Then:
\begin{align*}
    {d_{M^{\nicefrac{1}{2}}}}[\J_M\log_{M}F](H) 
    =& 
    M^{-\nicefrac{1}{2}}\der_{ Q^2}g (H M^{-\nicefrac{1}{2}}Q^2 + Q^2M^{-\nicefrac{1}{2}} H) - M^{-\nicefrac{1}{2}}HM^{-\nicefrac{1}{2}}Q - H\\ 
    =& 
    M^{-\nicefrac{1}{2}}QM^{-\nicefrac{1}{2}} H - M^{-\nicefrac{1}{2}}\der_{ Q^2}g (QH M^{-\nicefrac{1}{2}}Q + QM^{-\nicefrac{1}{2}} HQ) - H\\ 
    =& 
    T_M^F H - M^{-\nicefrac{1}{2}}\der_{ Q^2}g (QH M^{-\nicefrac{1}{2}}Q + QM^{-\nicefrac{1}{2}} HQ) - H 
\end{align*}
Note that $M^{-\nicefrac{1}{2}}Q^{\nicefrac{1}{2}}$ defines a bounded operator, and that $X\mapsto Q^{\nicefrac{1}{2}}d_{Q^2}g(X)Q^{\nicefrac{1}{2}}$ and $ X \mapsto Qd_{Q^2}g(X)$ are both bounded operators acting over $\BB_2$, with operator norm  less then $1$.
Therefore:
$$
\| {d_{M^{\nicefrac{1}{2}}}}[\J_M\log_{M}F]\| \leq \|T_{M}^{F}\| + 3.
$$
\end{proof}

\begin{proof}[Proof of Theorem \ref{thm : cov consistency - finite dim}] The proof in the finite dimensional case mimics the steps of the general case, though some specifications occur. First, we split the difference as in \eqref{eq:proof:covconv:firstsplit}, and reduce the problem to that of controlling the two terms:
\begin{enumerate}
    \item $\Big\Vert \E \left[ (\log_\M\F) \M^{\nicefrac{1}{2}} \otimes (\log_\M\F) \M^{\nicefrac{1}{2}} \right] 
        - \frac{1}{n} \sum_{i=1}^n (\log_{\M}\F_i)\M^{\nicefrac{1}{2}}\otimes (\log_{\M}\F_i)\M^{\nicefrac{1}{2}} \Big\Vert_{\BB( L^2_{\BB_2})}$
    \item $\Big\Vert \frac{1}{n} \sum_{i=1}^n  (\log_{\M}\F_i)\M^{\nicefrac{1}{2}}\otimes (\log_{\M}\F_i)\M^{\nicefrac{1}{2}}  
        -  \frac{1}{n} \sum_{i=1}^n (\log_{\widehat\M_n}\F_i)\widehat\M_n^{\nicefrac{1}{2}}\otimes (\log_{\widehat\M_n}\F_i)\widehat\M_n^{\nicefrac{1}{2}}  \Big\Vert_{\BB( L^2_{\BB_2})}$
\end{enumerate}

The first term in the rhs is almost surely of order ${O}(n^{-\nicefrac{1}{2}})$ by the Central Limit Theorem in Hilbert spaces (see \cite{dauxois1982asymptotic}). We then proceed as in the proof of the infinite dimensional case, and bound the second term in the rhs by:
$$
O(1)\frac{1}{n}\sum_{i=1}^n\| \log_{\M}\F_i\M^{\nicefrac{1}{2}} - \log_{\widehat\M_n}\F_i\widehat\M_n^{\nicefrac{1}{2}}  \|_{L^2_{\BB_2}}.
$$
Finally, Lemma \ref{lemma : smoothness of emblog} allows us to establish that:
$$
\frac{1}{n}\sum_{j=1}^n \| \log_{\M}\F_j\M^{\nicefrac{1}{2}} - \log_{\widehat\M_n}\F_j\widehat\M_n^{\nicefrac{1}{2}}  \| = O_p(d(\widehat\M_n, \M)) = O_p(n^{-\nicefrac{1}{2}}),
$$
which thus establishes \eqref{eq:covariance_consistency_FINDIM}.
\end{proof}

\begin{proof}[Proof of Proposition \ref{prop:NWsmoothing}]
We adapt the argument in \citet[][Theorem 5.2]{gyorfi2002distribution}
For simplicity let us consider the case where $W$ is the uniform kernel $W(x) = \1_{|x|<h}$. Denote by $\mu_r$ the empirical distribution of the sampling times $\{T_i\}_{i=1,\dots,r}$ Then:
$$
\hat{\F}_r^{\mathrm{NW}}(t) = \frac{\sum_{i=1}^{r} F_i \1_{\lvert  t  - T_i \rvert<h }}{r\mu_r(B_t(h))}
$$
where $B_t(h)$ denotes the open interval $(t-h,t+h)$.  Let us define:
$$
\check{\F}_r^{\mathrm{NW}}(t):= \frac{\sum_{i=1}^{r}\F(T_i)\1_{\lvert  t  - T_i \rvert<h }}{r\mu_r(B_t(h))}.
$$
Then:
\begin{align*}
    \E\left[ \| \check{\F}_r^{\mathrm{NW}}(t) - \hat{\F}_r^{\mathrm{NW}} (t))\|_1\:\vert\:  T_1,\dots,T_r \right]
    \:=&\:
     \E\left[   (r\mu_r(B_t(h)))^{-1}\sum_{i=1}^{r} \|(F_i - \F(T_i))\|_1 \1_{\lvert  t  - T_i \rvert<h }\:\vert \: T_1,\dots,T_r \right]\\
         \:\leq &\:
     \left[   (r\mu_r(B_t(h)))^{-1}\sum_{i=1}^{r} \E\|(F_i - \F(T_i))\|_1 \1_{\lvert  t  - T_i \rvert<h }\:\vert \: T_1,\dots,T_r \right]\\
     \: \leq &\: \mathcal{R}(K).
     \end{align*}
By the Lipschitz property, denoting by $A_{t,h}$ the event that at least one $T_i$ lies in the ball $B_t(h)$:
\begin{align*}
   \| \check{\F}_r^{\mathrm{NW}}(t) - \F(t) \|_1 
   \:=&\: \| (r\mu_r(B_t(h)))^{-1}{\sum_{i=1}^{r}(\F(T_i) - \F(t) ) \1_{\lvert  t  - T_i \rvert<h }}\1_{A_{t,h}} + \F(t) \1_{A_{t,h}^c}\|_1 \\
         \:\leq &\:
   Lh\1_{A_{t,h}}  + \|\F(t)\|_1 \1_{A_{t,h}^c}
\end{align*}
and consequently:
\begin{align*}
    \E\left[ \int_0^1  \| \F(t) - \hat{\F}_r^{\mathrm{NW}} (t))\|_1 dt\:\vert\:  T_1,\dots,T_r \right]
    \:\leq &\:
    \E\left[ \int_0^1 \| \F(t) - \hat{\F}_r^{\mathrm{NW}} (t))\|_1 dt \:\vert\:  T_1,\dots,T_r \right] \\
    &\qquad \qquad + \E\left[ \int_0^1 \|  \hat{\F}_r^{\mathrm{NW}} (t) - \check{\F}_r^{\mathrm{NW}} (t))\|_1 dt \:\vert\:  T_1,\dots,T_r \right]\\
    \:\leq &\:
     Lh\int_0^1\1_{A_{t,h}}dt  + \int_0^t\|\F(t)\|_1 \1_{A_{t,h}^c}dt  +  \mathcal{R}(K)\\
    \:\leq &\:
     Lh  + (rh)^{-1}\sup_t\|\F(t)\| +  \mathcal{R}(K) 
     \end{align*}
     and since we may bound $\Pi(A,B)^2\leq \|A-B\|_1$ for any $A,B\in\KK$  \cite[][Lemma 1, 2)]{santoro2023large} we obtain:
     $$
      \E\left[ d(\F(t), \hat{\F}_r^{\mathrm{NW}})^2\right]  = 
    \mathcal{O}\left(Lh  + (rh)^{-1} +  \mathcal{R}(K) 
 \right).
     $$
\end{proof}

\begin{proof}[Proof of Proposition \ref{prop:loclin FM from sparse}]
Let us fix some $t\in[0,1]$ and define the following simplified notation, for ease of readability:
$$
M_t:= \argmin_{G\in\KK}\E[ \Pi^2(\F(t),G)] 
$$
Then, let us also consider:
$$
\check M_t :=  \argmin_{G\in\KK}\E[s(T,t,F_T) \Pi^2(F_T,G)].
$$
Note that, up to renormalisation, we may assume without loss of generality that $\E [s(T,t,F_T) ]=1$, so that writing the Bures-Wasserstein distance explicitly we may write:
\begin{align*}
\E[s(T,t,F_T) \Pi^2(F_T,G)] =&
\E[s(T,t,F_T)\tr(F_T)]+\tr(G)-2 \E[s(T,t,F_T) \tr(F_T^{\nicefrac{1}{2}} G F_T^{\nicefrac{1}{2}})^{\nicefrac{1}{2}}] 
\\ =&
\Pi^2[\tilde F_T,G)]
\end{align*}
where $\tilde F_T:=s(T,t,F_T)F_T$ is a scaled modification of $F_T$, which thus still satisfies 
\ref{assumption_flow_trbound},\ref{assumption_flow_posdef} and \ref{assumption_invertible_maps}.
Therefore, $\check M_t$ may be seen as the \Frechet mean of the scaled covariances $\tilde\F_T$. 
Similarly, defining:
$$
\tilde F_{ij} := s(T_ij,t,F_{ij})F_{ij}
$$
the Local \Frechet regression estimate at time $t$ may be expressed as:
\begin{equation}\label{eq:empirical loclin fm}
     \hat M_t :=  \argmin_{G\in\KK}\sum_{i=1}^n\sum_{j=1}^r \Pi^2(\tilde F_{ij},G)
\end{equation}
Therefore, we see that $\Pi(\check M_t,\hat M_t) = o(1)$ for all $t\in[0.1]$. In particular, in finite dimensions we have that:
$$
\sup_{t\in[0,1]}\Pi(\hat M_t,\check M_t) = O(\sqrt{nh}),
$$
as $\sqrt{nh}$ is the average number of non-zero addends appearing in the sum \eqref{eq:empirical loclin fm}.

\medskip

Following the proof of \citet[][Theorem 3]{petersen2019frechet} we see that
$$
\E[ \Pi^2(\F(t),G)] - \E[s(T,t,F_T) \Pi^2(F_T,G)] = O(h^2)
$$
where the $O(h^2)$ is uniform in $G$. Hence we may employ the general result in \cite[][Corollary 3.2.3 (ii)]{van1996weak} directly obtain that
$
\Pi(\check M_t, M_t)\rightarrow 0$ as $h\rightarrow 0$.
In fact, in the finite dimensional case we may derive something more; the fixed point characterisation of Bures-Wasserstein barycentres yields that:
$$
M_t = \E[( M_t^{\nicefrac{1}{2}}\F(t)  M_t^{\nicefrac{1}{2}})^{\nicefrac{1}{2}} ]
\qquad \text{and} \qquad
\check M_t = \E[s(T,t,F_T) (\check M_t^{\nicefrac{1}{2}}F\check M_t^{\nicefrac{1}{2}})^{\nicefrac{1}{2}} ]
$$
Define $\Psi_t(A) := \E[( A^{\nicefrac{1}{2}}\F(t)  A^{\nicefrac{1}{2}})^{\nicefrac{1}{2}} ]-A$ and $\check\Psi_t(A) := \E[( A^{\nicefrac{1}{2}}\tilde F_T A^{\nicefrac{1}{2}}]-A$. Then:
\begin{align*}
   0 
    = \Psi_t(M_t) - \check \Psi_t(\check M_t) 
    = \Psi_t(M_t)- \Psi_t(\check M_t) + \Psi_t(\check M_t)- \check \Psi_t(\check M_t) 
    = \Psi_t(M_t)- \Psi_t(\check M_t) + O(h^2)
\end{align*}
which yields:
$$
d_{M_t}\Psi_t(\check M_t - M_t) + o(\check M_t - M_t)  =  O(h^2)
$$
and to conclude it suffices to argue the bounded invertibility of $d_{M_t}\Psi_t$, uniformly in $t$, which can be easily obtained provided $\inf_t \lambda_{\min}(M_t) > 0$, which in turn follows from Assumption \ref{assumption_invertible_maps}.

\end{proof}

\begin{proof}[Proof of Proposition \ref{prop:pace}]
We follow \citet[][Theorem 2]{dai2021modeling}. The optimal value admits the following expression:
    $$
\hat{\Gamma}(s, t)=\frac{\left(S_{20} S_{02}-S_{11}^2\right) R_{00}-\left(S_{10} S_{02}-S_{01} S_{11}\right) R_{10}+\left(S_{10} S_{11}-S_{01} S_{20}\right) R_{01}}{\left(S_{20} S_{02}-S_{11}^2\right) S_{00}-\left(S_{10} S_{02}-S_{01} S_{11}\right) S_{10}+\left(S_{10} S_{11}-S_{01} S_{20}\right) S_{01}},
$$
where for $a, b=0,1,2$,
$$
\begin{aligned}
S_{a b} & =\frac{1}{nr}\sum_{i=1}^n \sum_{1 \leq j \neq l \leq r} K_h\left(T_{i j}-s\right) K_h\left(T_{i l}-t\right)\left(\frac{T_{i j}-s}{h}\right)^a\left(\frac{T_{i j}-t}{h}\right)^b, \\
R_{a b} & =\frac{1}{nr}\sum_{i=1}^n \sum_{1 \leq j \neq l \leq r} K_h\left(T_{i j}-s\right) K_h\left(T_{i l}-t\right)\left(\frac{T_{i j}-s}{h}\right)^a\left(\frac{T_{i j}-t}{h}\right)^b \Gamma_{i j l} .
\end{aligned}
$$
where recall that:
$$
\Gamma_{ij\ell}
 = \J_{\hat M_{ij}}\log_{\hat M_{ij}}F_{ij} \otimes \J_{\hat M_{i\ell}}\log_{\hat M_{i\ell}}F_{i\ell}.
$$
Let:
$$
\delta_{i j l}=  \left(\J_{M_{i j}}\log _{M_{i j}} F_{i j} \right) \otimes \left( \J_{M_{i l}} \log _{M_{i l}} F_{i l}\right)$$
and also define:
$$
R_{a b}^{\prime}=\frac{1}{nr}\sum_{i=1}^n \sum_{1 \leq j \neq l \neq m_i} K_h\left(T_{i j}-s\right) K_h\left(T_{i l}-t\right)\left(\frac{T_{i j}-s}{h}\right)^a\left(\frac{T_{i j}-t}{h}\right)^b \delta_{i j l} .
$$
Then:
\begin{equation}\label{eq:splitforsparse}
\begin{aligned}
& R_{00}^{\prime} =R_{00} + \frac{1}{nr}\sum_{i=1}^n \sum_{j \neq l }K_h\left(T_{i j}-s\right) K_h\left(T_{i l}-t\right)\left(\frac{T_{i j}-s}{h}\right)^a\left(\frac{T_{i j}-t}{h}\right)^b \Big(\\&
\left(\J_{ M_{ij}} \log _{ M_{ij}} F_{i j}-\J_{\hat{M}_{i j}}\log _{\hat M_{i j}} F_{i j}\right)\otimes\left(\J_{\hat M_{i\ell}} \log _{\hat M_{i\ell}} F_{i \ell}\right) + \\
& \left(\J_{\hat M_{ij}} \log _{\hat M_{ij}} F_{i j}\right)\otimes \left( \J_{\hat M_{i\ell}} \log _{\hat M_{i\ell}} F_{i \ell}-\J_{\hat M_{i\ell}}\log _{M_{i \ell}} F_{i \ell}\right) + \\
& \left( \J_{\hat M_{ij}}\log _{\hat M_{ij}} F_{i j}-\J_{M_{i j}}\log _{M_{i j}} F_{i j}\right)\otimes\left( \J_{\hat M_{i \ell}}\log _{\hat M_{i\ell}} F_{i \ell}-\J_{M_{i \ell}}\log _{M_{i \ell}} F_{i \ell}\right) \Big)
\end{aligned}
\end{equation}

If $\H$ is finite dimensional, the embedded logarithm is continuously differentiable by the result in Lemma \ref{lemma : smoothness of emblog}, and we may thus bound 
$$
\left\|R_{00}-R_{00}^{\prime}\right\|^2 \leq O(1) \sup _{t \in \mathcal{T}} \Pi^2(\hat{\M}(t), \M(t))
$$
and similarly for $R_{a b}$ for $a, b=0,1,2$. Setting
$$
\tilde{\Gamma}(s, t)=\frac{\left(S_{20} S_{02}-S_{11}^2\right) R_{00}^{\prime}-\left(S_{10} S_{02}-S_{01} S_{11}\right) R_{10}^{\prime}+\left(S_{10} S_{11}-S_{01} S_{20}\right) R_{01}^{\prime}}{\left(S_{20} S_{02}-S_{11}^2\right) S_{00}-\left(S_{10} S_{02}-S_{01} S_{11}\right) S_{10}+\left(S_{10} S_{11}-S_{01} S_{20}\right) S_{01}},
$$
by the uniform convergence of $\hat{\M}$ to ${\M}$, which holds in the finite dimensional case, we may see that 
$$
\sup_{s,t\in [0,1]}\|\hat{\Gamma}(s, t)-\tilde{\Gamma}(s, t)\|^2 = 
    O_P\left(h_{\M}^4 + {h_{\M}n}\right) 
    \quad \text{if }\dim(\H)<\infty.
$$
 Then, by the same argument as in the main result in \cite{mohammadi2024functional}:
$$
\sup _{s, t \in \mathcal{T}} \| \tilde{\Gamma}(s, t)- \Gamma(s, t) \|^2=O_p\left[h_{\Gamma}^{-4} \frac{\log n}{n}\left(h_{\Gamma}^4+\frac{h_{\Gamma}^3}{r}+\frac{h_{\Gamma}^2}{r^2}\right)\right]^{1 / 2}+O_p\left(h_{\Gamma}^2\right)
$$
which yields the finite-dimensional rate of convergence.

\medskip

In the general case $\dim(\H)=\infty$, note that the convergence of $\hat{\M}$ to ${\M}$ is only \textit{pointwise}, almost surely, so we need to proceed differently. By dominated convergence we may see that:
$$
\E \| \J_{\hat M_{1\ell}} \log _{\hat M_{1\ell}} F_{i \ell}-\J_{\hat M_{1\ell}}\log _{M_{1 \ell}} F_{1 \ell}\| \rightarrow 0, \qquad \forall\: \ell = 1,\dots, r
$$
and thus every addend in \eqref{eq:splitforsparse} may be bounded by:
\begin{align*}
O(1)  \max_{\ell=1,\dots,r}\E \|  \J_{\hat M_{1\ell}} \log _{\hat M_{1\ell}} F_{1 \ell}-\J_{\hat M_{1\ell}}\log _{M_{1 \ell}} F_{1 \ell} \| 
\end{align*}
 Since convergence in expectation implies convergence in probability, we obtain that $\hat\Gamma(s, t)$ is consistent for $\Gamma(s, t)$.

\end{proof}

\phantomsection
\addcontentsline{toc}{section}{References}
\bibliography{bib}

\end{document}